\documentclass[acmsmall,authorversion]{acmart}

\usepackage[utf8]{inputenc}
\usepackage{amsmath, amsthm}
\usepackage{amsfonts}
\usepackage{amssymb}
\usepackage[only,coloneqq]{mathtools}
\usepackage{float}
\usepackage{mathrsfs}
\usepackage{tikz}

\usepackage{microtype}

\usepackage[shortlabels,inline]{enumitem}
\setlist[description]{labelsep=1em,font=\normalfont\bfseries}


\usepackage{todonotes}



\citestyle{acmnumeric}

\usepackage[]{algpseudocode}

\captionsetup{format=hang,justification=raggedright,font=small,width=.95\textwidth,labelfont=it,labelsep=period}
\floatstyle{ruled}
\newfloat{algo}{tp}{lop}
\floatname{algo}{Algorithm}
\setcounter{topnumber}{1}     
\setcounter{bottomnumber}{1}  
\def\eproof{{\mbox{}\hfill\qed}\medskip}

\usepackage{color}
\definecolor{red}{rgb}{.7,0,0}
\definecolor{blue}{rgb}{0,0,1}

\usepackage[labelfont=bf,labelsep=space]{caption}

\newtheorem{theorem}{Theorem}[section]
\newtheorem{proposition}[theorem]{Proposition}
\newtheorem{lemma}[theorem]{Lemma}
\newtheorem{corollary}[theorem]{Corollary}


\theoremstyle{definition}
\newtheorem{definition}[theorem]{Definition}

\newtheorem{remark}[theorem]{Remark}



\newcommand{\e}{\varepsilon}

\newcommand{\bfa}{{\boldsymbol a}}

\newcommand{\SigmaC}{\Sigma^\C}
\newcommand{\Hc}{\mathcal{H}}

\def\mun{\mu_{\mathrm{norm}}}
\def\kappan{\kappa_{\mathrm{norm}}}

\def\mup{\mu_{\mathrm{proj}}}

\def\kappasa{\kappa_{*}}
\def\kappaaff{\kappasa^\mathrm{aff}}

\def\Approx{\mathsf{Approx}}

\def\hm{^{\mathsf{h}}}
\def\Sigmaaff{\Sigma_*^\mathrm{aff}}
\def\Diff{\mathrm D}


\renewcommand{\tilde}{\widetilde}


\def\N{\mathbb{N}}
\def\Z{\mathbb{Z}}
\def\R{\mathbb{R}}

\def\C{\mathbb{C}}

\def\IS{\mathbb{S}}


\newcommand{\algoritmo}{\begin{minipage}{0.87\hsize}\linea}
\newcommand{\falgoritmo}{\linea\end{minipage}\bigskip}
\newcommand{\codigo}{\begin{minipage}{0.87\hsize}}
\newcommand{\fcodigo}{\end{minipage}\bigskip}
\newcommand{\linea}{\vspace*{-5pt}\hrule\vspace*{5pt}}


\def\mop{\operatorname}

\def\rank{\mop{rank}}

\def\Id{\mop{id}}

\def\cost{\mop{cost}}

\def\Prob{\mathop{\mop{Prob}}}

\def\bd{{\boldsymbol d}}
\def\Oh{{\mathcal O}}

\def\appleq{\hbox{\lower3.5pt\hbox{$\;\:\stackrel{\textstyle<}{\sim}\;\:$}}}

\def\ud{\mathrm{d}}


\def\eqdef{\coloneqq}
\def\epsilon{\varepsilon}


\sloppy


\newcommand{\bfd}{{\boldsymbol{d}}}

\def\mcO{{\mathcal O}}
\def\mcH{{\mathcal H}}

\def\mcN{{\mathcal N}}
\def\mcG{{\mathcal G}}
\def\mcU{{\mathcal U}}
\def\mcM{{\mathcal M}}

\def\mcX{{\mathcal X}}

\def\mcT{{\mathcal T}}

\def\mcP{{\mathcal P}}

\def\scO{{\mathscr O}}
\def\scU{{\mathscr U}}

\def\Hd{\mcH_{{\boldsymbol{d}}}}
\def\HdC{\mcH^{\C}_{{\boldsymbol{d}}}}
\def\Pd{\mcP_{\!{\boldsymbol{d}}}}


\def\st{\ \middle|\ }
\def\geq{\geqslant}
\def\leq{\leqslant}
\def\le{\leqslant}
\def\ge{\geqslant}

\def\P{\mathbb P}



\title[Computing the Homology of Basic Semialgebraic Sets]{Computing the Homology of Basic Semialgebraic Sets\\
in Weak Exponential Time}

\author{Peter B\"urgisser}
\affiliation{%
  \institution{Technische Universit\"at Berlin}
  \department{Institut f\"ur Mathematik}
  \country{Germany}}
\email{pbuerg@math.tu-berlin.de}

\author{Felipe Cucker}
\affiliation{%
  \institution{City University of Hong Kong}
  \department{Department of Mathematics}
  \state{Hong Kong}}
\email{macucker@cityu.edu.hk}

\author{Pierre Lairez}
\affiliation{%
  \institution{Inria}
  \country{France}}
\email{pierre.lairez@inria.fr}

 \begin{CCSXML}
<ccs2012>
<concept>
<concept_id>10003752.10010061.10010063</concept_id>
<concept_desc>Theory of computation~Computational geometry</concept_desc>
<concept_significance>500</concept_significance>
</concept>
<concept>
<concept_id>10010147.10010148.10010149.10010154</concept_id>
<concept_desc>Computing methodologies~Hybrid symbolic-numeric methods</concept_desc>
<concept_significance>500</concept_significance>
</concept>
</ccs2012>
\end{CCSXML}

\ccsdesc[500]{Theory of computation~Computational geometry}
\ccsdesc[500]{Computing methodologies~Hybrid symbolic-numeric methods}

\keywords{Semialgebraic geometry, homology, algorithm}

\acmJournal{JACM}
\acmVolume{66}
\acmNumber{1}
\acmArticle{5}
\acmYear{2018}
\acmDOI{10.1145/3275242}
\setcopyright{none}

\begin{document}

\begin{abstract} We describe and analyze an 
algorithm for computing the homology (Betti numbers and torsion 
coefficients) of basic semialgebraic sets which works in weak 
exponential time. That is, out of a set of exponentially small 
measure in the space of data, the cost of the algorithm 
is exponential in the size of the data. All algorithms previously 
proposed for this problem have a complexity which is doubly 
exponential (and this is so for almost all data). 
\end{abstract}

\maketitle

\section{Introduction}

Semialgebraic sets (that is, subsets of Euclidean spaces defined by
polynomial equations and inequalities with real coefficients) 
come in a wide variety of
shapes and this raises the problem of describing a given specimen,
from the most primitive features, such as emptiness, dimension, or
number of connected components, to finer ones, such as roadmaps,
Euler-Poincar\'e characteristic, Betti numbers, or torsion
coefficients. 

The {\em Cylindrical Algebraic Decomposition} (CAD) introduced by
Collins~\cite{Collins} and W\"uthrich~\cite{Wut76} in the 1970's 
provided algorithms to compute these features that worked 
within time $(sD)^{2^{\Oh(n)}}$ where~$s$ is the number of 
defining equations, $D$ a bound on their degree
and~$n$ the dimension of the ambient space. Subsequently, 
a substantial effort was devoted to design algorithms for these 
problems with \emph{single exponential} algebraic complexity 
bounds~\cite[][and references therein]{Basu:08b}, that is, 
bounds of the form~$(sD)^{n^{\Oh(1)}}$. 
Such algorithms have been found for deciding
emptiness~\cite{GrigorevVorobjov_1988,Ren92a,BaPoRo96}, 
for counting connected
components~\cite{BaPoRo99,canny:93,canny-grig-voro:92,GriVo92,heintz-roy-solerno-94},
computing the dimension~\cite{Koi98,BasuPollackRoy_2006a}, the Euler-Poincar\'e
characteristic~\cite{Basu_1996}, the first few Betti numbers~\cite{Basu_2006},
the top few Betti numbers~\cite{Basu:08} and roadmaps (embedded curves with certain
topological properties)~\cite{Canny_1991,BasuRoySafeyElDinEtAl_2014,SafeyElDinSchost_2017}.

As of today, however, no single 
exponential algorithm
is known for the computation of the whole sequence of the 
homology groups (Betti numbers and torsion coefficients). 
For complex smooth projective varieties,
Scheiblechner~\cite{Scheib:12} has been able to provide  
an algorithm computing the Betti numbers (but not the torsion 
coefficients) in single exponential time relying on the algebraic 
De~Rham cohomology. The same author provided a lower
bound for this problem (assuming integer coefficients) 
in~\cite{Scheib:07}, where the problem is shown to 
be \textsc{PSpace}-hard.

Another line of research, that has developed independently of the
results just mentioned, focuses on the complexity and the 
geometry of
numerical algorithms~\cite[and references therein]{Condition}. The
characteristic feature of these algorithms is the use of
approximations and successive refinements. For most problems, 
a set of
{\em numerically ill-posed} data can be identified, for which
arbitrarily small perturbations may produce qualitative errors in the
result of the computation. Iterative numerical algorithms may run
forever on ill-posed data, and may take increasingly long running 
time as data become close to ill-posed. The running time is 
therefore not
bounded by a function on the input size only and the usual 
worst-case analysis is irrelevant. An alternate form of analysis, 
championed by Smale~\cite{Smale97} and going back 
to~\cite{hest-stiefel:52}, bounds the running time of an algorithm 
in terms of the size of the input and a {\em condition number}, usually 
related to, or bounded by, the inverse to the distance of the data 
at hand to the set of ill-posed data.

Then, the most common way to gauge the complexity 
of a numerical algorithm is to endow the space of data with 
a probability measure, usually the standard Gaussian, and
to analyze the algorithm's cost in probabilistic terms.
More often than not, this analysis results in a bound on the 
expectation of the cost, that is, in an {\em average-case analysis}. 
But recently, Amelunxen and
Lotz~\cite{wacco} introduced a new way of measuring complexity. 
They noticed that a number of algorithms that are known to be 
efficient in practice have
nonetheless a large, or even infinite, average-case complexity. 
One of the reasons they identified for this discrepancy 
is the exponentially fast vanishing measure of an exceptional
set of inputs on which the algorithm runs in superpolynomial 
time when the dimension grows. A prototype of this phenomenon 
is the behavior of the power method to compute a dominant 
eigenpair of a symmetric matrix. This algorithm is
considered efficient in practice, yet it has been shown that 
the expectation of the number of iterations performed by the 
power method, for matrices drawn from
the orthogonal ensemble, is infinite~\cite{Kostlan88}. 
Amelunxen and Lotz show that, conditioned to exclude a set of
exponentially small measure, this expectation is $\Oh(n^2)$
for~$n\times n$ matrices. The moral of the story is that the power
method is efficient in practice because it is so in theory if we
disregard a vanishingly small set of outliers. This conditional
expectation, in the terminology of~\cite{wacco}, shows a {\em weak
average polynomial cost} for the power method. More generally, we will
talk about a complexity bound being {\em weak} when this bound holds
out of a set of exponentially small measure.

Several problems related to semialgebraic sets have been studied from
the numerical point of view we just described, such as deciding
emptiness~\cite{CS98}, counting real solutions of zero-dimensional
systems~\cite{CKMW1}, or computing the homology groups of real
projective sets~\cite{CuckerKrickShub_2018}. Our main result follows
this stream.

\paragraph{Main result.}

A \emph{basic semialgebraic set} is a subset of a Euclidean space $\R^n$ 
given by a system of equalities and inequalities of the 
form 
\begin{equation}
  \label{eq:sa}
   f_1(x)= \dotsb = f_q(x) = 0 \text{ and } 
  g_1(x) \succ 0, \dotsc, g_s(x)\succ 0
\end{equation}
where~$F = (f_1,\dotsc,f_q)$ and~$G = (g_1,\dotsc,g_s)$ are 
tuples of polynomials with real coefficients and the expression 
$g(x)\succ 0$ stands for either $g(x)\geq0$ or $g(x)>0$ 
(we use this notation to emphasize the fact, that will 
become clear in~\S\ref{se:relax}, 
that our main result does not 
depend on whether the inequalities in~\eqref{eq:sa} are strict).  
Let $W(F,G)$ denote the solution set of the semialgebraic 
system~\eqref{eq:sa}. 

For a vector~$\bd=(d_1,\dotsc,d_{q+s})$ of~$q+s$ positive integers, 
we denote by~$\Pd$ (or~$\Pd[q\,;\,s]$ to emphasize the number 
of components)
the linear space of the~$(q+s)$-tuples of real polynomials in~$n$
variables of degree at most~$d_1,\dotsc,d_{q+s}$, respectively. 
Let~$D$ denote
the maximum of the~$d_i$. We will assume that $D\ge2$ because 
a set defined by degree~1 polynomials is convex and its homology is
trivial. Let~$N$ denote the dimension of~$\Pd$, that is, 
\begin{equation}\label{eq:size}
   N=\sum_{i=1}^{q+s} \binom{n+d_i}{n}. 
\end{equation}
This is the \emph{size} of the semialgebraic system~\eqref{eq:sa}, 
as it is 
the number of real coefficients necessary to determine it. We endow
$\Pd$ with the Weyl inner product and its induced norm, 
see~\S\ref{sec:cond-semi-sets}. We further endow $\Pd$ with the 
standard Gaussian measure given by the density
${(2\pi)^{-\frac{N}{2}}}\exp\big({-\frac{\|(F,G)\|^2}{2}}\big)$ (we note, 
however, that we could equivalently work with the uniform 
distribution on the unit sphere in~$\Pd$). Finally, 
we distinguish a subset $\Sigmaaff$ of $\Pd[q\,;\,s]$ of 
ill-posed data (see~\S\ref{sec:cond-numb-affine} 
for a precise definition). 
Pairs $(F,G)$ on this set are those for which the Zariski 
closure in $\R^n$ of one of the algebraic sets defining the 
boundary of $W(F,G)$ is not smooth. We will see that 
$\Sigmaaff$ is a hypersurface in $\Pd[q\,;\,s]$ and hence has 
measure zero.

The complexity model we consider is the usual Blum--Shub--Smale model
 \cite{BlumShubSmale_1989}
extended (as often) with the ability to compute square roots.
What we call ``numerical algorithm'' is a machine in this model.

\begin{theorem}\label{thm:Main} 
There is a numerical algorithm~{\scshape Homology} that, given a system $(F, G)\in\Pd$ 
with~$q \leq n$ equalities and~$s$ inequalities, computes the 
homology groups of\/~$W(F, G)$.
Moreover, the number of arithmetic operations in~$\R$ performed 
by \textsc{Homology} on
input~$(F,G)$, denoted $\mop{cost}(F,G)$, satisfies
\begin{enumerate}[(i)]
\item \label{item:9} 
$\mop{cost}(F,G) = \big((s+ n) D \delta^{-1}\big)^{\Oh(n^2)}$ 
where $\delta$ is the distance of~$\frac{1}{\|(F,G)\|} (F,G)$ to 
$\Sigmaaff$.
\end{enumerate}
Furthermore, if~$(F, G)$ is drawn from the Gaussian measure 
on $\Pd$, then
\begin{enumerate}[(i),resume]
\item \label{item:7} $\cost(F, G)\leq \big((s+n)D\big)^{\Oh(n^3)}$ 
with probability at least $1-\big((s+n)D\big)^{-n}$
\item \label{item:8} $\cost(F, G)\leq 2^{\Oh(N^2)}$ with probability 
at least $1-2^{-N}$.
\end{enumerate}
The algorithm is numerically stable.
\end{theorem}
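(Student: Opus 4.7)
The plan is to reduce the topological question to a finite combinatorial one by computing the homology of a well-chosen simplicial complex built on a dense grid of points. The core idea, familiar from topological data analysis, is that if one samples a sufficiently fine $\varepsilon$-net in the ambient space and keeps only those points at which the signs of $F$ and $G$ are consistent with membership in a slightly relaxed version of~\eqref{eq:sa}, then the nerve of the associated $\varepsilon$-balls is homotopy equivalent to $W(F,G)$. The relaxation of strict into soft inequalities, already hinted at by the notation $\succ$ and made explicit in~\S\ref{se:relax}, is essential: it buys enough slack to tolerate round-off error while preserving homology.

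Three ingredients govern whether this scheme succeeds. First, a \emph{quantitative tubular neighborhood result} must show that, away from $\Sigmaaff$, every stratum of the boundary of $W(F,G)$ is a smooth manifold whose reach is bounded below by a positive power of $\delta/((s+n)D)$. Second, an $\alpha$-theory-style estimate must guarantee that at each grid point the sign pattern of $(F,G)$ is evaluated unambiguously once a working precision proportional to $1/\delta$ is used. Third, a Nerve Lemma of Niyogi--Smale--Weinberger type must convert the resulting point cloud into a simplicial complex whose homology matches that of $W(F,G)$. Combining these three, and counting the $\bigl(\mathrm{poly}((s+n)D)/\delta\bigr)^{O(n)}$ grid points together with a polynomial overhead for computing simplicial homology, yields both correctness and the deterministic cost bound~\eqref{item:9}.

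For the probabilistic bounds~\eqref{item:7} and~\eqref{item:8} the strategy is to integrate~\eqref{item:9} against the Gaussian measure on $\Pd$. This requires a tail estimate on the inverse distance to $\Sigmaaff$ of the form $\Prob\{\delta \leq \varepsilon\} \leq \mathrm{poly}(N,D)\,\varepsilon$, which I would establish by an integral-geometric argument on the discriminant hypersurface in the unit sphere of $\Pd$. Setting the truncation threshold $\varepsilon \sim ((s+n)D)^{-n}$ then yields~\eqref{item:7}, and $\varepsilon \sim 2^{-N}$ yields~\eqref{item:8}. The main obstacle I anticipate is the quantitative tubular neighborhood result: distance to $\Sigmaaff$ controls the conditioning of each individual algebraic piece, but bounding the reach of $W(F,G)$ at strata where several equalities and inequalities become active simultaneously requires gluing these local estimates along a stratification and tracking the constants uniformly. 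Numerical stability, by contrast, should follow routinely from a backward error analysis, since the algorithm reduces to polynomial evaluations and sign comparisons whose round-off error can be absorbed into the slack provided by the relaxation.
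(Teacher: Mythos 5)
Your plan matches the paper's own argument essentially point for point: a condition number $\kappasa$ (equivalently the inverse distance $\delta^{-1}$ to $\Sigmaaff$, via Proposition~\ref{prop:kappaaff-dist}) simultaneously lower-bounds the reach (Theorems~\ref{thm:tau-gamma-local} and~\ref{thm:tau-kappa}) and controls the relaxation sensitivity (Theorem~\ref{thm:approx}, proved with the continuous $\alpha$-theory of Theorem~\ref{thm:cont-alpha}), after which the extended Niyogi--Smale--Weinberger result (Theorem~\ref{thm:SNW}) together with the Nerve Lemma converts a filtered grid into the homology groups, and the probabilistic tail bound is precisely the B\"urgisser--Cucker quantile estimate applied to the degree bound of Corollary~\ref{cor:degree-sigmasa}. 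Two points where your sketch is slightly off: the grid must live on $\IS^{n}$ after the scaled homogenization $H$ of \S\ref{sec:cond-numb-affine} (an $\varepsilon$-net of the unbounded $\R^n$ would be infinite), and $\alpha$-theory is used not to ``evaluate the sign pattern unambiguously'' at grid points but rather to show that a grid point approximately satisfying the relaxed system is genuinely geodesically close to $S$, by following the continuous Newton flow to an actual zero; the sign test in the algorithm is just a thresholded evaluation and poses no ambiguity, while numerical precision is a separate (and here deferred) issue.
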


\emph{Nota bene.} The notation $\Oh$ will always be 
understood with respect to~$N$. For example, the 
bound $\mop{cost}(F,G) \leq \big((s+ n) D \delta^{-1}
\big)^{\Oh(n^3)}$ rewords as 
$\mop{cost}(F,G) \leq \big((s+ n) D \delta^{-1}
\big)^{C n^3}$ for some~$C > 0$ as soon as $N$ is large enough, 
even if some of the parameters $s$, $n$ or~$D$ are fixed.

Point~\ref{item:7} above does not imply, strictly speaking, weak
exponential time because for given $n$, $q$, $s$ and $\bd$, the
measure of the exceptional set is bounded by $\big((s+n)D\big)^{-n}$
and this may not be exponentially small in the input size $N$ (for
instance, when $n$ is fixed and $D$ and $s$ grow). But
Point~\ref{item:8} shows exactly what we can call \emph{weak 
exponential complexity}: out of an exponentially small set in the 
space of data the cost of the
latter is bounded by a single exponential function on the input size.

It is difficult to compare our algorithm with previous ones: because
of its numeric nature, it only deals with the generic case, at
positive distance from ill-posed problems, and its worst-case
complexity is unbounded. Nevertheless, it compares favorably 
with the doubly exponential worst-case bound obtained from 
the CAD. The latter is reached on generic inputs, whereas we 
show a single exponential
worst-case complexity outside a vanishingly small subset.
Another difference with previous works is the fact that 
our results are valid only for polynomials with real coefficients 
as our proofs use some analytical techniques. 
This is in contrast with, for instance, the work by 
Basu~\cite{Basu_2006,Basu:08} where the results are valid 
for semialgebraic sets defined over arbitrary real closed fields.

In this work, we approach the topology of a set by approximating it by a union
of Euclidean balls in the ambient space, as initiated in the field of
topological data analysis \cite[e.g.]{EdelsbrunnerMucke_1994}. Following ideas
in~\cite{NiyogiSmaleWeinberger_2008,ChazalCohen-SteinerLieutier_2009}, for the
coverings, we choose a union of balls of sufficiently small radius to which we
can apply the Nerve Theorem. In constrast, previous work by Basu et al.
\cite{BaPoRo:08,BasuPollackRoy_2005,Basu_2006}, with a more algebraic flavor,
approaches the topology from inside by computing covering by contractible
subsets of the original set.

Lastly, we note that all the ingredients in algorithm {\scshape
Homology} easily parallelize. Doing so, we obtain a parallel algorithm
working in \emph{weak parallel polynomial complexity}: out
of an exponentially small set in the space of data 
the parallel cost of the algorithm is bounded
by a polynomial function on the input size. The
\textsc{PSpace}-hardness result by 
Scheiblechner~\cite{Scheib:07} mentioned
above (together with the classical equivalence between 
space and parallel time~\cite{Borodin77}) suggests that 
further complexity improvements are limited as they are unlikely 
to be below parallel polynomial time.

\paragraph{Overview.}
This article follows some algorithmic ideas (grid methods, theory of
point estimates) introduced by Cucker and Smale~\cite{CS98} and
extended by
Cucker et~al.~\cite{CKMW1,CuckerKrickShub_2018}. 
In particular, an algorithm for computing the homology of a real
{\em algebraic} subset of $\IS^n$ (defined with only equalities, as 
opposed to \emph{semialgebraic} sets) has been studied
in~\cite{CuckerKrickShub_2018}. The spirit and the statement of our
main result is very close to this previous work but the methods are
substantially renewed.
There is a significant overlap with \cite{CuckerKrickShub_2018}
where we felt that the theory could be simplified (\S \ref{sec:proof-tau-gamma}
and \S \ref{sec:number-kappa}), but the specificity of the semialgebraic case
called for the application of different tools, such as the reach (\S
\ref{sec:appr-sets-with}), continuous Newton method (\S
\ref{sec:cont-alpha-theory}), or the relaxation of semialgebraic inequalities (\S \ref{sec:exclusion-inclusion}).

Besides, the numerical stability of the
algorithm in Theorem~\ref{thm:Main} will not be discussed here. The
precise meaning of this stability and its proof are a straightforward
variation of the arguments detailed
in~\cite[\S7]{CuckerKrickShub_2018} which in turn are based on those
in~\cite{CS98,CKMW1}.

Our method relies on several quantities reflecting corresponding aspects of the
conditioning of a semialgebraic system. The first one is the \emph{reach}. This
is a measure of curvature for sets without the structure of a manifold.
The second one measures how much the solution set of a 
semialgebraic system is affected by small relaxations of the 
equalities and inequalities of the system. 
The third one is the condition number $\kappasa$
which reflects the distance of a semialgebraic system to the 
closest ill-posed
system. The facts that $\kappasa$ bounds the other two measures 
and that we can
compute it efficiently are cornerstones of our algorithm. In a number of
respects, this condition number is a natural extension of the first 
instances of this notion, for systems of linear equations, introduced by
Turing~\cite{Turing48} and von~Neumann and 
Goldstine~\cite{vNGo47}. 

Sections~2, 3 and~4 study all these notions. They decrease in the 
generality of the context (closed sets, analytic sets, and semialgebraic 
sets, respectively) but increase on the computational use of the results.

In a few words, to compute the homology group of an arbitrary 
basic semialgebraic set~$W$, we first reduce to the case of a 
closed semialgebraic subset~$S$ of a
sphere~$\IS^n$. Then we gather a finite set~$\mcX$ of points
in~$\IS^n$ that is sufficiently dense and retain only the points that
are close enough to~$S$. A point is close enough to~$S$ if it
satisfies the defining equations and inequalities of~$S$ up to
some~$\epsilon$.  Extending a theorem of
Niyogi, Smale and Weinberger~\cite{NiSmWe08}, we argue that this finite set
of close enough points is sufficient to compute the homology
of~$S$. The condition number~$\kappasa$ acts as a master
parameter: it controls the meaning of ``sufficiently dense'' and
``close enough'' and, beyond that, the total complexity of the
algorithm and the required precision to run it. 

Besides the main result, this work features several notable
contributions.  First, an extension to sets with positive reach of the
Niyogi-Smale-Weinberger theorem about the computation of the 
homotopy type of a set via an approximation with a finite set
(Theorem~\ref{thm:SNW}). Second, a continuous analogue 
of Shub and
Smale's $\alpha$-Theorem in which Newton's iteration is replaced with
Newton's flow (Theorem~\ref{thm:cont-alpha}). Third, an inequality
relating the reach and the $\gamma$-number at a point of a real
analytic set (Theorem~\ref{thm:tau-gamma-local}). This
strenghtens and simplifies a result of
Cucker, Krick and Shub~\cite{CuckerKrickShub_2018}. 
Four, a theory of the conditioning of a semialgebraic system relating the
distance to the closest ill-posed problem to the sensitivity of the
solution set to small relaxations of the equalities and inequalities 
of the system (Theorem~\ref{thm:approx}). 
This is reminiscent of the Eckhart--Young theorem
for linear systems.

\section{Approximation of sets with positive reach}
\label{sec:appr-sets-with}

The \emph{reach} of a closed subset of a Euclidean space~$E$ is a
notion introduced by Federer~\cite{Federer_1959} to quantify the
curvature of objects without the structure of a manifold. We establish
a few useful properties of the reach and we use this notion to extend
a theorem of
Niyogi, Smale and Weinberger~\cite{NiSmWe08} that gives a criterion 
to compute the topology of a compact subset of an Euclidean 
space by means of a finite covering of balls with the same radius 
(Theorem~\ref{thm:SNW}). It will play a 
fundamental role in our arguments.

\subsection{Measures of curvature}		
\label{sec:cond-numb-clos}					   
For a nonempty subset $W\subseteq E$ and $x\in E$, 
we denote by $d_W(x) :=\inf_{p\in W} \|x-p\|$
the distance of~$x$ to~$W$. 
We note that the function $d_W \colon E\to\R$ is
$1$-Lipschitz continuous, that is, $|d_W(x) -d_W(y)| \le \|x-y\|$ 
for all $x,y\in E$.

\begin{definition}\label{def:reach}
Let $W\subseteq E$ be a nonempty closed subset. 
The {\em medial axis} of $W$ is defined as the closure of the set 
\begin{equation*}
    \Delta_W := \left\{ x\in E \st \exists p,q\in W, p\neq q
    \text{ and }\|x-p\|=\|x-q\|=d_W(x) \right\}.
\end{equation*}
The \emph{reach (or local feature size) of~$W$ at a point~$p\in W$} 
is defined as $\tau(W,p) \eqdef d_{\Delta_W}(p)$.
The \emph{(global) reach of~$W$} is defined as 
$\tau(W) \eqdef \inf_{p\in W} \tau(W,p)$.
We also set $\tau(\varnothing) := +\infty$.
\end{definition}

Note that $\tau(W)$ is also given by~$\inf_{x\in \Delta_W} d_W(x)$.
We can also characterize $\tau(W)$ as the supremum of all~$\epsilon$
such that for every~$x\in E$ with~$d_W(x) <\epsilon$, there exists a
unique point~$p\in W$ with~$\|x-p\| = d_W(x)$. We shall denote this
unique point by $\pi_W(x)$. This gives a map $\pi_W \colon T(W) \to
W$, where $T(W) := \big\{ x\in E \mid d_W(x) < \tau(W) \}$ denotes the
open neighborhood of $W$ with radius $\tau(W)$.

When $W$ is a smooth submanifold of~$E$, the reach of~$W$ can be
characterized in terms of the normal bundle of~$W$ as follows. Let
$N_\epsilon(W) := \{ (x,v) \in W\times E \mid v \perp T_x W , \|v\| <
\epsilon\}$ denote the open normal bundle of~$W$ with radius
$\epsilon$. The reach~$\tau(W)$ is the supremum of all~$\epsilon$ such
that the map $N_\epsilon(W) \to T(W),\, (x,v) \mapsto x + v$, is
injective~\cite{NiSmWe08}.

\begin{proposition}\label{pro:piW}
If~$\tau(W) > 0$, then $\pi_W \colon T(W) \to W$ is continuous
and the map
\begin{equation*}
    T(W)\times[0,1] \to T(W),\, (x,t) \longmapsto t\pi_W(x) + (1-t)x 
\end{equation*}
is a deformation retract of~$T(W)$ onto~$W$.
\end{proposition}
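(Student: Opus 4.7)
The plan is to first establish continuity of $\pi_W$ on the open tube $T(W)$, and then verify that the straight-line homotopy $H(x,t) := t\pi_W(x) + (1-t)x$ satisfies the four defining conditions of a strong deformation retract of $T(W)$ onto $W$.

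For the continuity of $\pi_W$ at a point $x \in T(W)$, I would fix a sequence $x_n \to x$ in $E$. Since $d_W$ is $1$-Lipschitz and $d_W(x) < \tau(W)$, eventually $x_n \in T(W)$, so $\pi_W(x_n)$ is well-defined, and the sequence $(\pi_W(x_n))_n$ is bounded via $\|\pi_W(x_n)\| \le \|x_n\| + d_W(x_n)$. By a standard compactness argument, it is enough to show that every convergent subsequence has limit $\pi_W(x)$. If $\pi_W(x_{n_k}) \to p$, then $p \in W$ because $W$ is closed, and passing to the limit in $\|x_{n_k} - \pi_W(x_{n_k})\| = d_W(x_{n_k})$, using continuity of $d_W$, yields $\|x - p\| = d_W(x)$. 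Since $x \in T(W)$, the characterization of $\tau(W)$ recalled right after Definition~\ref{def:reach} forces the nearest point in $W$ to be unique, so $p = \pi_W(x)$, which proves continuity.

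For the deformation retract statement, continuity of $H$ is immediate from continuity of $\pi_W$. The boundary conditions $H(x,0) = x$, $H(x,1) = \pi_W(x) \in W$, and the identity $H(p,t) = p$ for $p \in W$ (because $\pi_W(p) = p$) are direct. The remaining point is that $H$ actually takes values in $T(W)$, and this is where the specific choice of straight-line homotopy toward $\pi_W(x)$ pays off: since $\pi_W(x) \in W$,
\[
d_W\bigl(H(x,t)\bigr) \;\le\; \bigl\|H(x,t) - \pi_W(x)\bigr\| \;=\; (1-t)\,\|x - \pi_W(x)\| \;=\; (1-t)\,d_W(x) \;<\; \tau(W),
\]
so $H(x,t) \in T(W)$ for every $(x,t) \in T(W) \times [0,1]$.

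The main obstacle is the continuity argument for $\pi_W$, which has to leverage uniqueness of the nearest point on $T(W)$ to upgrade subsequential convergence (coming from boundedness of $(\pi_W(x_n))$) to full convergence of the sequence; the rest is a routine verification using the fact that the segment from $x$ to $\pi_W(x)$ automatically stays within distance $d_W(x) < \tau(W)$ of $W$.
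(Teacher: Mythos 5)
Your proof is correct and follows essentially the same route as the paper for the continuity of $\pi_W$: boundedness of $(\pi_W(x_n))$ plus uniqueness of the nearest point on $T(W)$ forces any limit point to equal $\pi_W(x)$, and hence full convergence. The only difference is that the paper dismisses the deformation retract claim as ``obvious,'' while you spell out the one point that actually requires a line of reasoning, namely that $H(x,t)$ stays in $T(W)$ via the estimate $d_W(H(x,t)) \le (1-t)\,d_W(x) < \tau(W)$; this is a worthwhile addition, not a divergence in method.
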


\begin{proof}
Concerning the continuity of~$\pi_W$,
let $(x_k)_{k\geq0}$ be a sequence in $T(W)$ converging 
to some $x\in T(W)$. We have 
$$  
\|\pi_W(x_k)-x\| \leq \|\pi_W(x_k) -x_k \| + \| x_k - x\| 
                 = d_W(x_k) + \| x_k - x\|  
                 \leq d_W(x) + 2 \| x_k - x\|,
$$
where we used the Lipschitz continuity of $d_W$ for the 
last inequality. 
Hence the sequence $\pi_W(x_k)$ is bounded. Let $y\in W$ be a 
limit point of $\pi_W(x_k)$. The above inequality implies that 
$\|y-x\| \leq d_W(x)$, hence
$y=\pi_W(x)$. Thus $\pi_W(x)$ is the only limit point of the 
sequence $\pi_W(x_k)$ and therefore, 
$\lim_{k\to\infty} \pi_W(x_k) = \pi_W(x)$.

The second claim is obvious.
\end{proof}

We will use the following well-known fact. 

\begin{lemma}\label{le:ortho} 
Assume there is an open neighborhood $U$ of $\pi_W(x)$, $x\in E$, 
such that $W\cap U$ is a smooth submanifold of $E$. 
Then $\pi_W(x)-x$ is normal to the tangent space of~$W$ 
at $\pi_W(x)$.\eproof
\end{lemma}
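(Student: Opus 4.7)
The plan is to use the fact that $p := \pi_W(x)$ realizes the minimum of the squared-distance function $q \mapsto \|x-q\|^2$ restricted to $W$, together with smoothness of $W$ near $p$, to deduce the orthogonality by a standard first-order optimality argument.

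First I would fix an arbitrary tangent vector $v \in T_p W$ and, using the hypothesis that $W\cap U$ is a smooth submanifold of $E$ around $p$, choose a smooth curve $\gamma : (-\epsilon,\epsilon) \to W\cap U$ with $\gamma(0) = p$ and $\gamma'(0) = v$. Such a curve exists for any $v \in T_p W$ by the standard characterization of the tangent space of a submanifold.

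Next I would consider the smooth real-valued function $\varphi(t) \eqdef \|x - \gamma(t)\|^2$. Since $p = \pi_W(x)$ realizes $d_W(x)$, and since for $t$ small enough $\gamma(t) \in W\cap U \subseteq W$, we have $\varphi(t) \ge d_W(x)^2 = \varphi(0)$. Thus $t=0$ is a local minimum of $\varphi$, whence $\varphi'(0) = 0$. A direct computation gives $\varphi'(0) = -2\langle x - p, v\rangle$, so $\langle x-p, v\rangle = 0$. Since $v \in T_p W$ was arbitrary, $x - p$ (and hence $\pi_W(x)-x = p - x$) is normal to $T_p W$.

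I do not expect any real obstacle here; the only subtlety is that the closest point condition is global (over all of $W$) whereas the smoothness hypothesis is only local, but this is handled by the simple observation that any sufficiently short curve through $p$ in $W\cap U$ is automatically a curve in $W$, so the local minimum property of $\varphi$ at $0$ is immediate. No appeal to the uniqueness of $\pi_W(x)$ or to positivity of $\tau(W)$ is needed beyond what is implicit in the notation $\pi_W(x)$.
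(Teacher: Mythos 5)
Your proof is correct. Note that the paper does not actually supply a proof of this lemma---it is stated as a ``well-known fact'' with the end-of-proof marker immediately following the statement---so there is no argument in the paper to compare against. Your first-order optimality argument (pick a curve in $W\cap U$ through $p=\pi_W(x)$ with prescribed velocity $v\in T_pW$, observe that the squared distance to $x$ has a local minimum at $t=0$, differentiate to get $\langle x-p, v\rangle = 0$) is the standard way to prove this fact, and your remark about why the local smoothness hypothesis suffices despite the closest-point condition being global is exactly the right point to flag.
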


The main result of this section is a lower bound on 
the reach of an intersection $W\cap V$ in terms of the reach 
of $W$ and the reach of the intersection of~$W$ with the 
boundary~$\partial V$ of $V$. 

\begin{theorem}\label{thm:tau-boundary}
For closed subsets $V, W$ of~$E$ we have 
$\tau(W \cap V) \geq \min ( \tau(W),\tau(W\cap \partial V))$. 
\end{theorem}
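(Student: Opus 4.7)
The plan is to exploit the characterization that $\tau(W\cap V) \ge \epsilon$ is equivalent to the statement that every $y\in E$ with $d_{W\cap V}(y) < \epsilon$ has a unique closest point in $W\cap V$. Set $\epsilon := \min(\tau(W),\tau(W\cap\partial V))$, assume $\epsilon > 0$ (else nothing to show), and fix $y$ with $r := d_{W\cap V}(y) < \epsilon$. Since $d_W(y)\le r < \tau(W)$, the projection $\pi_W(y)$ is well-defined. If $\pi_W(y) \in V$ then $\pi_W(y) \in W\cap V$, so $d_W(y) = r$ and any closest point $p\in W\cap V$ of $y$ satisfies $\|p-y\|=d_W(y)$, forcing $p=\pi_W(y)$ by uniqueness of the nearest point in $W$.

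In the remaining subcase $\pi_W(y)\notin V$ (so $d_W(y)<r$), I would show that every closest point $p\in W\cap V$ of $y$ lies on $\partial V$. Once this is proven, $d_{W\cap\partial V}(y)=r < \tau(W\cap\partial V)$ forces uniqueness already in $W\cap\partial V$ and hence in $W\cap V$. Suppose some closest point $p$ of $y$ were in $\mathrm{int}(V)$: then a neighborhood $U\subseteq V$ of $p$ yields $\|z-y\|\ge \|p-y\|$ for all $z\in W\cap U$, i.e., $p$ is a local minimum of $\|\cdot - y\|$ on $W$. Expanding the squares gives the more useful form $2\langle z-p, y-p\rangle \le \|z-p\|^2$ on $W\cap U$.

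The crux is then the sub-lemma: \emph{a local minimum $p$ of $\|\cdot - y\|$ on $W$ with $\|p-y\| < \tau(W)$ must coincide with $\pi_W(y)$}. I would prove it by deforming $y$ to $p$ along the segment $\gamma(t) := p + t(y-p)$ for $t\in[0,1]$, and showing that $T := \{t\in[0,1] : \pi_W(\gamma(t)) = p\}$ equals $[0,1]$. The deformation preserves the local-minimum property: the identity $\|z-\gamma(t)\|^2-\|p-\gamma(t)\|^2 = \|z-p\|^2 - 2t\langle z-p,y-p\rangle \ge (1-t)\|z-p\|^2 \ge 0$ shows that $p$ remains a local minimum of $\|\cdot -\gamma(t)\|$ on $W$ on the \emph{same} neighborhood $U$ for all $t\in[0,1]$. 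Since $\|\gamma(t)-p\|<\tau(W)$, the projection $\pi_W(\gamma(t))$ exists and is continuous in $t$ by Proposition~\ref{pro:piW}, so $T$ is closed and contains $0$; openness follows because if $t_0\in T$ then $\pi_W(\gamma(t))$ stays in $U$ for $t$ near $t_0$ by continuity, and the propagated inequality together with uniqueness of $\pi_W(\gamma(t))$ forces it to equal $p$. Connectedness of $[0,1]$ gives $T=[0,1]$, so $\pi_W(y)=p$, contradicting $p\in V$ and $\pi_W(y)\notin V$. The main obstacle is this sub-lemma---the delicate point being the uniform reuse of the local-minimum neighborhood along the deformation, which is what makes the open-closed argument go through.
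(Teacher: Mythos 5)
Your proof is correct, and it takes a genuinely different route from the paper's. The paper works directly with points $x \in \Delta_{W\cap V}$ equidistant from two closest points $p,q$, introduces the directional reach $\tau(W,p,u)$, and invokes Federer's Theorem~4.8(6) (via Lemma~\ref{lem:reach-conv}) to conclude $\tau(W,p) \le \tau(W,p,u) \le \|x-p\|$ in the case where $p \in \operatorname{int}(V)$; the case $p,q \in \partial V$ is handled immediately. You instead work with the unique-nearest-point characterization of reach, split on whether $\pi_W(y) \in V$, and in the harder case reduce everything to the sub-lemma that a local minimizer of $\|\cdot - y\|$ on $W$ within distance $\tau(W)$ of $y$ must be the global minimizer $\pi_W(y)$. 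Your proof of that sub-lemma is an elegant open-closed argument along the segment $\gamma(t)=p+t(y-p)$; the key observation that the local-minimum inequality $2\langle z-p,y-p\rangle \le \|z-p\|^2$ propagates to give $\|z-\gamma(t)\|^2 - \|p-\gamma(t)\|^2 \ge (1-t)\|z-p\|^2 \ge 0$ on the \emph{same} neighborhood $U$ for all $t\in[0,1]$ is exactly what makes the connectivity argument go through, and openness then follows from continuity of $\pi_W$ (Proposition~\ref{pro:piW}) combined with uniqueness of nearest points within radius $\tau(W)$. In both approaches the dichotomy ``boundary vs.\ interior'' is the structural core, but where the paper leans on Federer's nontrivial result about the closure of the medial axis, you get away with only the continuity of the metric projection, making your argument essentially self-contained at the cost of a slightly longer chain of reductions. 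One small point worth making explicit if you write this up fully: the case $W\cap V = \varnothing$ must be dispatched first (it is trivial, since $\tau(\varnothing)=+\infty$), so that nearest points in the closed set $W\cap V$ actually exist.
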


For the proof, we introduce an auxiliary notion.
Let $W \subseteq E$ be a closed subset and $p\in W$.
Moreover, consider $u\in E$ with $\|u\|=1$. 
It is easy to see that $\{ t\ge 0 \mid  d_W(p+ t u ) = t\}$ 
is an interval containing~$0$. 
We are interested in those directions~$u$, where 
this interval has positive length and define the 
\emph{reach $\tau(W,p,u)$ of $W$ at $p$ along direction~$u$} 
as the length of this interval, that is,   
\begin{equation*}
 \tau(W,p,u) \eqdef \sup\left\{ t \geq 0 \st d_W(p+ tu) = t \right\} .
\end{equation*}
We note that $\pi_W(p+tu) = p$ for any~$0 \le t<\tau(W,p,u)$.
For example, we have $\tau(\R_+^n,0,u) > 0$ iff $u$ is in the 
normal cone of $\R_+^n$ at $0$, that is, $u_i\le 0$ for all $i$. 
In this case, $\tau(\R_+^n,0,u) =\infty$.
The next lemma is a slight variation of a result by 
Federer~\cite{Federer_1959}. 

\begin{lemma}\label{lem:reach-conv}
Let~$W\subseteq E$ be a closed subset, $p\in W$, and 
$u\in E$ be a unit vector such that $\tau(W,p,u)$ is positive. 
Then we have $\tau(W,p) \le \tau(W,p,u)$.
\end{lemma}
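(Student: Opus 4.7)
The plan is to prove the lemma by showing that the critical point $q_\tau := p + \tau u$, where $\tau := \tau(W,p,u)$, lies in $\Delta_W$. Once this is established, the conclusion is immediate: $\tau(W,p) = d_{\Delta_W}(p) \le \|q_\tau - p\| = \tau$.

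By continuity of $d_W$, we have $d_W(q_\tau) = \tau$, so $p$ is a nearest point of $W$ to $q_\tau$. If some other point $p' \ne p$ in $W$ also realizes this distance, then $q_\tau$ has two distinct nearest points and $q_\tau \in \Delta_W$ directly. Otherwise, I assume $p$ is the unique nearest point of $q_\tau$ in $W$, and I must show that $q_\tau$ is a limit of points in $\Delta_W$. For $t$ slightly larger than $\tau$, $d_W(q_t) < t$ where $q_t := p + tu$, so any nearest point $y_t \in W$ of $q_t$ satisfies $y_t \ne p$. By compactness of the nearest-point set and uniqueness of the nearest point at $q_\tau$, any cluster point of $(y_t)$ as $t \to \tau^+$ must equal $p$, hence $y_t \to p$.

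The main construction is the point $m_t := p + s_t u$ with $s_t := \|y_t - p\|^2 / \big(2\langle u, y_t - p\rangle\big)$, i.e., the intersection of the perpendicular bisector of $[p, y_t]$ with the ray $\{p + su : s \ge 0\}$. Two inequalities pin down $s_t$: from $\|y_t - q_t\|^2 < t^2$ I get $s_t < t$, and from $\|y_t - q_\tau\|^2 \ge \tau^2$ (since the open ball $B(q_\tau, \tau)$ contains no point of $W$) I get $s_t \ge \tau$. Hence $s_t \in [\tau, t)$ and $m_t \to q_\tau$. By construction, both $p$ and $y_t$ lie at distance $s_t$ from $m_t$, and $d_W(m_t) \le s_t$. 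When $d_W(m_t) = s_t$ we conclude $m_t \in \Delta_W^0$ directly; passing to the limit gives $q_\tau \in \overline{\Delta_W^0} = \Delta_W$.

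The main obstacle is the degenerate subcase $d_W(m_t) < s_t$, when neither $p$ nor $y_t$ is the nearest point of $m_t$. Here I would extend the ray from $y_t$ through $q_t$ beyond $q_t$ and take $\lambda^*(t) \ge 1$ to be the supremum of $\lambda$ for which $y_t$ remains a nearest point of $y_t + \lambda(q_t - y_t)$; by a compactness argument, at $\lambda = \lambda^*(t)$ some other point of $W$ ties with $y_t$, giving a point $z^*(t) \in \Delta_W$. The $p$-constraint shows $\lambda^*(t) \le s_t/(2s_t - t)$, and since both $s_t \to \tau$ and $t \to \tau$, this bound forces $\lambda^*(t) \to 1$, so $z^*(t) \to q_\tau$. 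Combined with closedness of $\Delta_W$, this again puts $q_\tau \in \Delta_W$, completing the proof.
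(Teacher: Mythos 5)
You are attempting to reprove the result of Federer that the paper simply cites: namely \cite[Theorem~4.8(6)]{Federer_1959}, which states that $q_\tau := p + \tau(W,p,u)\,u$ lies in $\overline{\Delta_W}$ whenever $0<\tau(W,p,u)<\infty$. Much of your set-up is sound: $d_W(q_\tau)=\tau$ by continuity, $y_t\to p$ in the unique-nearest-point case, $\langle u,y_t-p\rangle>0$ so $s_t$ is well-defined, $s_t\in[\tau,t)$, and the $p$-constraint bound $\lambda^*(t)\le s_t/(2s_t-t)$ are all correct. But there is a genuine gap, and it sits exactly at the load-bearing step.

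First, the nondegenerate subcase $d_W(m_t)=s_t$ never actually occurs: $s_t=\tau$ would force $\|y_t-q_\tau\|=\tau$, making $y_t\ne p$ a nearest point of $q_\tau$ and contradicting the uniqueness assumption, so $s_t>\tau$ strictly, and then $d_W(m_t)<s_t$ since $\{s\ge 0: d_W(p+su)=s\}=[0,\tau]$. The entire weight therefore falls on the subcase you call the main obstacle, and there the key assertion — that at $\lambda=\lambda^*(t)$ ``some other point of $W$ ties with $y_t$'' — is not a consequence of compactness and is not true in general. For $\lambda>\lambda^*(t)$, the competing nearest points $w_\lambda\ne y_t$ of $z_\lambda$ satisfy $\|w_\lambda-z_{\lambda^*}\|\to d_W(z_{\lambda^*})$, so every cluster point of $(w_\lambda)$ is a nearest point of $z_{\lambda^*}$; if $y_t$ is the \emph{unique} nearest point of $z_{\lambda^*}$, all cluster points equal $y_t$ and no tie appears. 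This can genuinely happen: for the parabola $W=\{(x,x^2)\}$ with $y_t=(0,0)$ and the ray pointing straight up, the supremal point is $(0,\tfrac12)$, whose unique nearest point is $(0,0)$, although $(0,\tfrac12)$ does lie in $\overline{\Delta_W}$. What you actually need is $z^*(t)\in\overline{\Delta_W}$, but proving that is another instance of the lemma itself, now with base point $y_t$ and direction $(q_t-y_t)/\|q_t-y_t\|$ — the argument is circular. You should either cite \cite[Theorem~4.8(6)]{Federer_1959}, as the paper does, or reproduce the more delicate argument Federer uses to handle this case.
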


\begin{proof}
The assertion is trivial if $\tau(W,p,u) = \infty$. 
So assume that $0 <\tau(W,p,u) < \infty$.  
Federer, in~\cite[Theorem~4.8(6)]{Federer_1959} states that 
under this assumption, the point 
$x :=p + \tau(W,p,u)u$ lies in the closure of~$\Delta_W$.
Therefore 
$\tau(W,p) \leq \|x-p\| = \tau(W,p,u)$.
\end{proof}

\begin{proof}[Proof of Theorem~\ref{thm:tau-boundary}] 
Let~$x\in \Delta_{W\cap V}$ and $p$ and~$q$ be distinct points 
in~$W\cap V$ such that~$d_{W\cap V}(x) = \|x-p\| = \|x-q\|$. 
It is sufficient to prove that 
\begin{equation}\label{eq:tautau}
   \|x-p\| \geq\min( \tau(W), \tau(W\cap \partial V)) ,
\end{equation}
since the assertion then follows by taking 
the infimum of~$\|x-p\|$ over~$x\in \Delta_{W\cap V}$.

If both~$p$ and~$q$ lie in~$\partial V$, 
then $x\in \Delta_{W\cap \partial V}$ and 
$\|x-p\| = d_{W\cap V}(x) = d_{W\cap \partial V}(x) 
\geq \tau(W\cap \partial V)$,
which implies \eqref{eq:tautau}. 

So we may assume that one of~$p$ and~$q$, say~$p$, 
does not lie on~$\partial V$, that is, $p$ is an interior point of~$V$.
Consider the unit vector $u:=\frac{x-p}{\|x-p\|}$
(note that $x\ne p$). 
We first observe that $\tau(W\cap V, p, u) \leq \|x-p\|$, 
because of the presence of the point~$q$, see 
Figure~\ref{fig:proof-reach}. 
Moreover, $\tau(W\cap V, p, u) > 0$ since $d_W(x)=\|x-p\| >0$. 
From this we can we deduce that $\tau(W, p, u) > 0$.
Indeed, 
the sets~$W$ and~$W\cap V$ coincide on a neighborhood of~$p$,
hence the distance functions~$d_W$ and~$d_{W\cap V}$ 
coincide for points on the segment $[p,x]$ that are sufficiently 
close to $p$. 
Using Lemma~\ref{lem:reach-conv}, we then obtain 
$$
    \tau(W) \le \tau(W,p) \le \tau(W,p,u)\leq \|x-p\| ,
$$
which shows \eqref{eq:tautau} and completes the proof.
\end{proof}

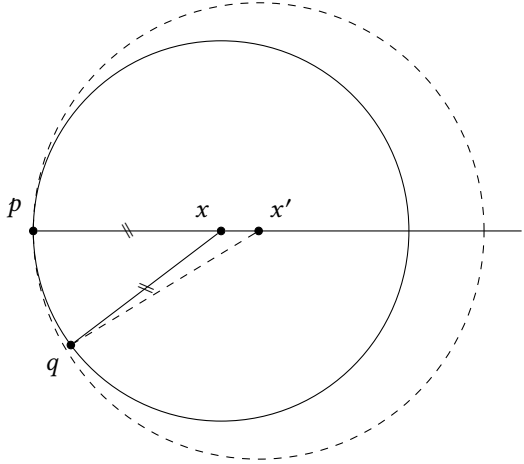
\begin{figure}[t]\centering
\begin{tikzpicture}[scale=.5,point/.style={draw,minimum size=0pt,
inner sep=1pt,circle,fill=black}]  
    \draw (0,0) node(p) [point,label=120:$p$] {};
    \draw (1,-3) node(q) [point,label=-120:$q$] {};
    \draw (5,0) node(x) [point,label=120:$x$] {};
    \draw (6,0) node(y) [point,label=60:$x'$] {};
    \draw (p) -- node[sloped,rotate=30] {\tiny $\parallel$} (x) -- (13,0);
    \draw (x) -- node[sloped,rotate=30] {\tiny $\parallel$}(q);
    \draw(x) circle [radius=5];
    \draw[dashed] (6,0) circle [radius=6];
    \draw[dashed] (y) -- (q);
\end{tikzpicture}
\caption{Illustration of the inequality~$\tau(W,p,u) \leq \|x-p\|$: 
A point~$x'$ beyond~$x$ on the half-line from~$p$ to~$x$ is 
closer to~$q$ than to~$p$.}
\label{fig:proof-reach}
\end{figure}

We can extend Theorem~\ref{thm:tau-boundary}  
to the intersections of several closed subsets. 

\begin{corollary}
\label{thm:tau-boundaries}
For closed subsets $V_1,\dotsc,V_s$ and~$W$ of~$E$ 
we have 
\[ 
   \tau(W\cap V_1 \cap \dotsb \cap V_s) \geq 
   \min_{I \subseteq \{1,\dotsc,s\}}
   \tau \Big( W \cap \bigcap_{i\in I} \partial V_i \Big) .
\]
\end{corollary}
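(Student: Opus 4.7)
The plan is to prove the Corollary by induction on $s$, using Theorem~\ref{thm:tau-boundary} as the inductive step. We adopt the convention that the empty intersection $\bigcap_{i\in\varnothing}\partial V_i$ equals~$E$, so that the term $I=\varnothing$ on the right-hand side reduces to $\tau(W)$; with this convention, the case $s=1$ is exactly the content of Theorem~\ref{thm:tau-boundary}.

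For the inductive step, suppose the statement holds for any collection of $s-1$ closed subsets. Set $W' \eqdef W \cap V_1 \cap \dotsb \cap V_{s-1}$ and apply Theorem~\ref{thm:tau-boundary} to $W'$ and $V_s$:
\[
   \tau(W'\cap V_s)\;\ge\;\min\bigl(\tau(W'),\,\tau(W'\cap\partial V_s)\bigr).
\]
The first term $\tau(W')$ is bounded below, by the induction hypothesis applied to $W$ and $V_1,\dotsc,V_{s-1}$, by
$\min_{I\subseteq\{1,\dotsc,s-1\}} \tau\bigl(W \cap \bigcap_{i\in I}\partial V_i\bigr)$,
which is the minimum over all subsets $I\subseteq\{1,\dotsc,s\}$ with $s\notin I$. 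For the second term, I would apply the induction hypothesis with $W$ replaced by the closed set $W\cap\partial V_s$ and the same subsets $V_1,\dotsc,V_{s-1}$, giving
\[
   \tau(W'\cap\partial V_s)\;\ge\;\min_{I\subseteq\{1,\dotsc,s-1\}}\tau\!\Big(W\cap\partial V_s\cap\bigcap_{i\in I}\partial V_i\Big),
\]
which is exactly the minimum over all subsets $I\subseteq\{1,\dotsc,s\}$ with $s\in I$. Taking the minimum of both bounds covers all subsets of $\{1,\dotsc,s\}$, completing the induction.

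There is no real obstacle beyond bookkeeping: the only subtle point is making sure the $I=\varnothing$ case on the right-hand side is interpreted as $\tau(W)$ so that the base case matches Theorem~\ref{thm:tau-boundary} verbatim, and that the induction hypothesis is applicable with the substitution $W\rightsquigarrow W\cap\partial V_s$ (which is closed, so this is legitimate). The splitting of subsets of $\{1,\dotsc,s\}$ into those containing $s$ and those not containing $s$ then matches the two terms arising from Theorem~\ref{thm:tau-boundary}, and the induction closes.
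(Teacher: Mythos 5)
Your proof is correct and matches the paper's argument essentially verbatim: induction on $s$, applying Theorem~\ref{thm:tau-boundary} once in the inductive step and the induction hypothesis twice (once with $W$ as is, once with $W$ replaced by $W\cap\partial V_s$), then splitting subsets $I$ according to whether they contain the last index.
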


\begin{proof}
The case $s=1$ is covered by Theorem~\ref{thm:tau-boundary}.
In general, we argue by induction on $s$, 
\begin{align*}
    \tau(W\cap V_1 &\cap \dotsb \cap V_{s+1}) \\
    &\geq \min\big(\tau(W \cap V_1 \cap \dotsb \cap V_s), 
    \tau( W \cap V_1 \cap \dotsb \cap V_s \cap \partial V_{s+1} )\big) \\
    &\geq \min \Big(  \min_{I \subseteq \{1,\dotsc,s\}}
    \tau \big( W \cap \bigcap_{i\in I} \partial V_i \big), 
    \min_{I \subseteq \{1,\dotsc,s\}}
      \tau \big( W \cap \partial V_{s+1} \cap 
    \bigcap_{i\in I} \partial V_i \big) \Big)\\
    &= \min_{I \subseteq \{1,\dotsc,s+1\}}
    \tau \Big( W \cap \bigcap_{i\in I} \partial V_i \Big) ,
\end{align*}
where we have applied Theorem~\ref{thm:tau-boundary}  
and twice the induction hypothesis.   
\end{proof}

We conclude with a relation between the reach of a subset of
the unit sphere $\IS(E) := \{x\in E \mid \|x\| = 1 \}$ 
and the reach of the cone over it.

\begin{lemma}\label{lem:tau-spherical}
Let $V\subseteq\IS(E)$ be closed and~$\widehat V=\R\cdot V$ be the closed cone in $E$ spanned
by~$V$. For any~$p\in V$, we have  
$\tau(V, p) \geq \min\{\tau(\widehat V, p), 1\}$.
\end{lemma}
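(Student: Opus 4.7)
The plan is to compare the medial axis of $V$ to that of $\widehat V$ using the following observation: for $v\in V$ and $x\in E$, the closest point of the ray $\R_{\ge 0} v$ to $x$ is $\max(0,\langle x,v\rangle)\,v$, at squared distance $\|x\|^2-\max(0,\langle x,v\rangle)^2$. Thus minimizing $\|x-v\|^2 = \|x\|^2 - 2\langle x,v\rangle + 1$ over $v\in V$ and minimizing the ray-distance from $x$ to $\widehat V$ over the rays $\R_{\ge 0}v$ are both governed by maximizing $\langle x,v\rangle$ over $v\in V$. Consequently, if $p_1,p_2\in V$ are two distinct nearest points of $x$, they also yield two candidates $\alpha p_1,\alpha p_2\in\widehat V$ for the cone-projection, where $\alpha \eqdef \langle x,p_1\rangle = \langle x,p_2\rangle$ (these inner products coincide because $\|x-p_1\|=\|x-p_2\|$ and $\|p_1\|=\|p_2\|=1$).

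I would therefore fix $x$ in the generating set of $\Delta_V$ (that is, with two distinct nearest points on $V$) and split on the sign of $\alpha$. If $\alpha>0$, then $\alpha p_1\ne\alpha p_2$ and, since $\max(0,\langle x,v\rangle)\le\alpha$ for every $v\in V$, both points realize the minimum distance $\sqrt{\|x\|^2-\alpha^2}$ from $x$ to $\widehat V$; hence $x\in\Delta_{\widehat V}$ and $\|x-p\|\ge\tau(\widehat V,p)$. If $\alpha\le 0$, then $\|x-p_i\|^2 = \|x\|^2-2\alpha+1\ge 1$, so $d_V(x)\ge 1$, and since $p\in V$ one gets $\|x-p\|\ge d_V(x)\ge 1$. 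In either case $\|x-p\|\ge \min\{\tau(\widehat V,p),1\}$.

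The inequality then passes to the closure $\Delta_V$ by continuity of $y\mapsto\|y-p\|$, and taking the infimum yields $\tau(V,p)=d_{\Delta_V}(p) \ge \min\{\tau(\widehat V,p),1\}$, which is the claim. The only subtle point is the transition at $\alpha=0$: the candidate cone-medial-axis points $\alpha p_i$ then collapse to the origin and the first branch of the argument fails, but this is exactly the regime in which the lower bound $d_V(x)\ge 1$ takes over, so the threshold $1$ in the statement emerges naturally as the boundary between the two branches.
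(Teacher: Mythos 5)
Your proof is correct and takes a genuinely different route from the paper's. The paper works through the projection maps: it shows that $\pi_V$ is well defined on $B(p,r)$ for every $r < \min\{\tau(\widehat V, p), 1\}$, using the relation $\pi_V(x) = \pi_{\widehat V}(x)/\|\pi_{\widehat V}(x)\|$ once $\pi_{\widehat V}(x)$ is known to be defined and nonzero there. You argue instead directly on the medial axes: given $x$ with two distinct nearest points $p_1, p_2$ on $V$, you lift them to $\alpha p_1, \alpha p_2$ with $\alpha = \langle x, p_1\rangle$; when $\alpha > 0$ these are two distinct nearest points of $x$ on $\widehat V$, so $x\in\Delta_{\widehat V}$ and $\|x-p\|\ge\tau(\widehat V,p)$; when $\alpha\le 0$ you have $d_V(x)\ge 1$; and passing to the closure $\Delta_V$ gives $\tau(V,p) = d_{\Delta_V}(p)\ge\min\{\tau(\widehat V,p),1\}$. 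This is a clean, self-contained computation. The one point worth flagging is that you silently read $\widehat V=\R\cdot V$ as the one-sided cone $\R_{\ge 0}\cdot V$ --- you take the nearest point of the ray $\R_{\ge 0}v$ to $x$ to be $\max(0,\langle x,v\rangle)\,v$. That is in fact the reading the statement needs: for the two-sided cone $\{tv : t\in\R,\, v\in V\}$ the inequality can fail when $V\ne -V$ (take $V=\{(1,0)\}\cup\{(\cos\theta,\sin\theta): |\theta-\pi|\le\epsilon\}\subset\IS^1$; then $\tau(V,(1,0))=\cos(\epsilon/2)<1$, while the two-sided cone is a double wedge whose medial axis is the $y$-axis, giving $\tau(\widehat V,(1,0))=1$). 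In the paper's only application $V=-V$, so the two readings coincide, but in the general statement you should make your convention explicit.
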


\begin{proof}
We may assume that $E$ equals the span of $\widehat V$ because the reach of a subset
remains unchanged after restriction to a subspace that contains this subset. It follows from
this assumption that, for all $x\in E$,
$\pi_{\widehat V}(x) = 0$ if and only if~$x = 0$.

Elementary geometry shows that for all $x\in E$, whenever~$\pi_{\widehat V}(x)$ is
well-defined and not zero, then $\pi_V(x)$ is well-defined and
\begin{equation*}
    \pi_V(x) = \frac{\pi_{\widehat V}(x)}{\|\pi_{\widehat V}(x)\|}.
\end{equation*}
Now let $p\in V$. Recall that the reach~$\tau(V, p)$ is the supremum of all~$r> 0$ such
that~$\pi_V$ is well-defined on~$B(p,r)$.

Consider any $r<\min\{\tau(\widehat V, p), 1\}$ and let $x\in B(p,r)$. As $r<1$ we have
$x\neq 0$, and as $r<\tau(\widehat V, p)$, we have that $\pi_{\widehat{V}}(x)$ is well-defined
and not zero (as $x\neq0$). Which, as we noted above, implies that $\pi_V(x)$ is
well-defined. This shows that $\pi_V$ is well-defined on all of $B(p,r)$ for all
$r<\min\{\tau(\widehat V, p), 1\}$, from where the claim follows.
\end{proof}

\subsection{An extension of the Niyogi-Smale-Weinberger theorem}
\label{sec:an-extension-nsw}

Again, we work in a Euclidean vector space~$E$. 
By the \emph{(open) neighborhood} of radius~$r \geq 0$ around a 
nonempty set~$S\subseteq E$ we understand the set 
\begin{equation*}
    \mcU(S, r) \eqdef \big\{ p \in E \mid d_S(p) < r \big\}.
\end{equation*}

Niyogi, Smale and Weinberger~\cite[Prop.~7.1]{NiSmWe08} gave 
an answer to the following question: given a compact submanifold 
$S\subseteq
E$, a finite set $\mcX\subset E$ and $\e>0$, which conditions do we
need to ensure that $S$ is a deformation retract of $\mcU(\mcX,\e)$?

In what follows, we observe their arguments extend to any compact subsets
$S,\mcX$ provided $S$ has positive reach~$\tau(S)$. The proof is a variation of
the original proof. A much more general extension~\cite{ChazalCohen-SteinerLieutier_2009} of the
Niyogi-Smale-Weinberger theorem includes our, with slightly worse constants and,
naturally, at the cost of a more involved proof.\footnote{We thank Théo Lacombe
  and a referee for raising this point.}

The \emph{Hausdorff distance} between two nonempty closed 
subsets~$A,B\subseteq E$ is defined as 
\begin{equation*}
    d_H(A,B) \eqdef \max\Big( \sup_{a\in A} d_B(a), 
    \sup_{b \in B} d_A(b) \Big).
\end{equation*}

\begin{theorem}\label{thm:SNW}
Let $S$ and~$\mcX$ be nonempty compact subsets of~$E$.
The set~$S$ is a deformation retract of\/~$\mcU(\mcX,\epsilon)$
for any~$\epsilon$ such that
$ 3\, d_H(S, \mcX) < \epsilon < \tfrac{1}{2}\, \tau(S) $.
\end{theorem}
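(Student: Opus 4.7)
Set $\eta \eqdef d_H(S,\mcX)$, so the hypothesis reads $3\eta < \epsilon < \tfrac{1}{2}\tau(S)$, which in particular implies $\epsilon + \eta < \tau(S)$. The plan is to invoke Proposition~\ref{pro:piW} to obtain the continuous projection $\pi_S$ on the tube $T(S)$, verify that $\pi_S$ is defined on all of $\mcU(\mcX,\epsilon)$, and then exhibit the straight-line map $H(x,t) \eqdef (1-t)x + t\pi_S(x)$ as a strong deformation retract of $\mcU(\mcX,\epsilon)$ onto $S$.

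First I check the sandwich $S \subseteq \mcU(\mcX,\epsilon) \subseteq T(S)$. For $s \in S$ there is some $y \in \mcX$ with $\|s - y\| \le \eta < \epsilon$, so $s \in \mcU(\mcX,\epsilon)$. For $x \in \mcU(\mcX,\epsilon)$ the triangle inequality gives $d_S(x) \le d_\mcX(x) + \eta < \epsilon + \eta < \tau(S)$, so $x \in T(S)$. By Proposition~\ref{pro:piW}, $\pi_S$ and $H$ are continuous on $\mcU(\mcX,\epsilon)$, with $H(x,0) = x$, $H(x,1) = \pi_S(x) \in S$, and $H(s,t) = s$ for $s \in S$, $t \in [0,1]$. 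The only remaining task is to show $H(x,t) \in \mcU(\mcX,\epsilon)$ for every $x \in \mcU(\mcX,\epsilon)$ and $t \in [0,1]$.

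The crucial observation is that for any fixed $y \in \mcX$ the function $t \mapsto \|H(x,t) - y\|^2$ is a quadratic polynomial in $t$, hence convex, which gives $\|H(x,t) - y\| \le \max(\|x - y\|,\|\pi_S(x) - y\|)$ on $[0,1]$. Hence it suffices to produce, for each $x \in \mcU(\mcX,\epsilon)$, some $y \in \mcX$ with both $\|x - y\| < \epsilon$ and $\|\pi_S(x) - y\| < \epsilon$. The natural choice is $y$ closest in $\mcX$ to $\pi_S(x)$: then $\|y - \pi_S(x)\| \le \eta$, and $\|y - x\| \le \eta + d_S(x) < \epsilon$ whenever $d_S(x) < \epsilon - \eta$. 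This settles the interior portion of $\mcU(\mcX,\epsilon)$.

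The delicate part is the outer layer $d_S(x) \in [\epsilon - \eta,\epsilon + \eta)$, where no single $y$ need be close to both $x$ and $\pi_S(x)$. Here the slack condition $3\eta < \epsilon$ enters: I would split the segment $[x,\pi_S(x)]$ at the point $q$ where $d_S(q)$ drops just below $\epsilon - \eta$, so that the interior argument applies verbatim to the subsegment $[q,\pi_S(x)]$, and the short residual piece $[x,q]$, of length $d_S(x) - (\epsilon-\eta) < 2\eta < \tfrac{2}{3}\epsilon$ by the gap hypothesis, is covered by a single $\epsilon$-ball around a carefully chosen $y \in \mcX$ so that convexity again bounds $\|H(x,t) - y\|$ by the endpoint values. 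The main obstacle I expect is the consistent selection of this auxiliary $y$ on the short piece, and it is precisely this step where the constant $3$ in $3\,d_H(S,\mcX) < \epsilon$ is tuned to succeed; once arranged, $H$ restricts to a continuous map into $\mcU(\mcX,\epsilon)$ and provides the required strong deformation retract.
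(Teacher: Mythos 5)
You identify the right structure (show $\mcU(\mcX,\epsilon)\subseteq T(S)$, retract along $\pi_S$, reduce to showing each segment $[x,\pi_S(x)]$ stays in $\mcU(\mcX,\epsilon)$), and your convexity observation that $t\mapsto\|H(x,t)-y\|^2$ is convex, so the supremum on any subsegment is attained at its endpoints, is a clean simplification of the paper's "acute angle" argument. The interior case $d_S(x)<\epsilon-\eta$ is correct.

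But there is a genuine gap at precisely the step you flag as the "main obstacle," and it is not a matter of tuning constants. Write $p\eqdef\pi_S(x)$ and let $q$ be your splitting point, with $\|x-q\|<2\eta$. You know some $y_0\in\mcX$ with $\|x-y_0\|<\epsilon$, but the triangle inequality only gives $\|q-y_0\|<\epsilon+2\eta$, which can exceed $\epsilon$; no other choice of $y\in\mcX$ is forced to do better. The Hausdorff condition is of no help on the residual piece, because for $z\in[x,q]$ one has $d_S(z)\ge\epsilon-\eta$ and hence only $d_\mcX(z)\le d_S(z)+\eta<\epsilon+2\eta$, which by itself does not place $z$ in $\mcU(\mcX,\epsilon)$. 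The bound "length $<2\eta$" is not enough, since the segment lies in a thin shell away from both $S$ and, a priori, from $\mcX$.

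Symptomatically, your argument never uses the hypothesis $\epsilon<\tfrac12\tau(S)$ beyond the much weaker consequence $\epsilon+\eta<\tau(S)$ needed for $\mcU(\mcX,\epsilon)\subseteq T(S)$. The paper's proof uses the factor $\tfrac12$ in an essential way: with $r\eqdef\epsilon/3$ it places an auxiliary point $w$ on the ray from $p$ through $x$ at distance $6r=2\epsilon$ from $p$; since $\tau(S)>2\epsilon$, Lemma~\ref{lem:reach-conv} gives $\pi_S(w)=p$ and $d_S(w)=6r$. This forces $\|w-y_0\|\ge d_S(w)-\|y_0-\pi_S(y_0)\|\ge 5r$, and a Pythagorean estimate in the right triangle with vertices $w$, $u$ (the exit point of the segment from $B(y_0,\epsilon)$) and $y_0$ then yields $\|w-u\|>4r$, so $\|u-p\|=\|w-p\|-\|w-u\|<2r$, hence $\|u-y_1\|\le\|u-p\|+\|p-y_1\|<3r=\epsilon$ for any $y_1\in\mcX$ within $r$ of $p$. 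That is exactly the bound your proposal leaves unproven, and it is where the reach enters. Without an argument of this kind, the two-ball cover of the segment cannot be completed.
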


\begin{proof}
For any~$x\in \mcU(\mcX, \epsilon)$ we have
\[ 
  d(x, S) \leq d(x,\mcX) + d_H(\mcX, S) < \tfrac43 \epsilon < \tau(S), 
\]
hence $\mcU(\mcX,\epsilon) \subseteq T(S)$. This shows that 
the map 
\[ 
   \mcU(\mcX,\epsilon) \times [0,1] \to E,\quad 
   (x,t) \longmapsto (1-t) x + t\pi_S(x)
\]
is well-defined. The map is also continuous 
(Proposition~\ref{pro:piW}). 
It remains to prove that its image is included 
in~$\mcU(\mcX,\epsilon)$, that is,
for any $v\in \mcU(\mcX,\epsilon)$ the line segment~$[v,\pi_S(v)]$ 
is included
in~$\mcU(\mcX,\epsilon)$. The argument involves seven points 
depicted in Figure~\ref{fig:proof-NSW}.

\begin{figure}[t]\centering
\begin{tikzpicture}[scale=.5,point/.style={draw,minimum size=0pt,inner
      sep=1pt,circle,fill=black}]
    \draw[thick] (-7,0) -- node[above, at end] {$S$} (12,0);
    \draw
    (0,0) node(p) [point,label=-60:$p$] {} --
    (0,1) node(u) [point,label=180:$u$] {} --
    (0,5) node(v) [point,label=30:$v$] {} --
    (0,10) node(w) [point,label=60:$w$] {} --
    node[left,near end] {$\ell$} (0,12);
    \draw (4,4) node(x) [point,label=60:$x$] {};
    \draw[dashed,->] (x) -- node [below] {$\epsilon$} (u);
    \draw(x) circle [radius=5];
    \draw  (4,0) node(q) [point,label=60:$q$] {};
    \draw (-7,3) arc [radius=5,start angle=126.9,end angle=-60] ;
    \draw (-4,-1) node(y) [point,label=200:$y$] {};
    \draw[dashed,->] (y) -- node[below] {$\epsilon$} (0,-4);
    \draw[dashed,->] (y) -- node[below] {$\leq r$} (p);
    \draw[dashed,->] (x) -- node[right] {$\leq r$} (q);
    \draw[dashed,->] (x) -- node[above] {$< \epsilon$} (v);
    \draw[dotted] (w) -- (-5.2,10);
    \draw[dashed,<->] (-4.5,10) -- node[left,near start] {$6r$} (-4.5,0);
\end{tikzpicture}
\caption{Schematic view of the proof of Theorem~\ref{thm:SNW}}
\label{fig:proof-NSW}
\end{figure}
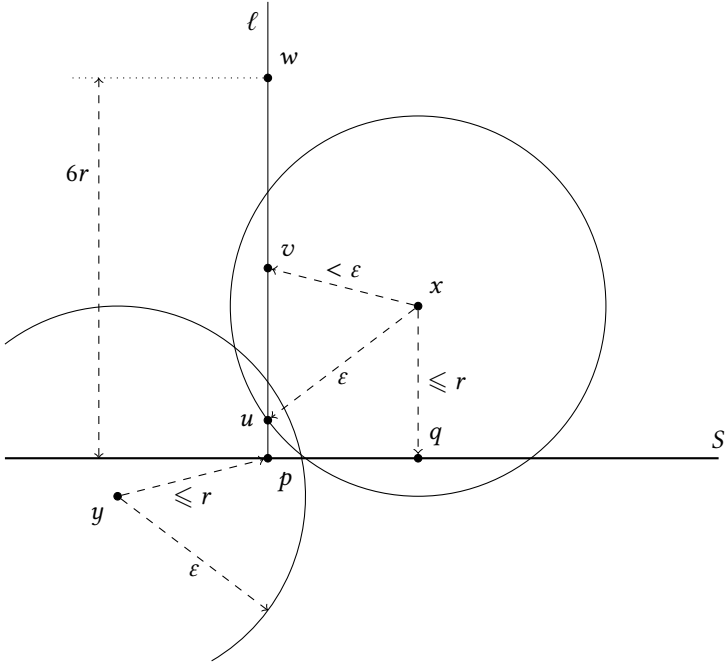

Let~$v \in \mcU(\mcX,\epsilon)$ and~$p\eqdef \pi_S(v)$. By definition,
there is some~$x\in\mcX$ such that~$\|v-x\|<\epsilon$. If~$\|p-x\| <
\epsilon$, then the line segment~$[v,p]$ is entirely included in the
ball of radius~$\epsilon$ around~$x$, which is a part
of~$\mcU(\mcX,\epsilon)$, and we are done. So we assume 
that~$\|p-x\|
\geq \epsilon$. Let~$u$ be the unique point in~$[v,p]$ such
that~$\|u-x\| = \epsilon$. The line segment~$[v,u)$ being included in
the ball $B(x,\e)\subseteq\mcU(\mcX,\epsilon)$, it only remains to
check that~$[u,p]$ is also included in~$\mcU(\mcX,\epsilon)$.

Let $r\eqdef \frac13 \epsilon$. Also, let~$\ell$ be the open half-line
starting from~$p$ and passing through~$v$ and~$w$ be the 
unique point
in~$\ell$ such that~$\|w-p\| = 6r$.  Our assumption states that $6r =
2\e< \tau(S)$. Also, as $p=\pi_S(v)$, we have
$\tau\big(S,p,\frac{w-p}{\|w-p\|}\big)>0$. By
Lemma~\ref{lem:reach-conv}, we obtain $\tau(S) \leq \tau(S,p) \leq
\tau\big(S,p,\frac{w-p}{\|w-p\|}\big)$, and therefore $6r <
\tau\big(S,p,\frac{w-p}{\|w-p\|}\big)$. This implies that $\pi_S(w)=p$
and $d_S(w) = \|w-p\| = 6r$.

Let~$q \eqdef \pi_S(x)$. We first note that~$\|x-q\| \leq r$ because
$d_S(x)\leq d_H(\mcX,S) < r$ by our assumption. Next we have
\begin{equation}\label{eq:5r}
 \|w-x\| \geq \|w-q\| - \|q-x\| \ge 5r,
\end{equation}
because~$\|w-q\| \geq d_S(w) = 6r$.

Since $d_H(\mcX,S) \leq r$, there is a point~$y\in \mcX$ such
that~$\|y-p\| \leq r$. To conclude the proof, it is enough to prove
that~$[u,p]\subseteq B(y,\e)$; since~$\|y-p\| \le r < \epsilon$, it is
sufficient to check that~$\|y-u\| <
\epsilon$. By the triangle inequality,
\begin{equation}\label{eq:1} 
   \|y-u\| \leq \|y-p\| +\|p-u\| \leq r + \|w-p\| - \|w-u\| 
     = 7r - \|w-u\|.  
\end{equation} 
Furthermore, the triangle $(xuv)$ has an acute angle at~$u$,
because~$u$ is on a sphere of center~$x$ and~$v$ lies inside this
sphere; the same holds true for the triangle~$(xuw)$ because~$u$, $v$
and~$w$ are on the same line, in this order
(cf. Figure~\ref{fig:proof-NSW}). It follows that $\|w-u\|^2 \geq
\|w-x\|^2 -\|x-u\|^2 > (5r)^2 - \epsilon^2$, where we 
used~\eqref{eq:5r} for the
second inequality. Therefore, with~\eqref{eq:1},
\begin{equation}\label{eq:2} 
   \|y-u\| < 7r - \sqrt{25 r^2 - \epsilon^2} 
  =3r=  \epsilon,
\end{equation} 
which concludes the proof.
\end{proof}

\section{Shub--Smale theory and extensions}
\label{sec:shub-smale-theory}

We recall the definition and basic properties of quantities 
($\alpha$, $\beta$ and~$\gamma$ numbers) introduced by 
Shub and Smale to study the complexity of
numerical methods for solving polynomial systems. 
We prove two results. First, an analogue of the $\alpha$-Theorem 
for the continuous Newton method, where a
continuous Newton's flow replaces the 
discrete sequence obtained with Newton's iteration. Second, 
an inequality relating the reach and the
$\gamma$-number. It strengthens and simplifies a result of
Cucker, Krick and Shub~\cite{CuckerKrickShub_2018},  
for it is pointwise whereas the latter is only global.

\subsection{Measures of proximity} 

Let~$E$ be a Euclidean space and~$F\colon E\to \R^m$ be an analytic
map such that $m\leq\dim E$. It is well-known that, under certain
conditions, the (Moore-Penrose) Newton iteration with initial point
$x_0$, given by
\begin{equation}\label{eq:Newton}
     x_{k+1}:=x_k-\Diff F(x_k)^\dagger F(x_k)
\end{equation}
is well-defined for all $k\ge 0$ and converges quadratically fast to a
zero $\zeta$ of $F$. In this case, we say that $x_0$ is an {\em
approximate zero} of $F$ with {\em associated zero} $\zeta$ (see
\cite[Def.~15.1]{Condition} for the formal definition). Here
$\Diff F(x)^\dagger\colon \R^{m} \to E$ is the {\em Moore-Penrose inverse}
of the full-rank matrix $\Diff F(x)$ (we say that~\eqref{eq:Newton} is
undefined if it is not of full rank). 

An obvious question is whether we can, for a given $x_0$, ensure the
convergence of Newton's iteration. That is, whether we can check that
$x_0$ is an approximate zero of $f$. An answer to this question was
provided by Smale~\cite{Smale_1986} for the zero-dimensional case
($m=\dim E$) and extended by Shub and 
Smale~\cite{Bez4} to
the underdetermined case ($m<\dim E$). They defined the quantities
(all norms are the spectral one)
\begin{equation}\label{eq:def-abc}
\begin{aligned}
\gamma(F, x) & \eqdef \sup_{k\geq 2} 
\left\| \tfrac{1}{k!} \Diff F(x)^\dagger \Diff^k F(x) \right\|^{\frac{1}{k-1}}, \\
\beta(F, x) &\eqdef \left\| \Diff F(x)^\dagger F(x)  \right\| ,\\[2pt]
\alpha(F, x) &\eqdef \gamma(F, x) \beta(F, x) ,
\end{aligned}
\end{equation}
and proved that there exists a universal constant 
$\alpha_\bullet$, around $\frac18$, such that if 
$\alpha(F,x_0)<\alpha_\bullet$ then $x_0$ is an approximate zero 
of $F$. The quantities $\beta$ and $\gamma$ are not without 
meaning themselves. Clearly, $\beta(F,x)$ is the length of the 
Newton step at $x$. Also, in the zero-dimensional case, 
Smale's $\gamma$-Theorem shows that, for a zero $\zeta$ of 
$F$, all points on the ball around $\zeta$ 
with radius $\frac{3-\sqrt{7}}{2\gamma(F,\zeta)}$ 
are approximate zeros of $F$.

\subsection{Continuous \texorpdfstring{$\alpha$}{alpha}-theory}
\label{sec:cont-alpha-theory}

While the Shub--Smale theory focuses primarily on the discrete
iteration~\eqref{eq:Newton}, the numbers~$\alpha$ and~$\beta$ also
give quantitative information on the convergence of the continuous
analogue of Newton's iteration. To the best of our knowledge, this has
never been highlighted before.

We consider again a point~$x_0$ in~$E$ such that $\Diff F(x_0)$ is
surjective. We define~$x_t$ with the following system of ordinary
differential equations, for $t$ in the maximal domain of solution
containing~$0$,
\begin{equation}\label{eq:diffeqx}
   \dot x_t = - \Diff F(x_t)^\dagger F(x_t),
\end{equation}
where~$\dot x_t$ denotes~$\frac{\ud}{\ud t} x_t$. 
We also denote~$\alpha_t\eqdef \alpha(F, x_ t)$ 
and~$\beta_t$ and~$\gamma_t$ accordingly. 
It may be that~$\gamma_t$ (and thus~$\alpha_t$) is not differentiable
everywhere. However, $\gamma_t$ is at least locally Lipschitz continuous
(cf. Lemma~\ref{lem:cont-alpha}),
which implies absolute continuity and, in turn, that~$\gamma_t$ is
differentiable almost everywhere and 
that~$\gamma_t - \gamma_0 = \int_0^t \dot \gamma_t \ud t$ \cite[IX\S4]{Natanson_1955}. 
This regularity is good enough for our
purposes. In all our arguments below, at a point~$t$ 
where~$\gamma_t$ is not differentiable, an
inequality like~$\dot \gamma_t \leq 5 \gamma_t^2 \beta_t$ actually
means
\[ 
  \limsup_{\epsilon\to 0} \frac{\gamma_{t+\epsilon}-\gamma_t}
  {\epsilon} \leq 5\gamma_t^2 \beta_t. 
\]

The {\em domain of definition}~$\Omega$ of the differential equation 
is the open set of all~$x\in E$ such that~$\Diff F(x)$ is surjective.

\begin{theorem}\label{thm:cont-alpha}
  If~$\alpha_0 < \frac{1}{13}$,
  then~$x_t$ is defined for all~$t \geq 0$ and
  \begin{enumerate}[(i)]
  \item \label{item:2} $F(x_t) = F(x_0) e^{-t}$;
  \item \label{item:5} $\|x_t-x_0\| \leq 2 \beta_0 (1-e^{-t})$;
  \item \label{item:6} $x_t$ converges when~$t\to\infty$.
  \end{enumerate}
\end{theorem}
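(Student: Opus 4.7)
The plan is to establish (i) by direct ODE computation, then derive two differential inequalities for $\gamma_t$ and $\beta_t$ whose combination forces $\alpha_t$ to decay exponentially, and finally integrate to obtain (ii) and (iii) together with extension of the flow to all of $[0,\infty)$. For (i), I would differentiate $F(x_t)$ along the flow and exploit the fact that $\Diff F(x_t)$ is surjective on $\O$, so $\Diff F(x_t)\,\Diff F(x_t)^\dagger = I$. Hence $\frac{\ud}{\ud t}F(x_t)=\Diff F(x_t)\dot x_t=-F(x_t)$, and integration yields $F(x_t)=F(x_0)e^{-t}$ as long as the flow is defined.

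For the $\gamma$-bound, I would invoke the standard Shub--Smale Lipschitz estimate $\gamma(F,y)\le \gamma(F,x)/((1-u)\psi(u))$ with $u=\gamma(F,x)\|y-x\|$ and $\psi(u)=1-4u+2u^2$; expanding to first order in $u$ gives the upper-derivative bound $\dot\gamma_t\le 5\gamma_t^2\beta_t$ in the $\limsup$ sense. For the $\beta$-bound, I would set $y_t\eqdef -\dot x_t$, so $\beta_t=\|y_t\|$ and $\Diff F(x_t)\,y_t = F(x_t)$; differentiating the latter identity and using $\dot F(x_t)=-F(x_t)$ gives $\Diff F(x_t)(\dot y_t + y_t) = \Diff^2 F(x_t)(y_t,y_t)$. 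Taking the inner product with $y_t$, noting that $y_t\perp \ker\Diff F(x_t)$ (so any kernel component of $\dot y_t + y_t$ drops out) and using $\|\Diff F(x_t)^\dagger \Diff^2 F(x_t)\|\le 2\gamma_t$ from the definition of $\gamma$, Cauchy--Schwarz yields $|\langle y_t,\dot y_t + y_t\rangle|\le 2\gamma_t\beta_t^3$, and hence $\dot\beta_t\le -\beta_t + 2\gamma_t\beta_t^2$.

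Combining these two estimates gives $\dot\alpha_t\le -\alpha_t(1-7\alpha_t)$. The hypothesis $\alpha_0<\tfrac{1}{13}$ then forces $\alpha_t$ to be non-increasing, with $\alpha_t\le\alpha_0 e^{-(1-7\alpha_0)t}$, so $\int_0^\infty\alpha_s\,\ud s\le \alpha_0/(1-7\alpha_0)\le \tfrac{1}{6}$. Integrating $\dot\beta_t\le -\beta_t(1-2\alpha_t)$ then gives $\beta_t\le \beta_0 e^{-t}\exp\bigl(2\int_0^t\alpha_s\,\ud s\bigr)\le e^{1/3}\beta_0 e^{-t}<2\beta_0 e^{-t}$, so $\|x_t-x_0\|\le\int_0^t\beta_s\,\ud s\le 2\beta_0(1-e^{-t})$, which is (ii); summability of $\int_0^\infty\beta_s\,\ud s$ then shows that $x_t$ is Cauchy, giving (iii). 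Extension to all $t\ge 0$ follows by a maximality argument: the a priori bound $\gamma_0\|x_t-x_0\|\le 2\alpha_0<1-\sqrt 2/2$ ensures via the same Shub--Smale estimate that $\Diff F(x_t)$ remains surjective, so $x_t\in\O$, while boundedness of $\beta_t$ precludes blow-up. The main obstacle will be the $\beta$-inequality, where one must handle carefully that $\dot y_t$ need not be orthogonal to $\ker\Diff F(x_t)$ since this kernel varies with $t$, together with the bookkeeping of constants that makes the threshold $\tfrac{1}{13}$ work.
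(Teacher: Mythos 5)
Your proposal is correct and reaches the same conclusion, but it takes a genuinely different — and cleaner — route for the key $\beta$-differential inequality. The paper computes $\dot B_t$ where $B_t = \Diff F(x_t)^\dagger F(x_t)$ by invoking the Golub--Pereyra formula for $\frac{\ud}{\ud t}A^\dagger$, reorganizes the result into $\dot B = -B + CB$ where $C = -A^\dagger \Diff^2F(\dot x) + P(A^\dagger\Diff^2F(\dot x))^T$, and then bounds $\|C\| \le 2\sqrt2\,\gamma\beta < 3\gamma\beta$ using that the two terms of $C$ have orthogonal images; this gives $\dot\beta_t \le -\beta_t + 3\gamma_t\beta_t^2$. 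You instead differentiate the relation $\Diff F(x_t)\,y_t = F(x_t)$ directly (with $y_t = -\dot x_t$), which sidesteps the pseudo-inverse derivative formula entirely; projecting onto $(\ker\Diff F(x_t))^\perp$ and pairing with $y_t$ — which lies in that space — makes the kernel component drop out cleanly, and you get the sharper $\dot\beta_t \le -\beta_t + 2\gamma_t\beta_t^2$, hence $\dot\alpha_t \le -\alpha_t + 7\alpha_t^2$ in place of the paper's $-\alpha_t + 8\alpha_t^2$. The worry you flag at the end, that $\dot y_t$ need not be orthogonal to the (varying) kernel, is exactly what the projection step already resolves: only the orthogonal component survives the pairing with $y_t$, so it's a non-issue. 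Your integration bookkeeping (monotonicity of $\alpha_t$, then $\int_0^\infty \alpha_s\,\ud s \le \alpha_0/(1-7\alpha_0)$, then $\beta_t \le e^{1/3}\beta_0 e^{-t}$) differs cosmetically from the paper's direct manipulation of $\frac{\ud}{\ud t}\frac{e^{-t}}{\alpha_t}$, but both are sound and both comfortably meet the threshold $2\beta_0 e^{-t}$. The extension argument via $\gamma_0\|x_t-x_0\| \le 2\alpha_0 < 1-\tfrac{\sqrt2}{2}$ and the surjectivity lemma of Dedieu is the same as the paper's. What your route buys: no appeal to the $\frac{\ud}{\ud t}A^\dagger$ formula, a transparent geometric reason for the constant, and a slightly better constant; what the paper's route buys is a symmetric treatment alongside the $\gamma$-estimate that reuses the operator $C$ machinery. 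Either is a valid proof.
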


\begin{lemma}\label{lem:cont-alpha}
  For all~$t\geq 0$ where $x_t$ is defined, we have 
  \begin{enumerate*}[(i)]
  \item \label{item:1} $\|\dot x_t\| = \beta_t$,
  \item \label{item:3} $\dot \gamma_t \leq 5 \gamma_t^2 \beta_t$,
  \item \label{loc-Lip} $t\mapsto\gamma_t$ is locally Lipschitz, 
  \item \label{it:dotbeta} 
    $\dot \beta_t \leq - \beta_t + 3 \gamma_t \beta_t^2$, and
  \item \label{item:4} $\dot \alpha_t \leq - \alpha_t + 8 \alpha_t^2$.
  \end{enumerate*}
\end{lemma}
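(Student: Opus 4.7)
The five parts split naturally into easy bookkeeping, one substantive estimate on~$\beta$, and one substantive estimate on~$\gamma$ that then yields the rest. Item~(i) is immediate from~\eqref{eq:diffeqx} and the definition of $\beta_t$. For what follows it is convenient to record a preliminary: writing $A_t \eqdef \Diff F(x_t)$, one has $A_t A_t^\dagger = \mathrm{Id}$ throughout the domain $\Omega$ since $A_t$ is surjective there, so
\[ \tfrac{\ud}{\ud t}F(x_t) \;=\; A_t\dot x_t \;=\; -A_t A_t^\dagger F(x_t) \;=\; -F(x_t), \]
and $F(x_t) = e^{-t}F(x_0)$. This identity is the backbone of the analysis and also proves point~\ref{item:2} of Theorem~\ref{thm:cont-alpha}.

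For~(iv), set $u_t \eqdef A_t^\dagger F(x_t)$, so that $\dot x_t = -u_t$, $\beta_t = \|u_t\|$, and $A_t u_t = F(x_t)$. Differentiating the last equation and using $\dot F(x_t) = -F(x_t)$ gives $A_t \dot u_t = -F(x_t) - \dot A_t u_t$. The key observation is that $u_t$ lies in the range of $A_t^{*}$ by construction of the Moore--Penrose inverse, and the orthogonal projection onto that range is exactly $A_t^\dagger A_t$ and is self-adjoint. Hence
\[ \langle u_t, \dot u_t \rangle \;=\; \langle u_t, A_t^\dagger A_t \dot u_t\rangle \;=\; \langle u_t,\, -u_t - A_t^\dagger \dot A_t u_t\rangle, \]
which sidesteps the need to differentiate $A_t^\dagger$ itself. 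Since $\dot A_t = \Diff^2 F(x_t)\,\dot x_t$ and $\|A_t^\dagger \Diff^2 F(x_t)\| \leq 2\gamma_t$ by the $k=2$ case of~\eqref{eq:def-abc}, one obtains $|\langle u_t, A_t^\dagger \dot A_t u_t\rangle| \leq 2\gamma_t \beta_t^3$. Dividing $\frac{\ud}{\ud t}\|u_t\|^2 = 2\langle u_t,\dot u_t\rangle$ by $2\beta_t$ then yields $\dot\beta_t \leq -\beta_t + 2\gamma_t\beta_t^2$, which is stronger than~(iv).

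For~(ii), I would invoke the standard Shub--Smale Lipschitz estimate for $\gamma$ along a line: whenever $u \eqdef \|y-x\|\gamma(F,x) < 1 - 1/\sqrt{2}$, one has $\gamma(F,y) \leq \gamma(F,x)/\bigl((1-u)(1-4u+2u^2)\bigr)$. Applied with $x = x_t$ and $y = x_{t+\epsilon}$, where $\|y-x\| = \beta_t\epsilon + o(\epsilon)$ by~(i), a first-order expansion at $u=0$ produces the Dini-type bound $\dot\gamma_t \leq 5\gamma_t^2 \beta_t$, read in the convention spelled out just before the lemma. Local Lipschitzness~(iii) then follows because $\beta_t$ and $\gamma_t$ are continuous on~$\Omega$ and therefore bounded on any compact subinterval of the domain of solution, so the one-sided bound of~(ii) is uniform there. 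Finally,~(v) is a two-line combination: $\dot\alpha_t = \dot\gamma_t \beta_t + \gamma_t \dot\beta_t \leq 5\gamma_t^2\beta_t^2 + \gamma_t(-\beta_t + 3\gamma_t\beta_t^2) = -\alpha_t + 8\alpha_t^2$.

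The main obstacle is in~(iv): the Moore--Penrose pseudoinverse is not smooth enough to differentiate naively, and a direct expansion of $\frac{\ud}{\ud t}(A_t^\dagger)$ produces extra terms supported on $\ker(A_t)^\perp$ that are awkward to bound. The trick outlined above is to avoid differentiating $A_t^\dagger$ altogether by pairing against $u_t \in \mathrm{range}(A_t^{*})$ and exploiting self-adjointness of the projector $A_t^\dagger A_t$. A secondary point requiring care is that $\gamma_t$, and hence $\alpha_t$, is only absolutely continuous; this is handled throughout by interpreting the derivative inequalities in the Dini sense, following the convention stated just above the lemma.
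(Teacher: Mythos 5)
Your proof is correct, and items (i), (ii), (iii), (v) follow the paper essentially verbatim (the paper also expands Dedieu's Lipschitz bound on $\gamma$ to first order to get (ii), and treats (iii) and (v) exactly as you do). The genuine divergence is in (iv). The paper differentiates $A_t^\dagger$ explicitly via the Golub--Pereyra formula
\[
  \tfrac{\ud}{\ud t} A^\dagger = -A^\dagger \dot A A^\dagger + P\,(A^\dagger\dot A)^T A^\dagger,
\]
computes $\dot B$ for $B = A^\dagger F$, writes $\dot B = -B + CB$ where $C$ is the sum of two operators with orthogonal images, and bounds $\|C\| \leq \sqrt{2}\,\|A^\dagger \Diff^2F(\dot x)\| \leq 2\sqrt{2}\,\gamma\beta < 3\gamma\beta$. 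You instead differentiate the defining relation $A_t u_t = F(x_t)$ and pair against $u_t$, using that $u_t$ lies in $\mathrm{range}(A_t^*)$ and that $A_t^\dagger A_t$ is the self-adjoint projector onto that subspace; this kills the projector term on $\ker A_t$ without ever invoking Golub--Pereyra, and yields the tighter bound $\dot\beta_t \leq -\beta_t + 2\gamma_t\beta_t^2$ (which of course implies (iv) as stated). Your route is cleaner and gives a better constant; the paper's is more mechanical but has the virtue of citing an off-the-shelf differentiation formula. One small caveat in your framing: $A^\dagger$ is in fact smooth wherever $A$ has full rank, so the issue is not that it "is not smooth enough to differentiate naively" but that its derivative has an extra term (the one with the kernel projector $P$) that your pairing trick makes vanish from the start.
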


\begin{proof}
\ref{item:1} It follows from~\eqref{eq:diffeqx}.   

\ref{item:3} Let~$y = x_{t+\epsilon}$ for some small positive~$\epsilon$.
Let~$u = \|x-y\| \gamma_t = \gamma_t\|\dot x_t\| \epsilon  +
\mcO(\epsilon^2)$.
By \cite[Lemme~131]{Dedieu_2006},
\begin{align*}
  \gamma(F, y) - \gamma(F, x_t)
  &\leq \left( \frac{1}{(1-u)(1-4u+2u^2)} -1 \right) \gamma(F, x_t) \\
  &= 5u \gamma_t + \mcO(\epsilon^2),
\end{align*}
so that 
$\gamma_{t+\epsilon} - \gamma_t \leq 5 \gamma_t^2 \|\dot x_t\|\epsilon + \mcO(\epsilon^2)$. 
With the equality~$\|\dot x_t\| = \beta_t$, 
this gives the claim 
when taking the limit for $\epsilon\to 0$. 
  
\ref{loc-Lip} This follows from \ref{item:3}. 

\ref{it:dotbeta}
Let $A_t \eqdef \Diff F(x_t)$, $B_t \eqdef A_t^\dagger F(x_t)$, 
and $P_t := \mop{id}_E - A_t^\dagger A_t$ (the orthogonal projection on $\ker A_t$). 
Then $\dot{A}_t = \Diff^2F(x_t)(\dot{x}_t)$ and (we drop the index $t$)
\cite[Thm.~4.3]{GolubPereyra_1973}
\[ 
   \tfrac{\ud}{\ud t} A^\dagger = -A^\dagger \dot A A^\dagger 
 + P (A^\dagger\dot A)^T A^\dagger.
\]
This formula derives from the equality~$A^\dagger = A^T (A A^T)^{-1}$
which holds because~$A$ is surjective.

Using that~$\Diff F(x) \Diff F(x)^\dagger = \mop{id}_E$ and given the
differential equation~\eqref{eq:diffeqx} for~$x_t$, we check that, 
for any~$t\in I$, 
\begin{equation}\label{eq:12}
  \tfrac{\ud}{\ud t} F(x_t) = \Diff F(x_t)(\dot x_t) = - \Diff F(x_t) \Diff F(x_t)^\dagger F(x_t) = - F(x_t).
\end{equation}
Using this equality we deduce that
\begin{align*}
  \dot{B} &= A^\dagger \tfrac{\ud}{\ud t} F(x) + \left( \tfrac{\ud}{\ud t} A^\dagger \right) F(x)\\
          &= - A^\dagger F(x) +  \left( -A^\dagger \dot A A^\dagger + P (A^\dagger \dot A)^T A^\dagger \right) F(x) \\
          &= -{B} +\big[ -\Diff F(x)^\dagger \Diff^2 F(x)(\dot{x})  
     + P (\Diff F(x)^\dagger \Diff^2F(x)(\dot{x}))^T \big] B .
\end{align*}
Let~$C$ denote the operator inside the square brackets.
Since  the two terms in~$C$ have orthogonal images, we easily 
obtain 
\[ 
    \|C\| \leq \sqrt{2}\, \|\Diff F(x)^\dagger \Diff^2F(x)(\dot x) \| 
   \leq 2\sqrt{2} \gamma \|\dot x\| < 3 \gamma \beta. 
\]
Since~$\dot\beta_t = \frac{1}{\beta_t} \langle \dot B_t, B_t \rangle$, 
we obtain item~\ref{it:dotbeta}.

\ref{item:4} From the previous inequalities,
\[ 
  \dot \alpha_t = \dot \gamma_t \beta_t + \gamma_t \dot\beta_t 
  \leq 8 \gamma_t^2 \beta_t^2 - \gamma_t \beta_t, 
\]
which is exactly the claim.
\end{proof}

\begin{proof}[Proof of Theorem~\ref{thm:cont-alpha}]

Let~$I = [0,\tau)$ be the maximum domain of solution of the
differential equation~\eqref{eq:diffeqx} with the fixed initial
condition~$x_0$. We will shortly see that~$\tau=\infty$.

By Equation~\eqref{eq:12} we have that, for any~$t\in I$, 
$\tfrac{\ud}{\ud t} F(x_t)=- F(x_t)$. This gives the first claim.

After some calculation, using Lemma~\ref{lem:cont-alpha}\ref{item:4}, 
we check
that $\frac{\ud}{\ud t}\frac{e^{-t}}{\alpha_t} \geq -8 e^{-t}$. It follows
that for~$t\in I$,
\begin{equation}\label{eq:7}
 \alpha_t \leq \frac{\alpha_0 e^{-t}}{1-8\alpha_0}.
\end{equation}
After some calculation, using 
Lemma~\ref{lem:cont-alpha}\ref{it:dotbeta}, we
obtain that~$\frac{\ud}{\ud t}\log \beta_t \leq -1 + 3\alpha_t$, and 
therefore, for $t\in I$,
\begin{equation}\label{eq:b-bound}
  \beta_t \leq \beta_0 \exp \left( \tfrac{3\alpha_0}{1-8\alpha_0} 
    - t\right) \leq 2 \beta_0 e^{-t}, 
\end{equation}
where we used that~$\alpha_0 \leq \frac{1}{13}$ for the second
inequality.  Using Lemma~\ref{lem:cont-alpha}\ref{item:1}, we compute
that~$-\frac{\ud}{\ud t} \frac{1}{\gamma_t} \leq 5 \beta_t$, and it
follows with \eqref{eq:7} that for~$t\in I$,
\begin{equation}
  \label{eq:g-bound}
  \gamma_t \leq \frac{\gamma_0}{1- 10 \beta_0 \gamma_0} 
  \leq \tfrac{13}{3} \gamma_0.
\end{equation}
By Inequality~\eqref{eq:b-bound} and Lemma~\ref{lem:cont-alpha}(i),
$\|\dot x_t\| = \beta_t$ is bounded for $t\in I$. Therefore, if the
interval~$I=[0,\tau)$ is bounded, then $x_t$ approaches, as
$t\to\tau$, 
a point $y$ in 
the complement of~$\Omega$, the domain of definition of
the differential equation
\cite[IV.5, Th.~2]{Bourbaki_FRV}. 
Therefore,
$\gamma_0 \|y-x\| \le 2 \beta_0 \gamma_0 = 2\alpha_0 
< 1 -\tfrac12 \sqrt{2}$, 
and~\cite[Lemme~123]{Dedieu_2006} 
implies that $\Diff F(y)$ is surjective, which contradicts $y\not\in\Omega$. 
We have thus shown that~$\tau=\infty$.

Next, with Lemma~\ref{lem:cont-alpha}\ref{item:1}, it follows that,
for~$t\in I$,
\[ 
  \|x_t-x_0\| \leq \int_0^t \|\dot x_s\| \ud s \leq 2 \beta_0  (1 - e^{-t}). 
\]
which is the second claim.
Similarly, Equation~\eqref{eq:b-bound} shows that the 
integral~$\int_0^\infty \dot x_s\ud s$ is absolutely convergent, 
therefore~$x_t$ has a limit when~$t\to\infty$.
\end{proof}

\subsection{An inequality relating the reach and the 
\texorpdfstring{$\gamma$}{gamma}-number}
\label{sec:proof-tau-gamma}

We keep assuming that $E$ is a Euclidean space 
and~$F\colon E\to\R^m$
an analytic map, with $m\le \dim E$. We will prove the following local
inequality, which is a refinement of a result first proved by Cucker,
Krick and Shub~\cite{CuckerKrickShub_2018}.  The proof is also much
simpler.

\begin{theorem}\label{thm:tau-gamma-local}
Let~$\mcM\subseteq E$ be the zero set of the 
analytic map~$F\colon E\to\R^m$ and $p\in\mcM$. 
Then we have $\tau(\mcM,p) \gamma(F,p) \geq \frac{1}{14}$
if $\gamma(F,p) < \infty$. 
\end{theorem}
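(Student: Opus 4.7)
The plan is to use the continuous Newton flow of Theorem~\ref{thm:cont-alpha} to build, in a small ball around $p$, a retraction onto $\mathcal M$ and argue that it must coincide with the nearest-point projection, so the latter is single-valued there. Set $\gamma \eqdef \gamma(F,p)$ and $r \eqdef \tfrac{1}{14\gamma}$; the goal is $\tau(\mathcal M, p) \geq r$.

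The first task is to show that for every $x \in B(p,r)$ one has $\alpha(F,x) < \tfrac{1}{13}$, so that Theorem~\ref{thm:cont-alpha} provides a well-defined Newton-flow limit $\phi(x) \eqdef \lim_{t\to\infty} x_t \in \mathcal M$ with $\|\phi(x) - x\| \leq 2\beta(F,x)$. Dedieu's Lemma~131 (the same estimate already invoked in Lemma~\ref{lem:cont-alpha}(ii)) applied with $u = \gamma\|x-p\| \leq \tfrac{1}{14}$ bounds $\gamma(F,x)$ by a small multiple of $\gamma$. Since $F(p) = 0$, Taylor expansion gives $\|F(x)\| = O(\|x-p\|)$, and, combined with a perturbation bound for $DF(x)^\dagger$, this yields $\beta(F,x) = O(\|x-p\|)$. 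Multiplying gives $\alpha(F,x) = O(\gamma\|x-p\|) < \tfrac{1}{13}$ for the chosen~$r$.

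The crux of the argument is a rigidity property for $\phi$: if $q \in \mathcal M$ and $v \in N_q\mathcal M$ with $q+v \in B(p,r)$, then $\phi(q+v) = q$. The heuristic is that $N_q\mathcal M = \operatorname{range}(DF(q)^T)$ and $DF(q)^\dagger DF(q)$ is the orthogonal projection onto $N_q\mathcal M$, so at $x = q+v$ one computes
\[
  DF(x)^\dagger F(x) \;=\; DF(q)^\dagger DF(q)\,v \;+\; O\!\big(\gamma\|v\|^2\big) \;=\; v \;+\; O\!\big(\gamma\|v\|^2\big).
\]
Thus the Newton direction at $q+v$ points back along the normal towards $q$. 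Making this rigorous --- i.e.\ proving that the flow actually lands at $q$, not merely in a small neighborhood of it --- requires a Gr\"onwall-type ODE estimate in the spirit of Lemma~\ref{lem:cont-alpha}, tracking the tangential drift $P_{T_q\mathcal M}(x_t - q)$ against the $\gamma$-bound on higher derivatives, so that the normal component decays exponentially while the tangential component stays exactly zero.

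Granted the rigidity property, the reach bound is immediate. If some $x \in B(p,r)$ had two distinct nearest points $q_1, q_2 \in \mathcal M$, then each $v_i \eqdef x - q_i$ would lie in $N_{q_i}\mathcal M$ by first-order optimality of the distance function, and $\|v_i\| = d_{\mathcal M}(x) \leq \|x-p\| < r$, so the rigidity property would force $q_1 = \phi(x) = q_2$, a contradiction. Hence $B(p,r) \cap \Delta_{\mathcal M} = \varnothing$ and $\tau(\mathcal M, p) \geq r = \tfrac{1}{14\gamma(F,p)}$. The main obstacle is exactly the rigidity lemma: the naive computation above only shows $\phi(q+v) = q + O(\gamma\|v\|^2)$, and upgrading this to exact equality --- with a constant sharp enough to combine with the $\tfrac{1}{13}$ threshold of Theorem~\ref{thm:cont-alpha} and yield the claimed $\tfrac{1}{14}$ --- is where the real work lies.
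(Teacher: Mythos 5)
Your proposal hinges on a ``rigidity lemma'' asserting that the Newton flow started from $q+v$ with $v\in N_q\mcM$ converges \emph{exactly} to $q$; you flag this yourself as the real difficulty, and it is more than a difficulty --- the statement is simply false. The Newton step $\Diff F(x)^\dagger F(x)$ is orthogonal to $\ker\Diff F(x)$, not to $T_q\mcM$, and for $x$ off $\mcM$ these two subspaces differ at first order, so the flow acquires a tangential drift of size $\Theta(\gamma\|v\|^2)$. A concrete computation shows this: take $F(x,y)=x^2+y$ so $\mcM$ is the parabola $y=-x^2$, let $q=(1,-1)$, and let $v=s(2,1)\in N_q\mcM$. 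Solving the flow (using the first integral $x^2+\tfrac12\log x - z = \mathrm{const}$, where $z=F(x,y)$) shows that the limit has first coordinate $1-\tfrac25 s^2+\Oh(s^3)$, not $1$. So $\phi(q+v)\ne q$, and your final step --- deducing $q_1=\phi(x)=q_2$ --- does not go through. What you can salvage, $\|\phi(q+v)-q\|=\Oh(\gamma\|v\|^2)$, only gives $\|q_1-q_2\|=\Oh(\gamma\,d_\mcM(x)^2)$, which is far from forcing $q_1=q_2$ and in fact is consistent with arbitrarily small reach.

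For reference, the paper's proof takes a different route entirely and avoids the Newton flow. Fixing $u\in\Delta_\mcM$ with two nearest points $x,y\in\mcM$, it works directly with the orthogonal projections $\pi_x,\pi_y$ onto the tangent spaces: the normality relations $\pi_x(u-x)=0$, $\pi_y(u-y)=0$ give a lower bound on $\|\pi_x(u-y)\|$ involving $\|x-y\|$, while the Lipschitz estimate $\|\pi_x-\pi_y\|\le 2K\gamma(p)\|x-y\|$ (Proposition~\ref{lem:Np-lip}, a second-fundamental-form bound) gives an upper bound. Playing these against each other yields $\gamma(p)\|u-p\|\ge\tfrac1{14}$. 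The key object is thus the curvature of $\mcM$ encoded in $\Diff\pi$, not the dynamics of Newton's method, and no exact ``landing'' statement is needed --- only inequalities. If you want to repair your approach you would need a genuinely different mechanism to rule out two nearest points, since the flow-based rigidity is unavailable; the paper's projection-comparison argument is one such mechanism.
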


For~$p\in E$ such that $\rank \Diff F(p)=m$, let $\pi_p\colon E\to E$
denote the orthogonal projection onto the kernel of~$\Diff F(p)$, that is,
$\pi_p = \mop{id}_E - \Diff F(p)^\dagger \Diff F(p)$. Note that if $\gamma(F,p)
< \infty$ then locally around $p$, $\mcM$ is a smooth manifold and
$\ker \Diff F(p)$ is the tangent space $T_p\mcM$ at~$p$.

\begin{proposition}\label{lem:Np-lip}
The derivative of the rational map 
$\pi\colon E \to \mop{End}(E),\, p \mapsto \pi_p$
at ~$p \in E$ has an operator norm bounded by~$2 \gamma(F,p)$. 
Here $\mop{End}(E)$ is endowed with the spectral norm.
\end{proposition}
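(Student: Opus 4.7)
My plan is to compute $\Diff \pi(p)$ explicitly using the Moore--Penrose derivative formula that the paper has already invoked in the proof of Lemma~\ref{lem:cont-alpha}, and then exploit an orthogonality of images to obtain the sharp constant.

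Write $A \eqdef \Diff F(p)$ and set $Q \eqdef A^\dagger A$, so that $\pi_p = \mop{id}_E - Q$ is the orthogonal projection onto $\ker A$ and $Q$ is the orthogonal projection onto $(\ker A)^\perp$. For a direction $v\in E$, denote $\dot A \eqdef \Diff^2 F(p)(v,\cdot)$. Applying Golub--Pereyra (as already used in the proof of Lemma~\ref{lem:cont-alpha}\ref{it:dotbeta}), the derivative of $A^\dagger$ in direction $v$ is $-A^\dagger \dot A A^\dagger + \pi_p (A^\dagger \dot A)^T A^\dagger$. A direct product-rule calculation, using $AA^\dagger = \mop{id}$ and $I - Q = \pi_p$, then gives
\[
  \Diff\pi(p)(v) \;=\; -\,A^\dagger \dot A\,\pi_p \;-\; \pi_p\,(A^\dagger\dot A)^T Q .
\]

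The key observation is that the two terms in this expression have orthogonal images: the first lands in $\Im(A^\dagger) = (\ker A)^\perp$, while the second lands in $\Im(\pi_p) = \ker A$. Therefore, for any $w\in E$ with $\|w\|=1$, the vectors $A^\dagger\dot A\,\pi_p w$ and $\pi_p (A^\dagger\dot A)^T Q w$ are orthogonal, so
\[
  \big\|\Diff\pi(p)(v)\, w\big\|^2 \;=\; \big\|A^\dagger \dot A\,\pi_p w\big\|^2
    + \big\|\pi_p (A^\dagger\dot A)^T Q w\big\|^2 .
\]
Each term will be controlled by the same quantity: using the definition of $\gamma$ with $k=2$, namely $\|A^\dagger \Diff^2 F(p)\|\le 2\gamma(F,p)$, together with $\|v\|\le 1$ and $\|\pi_p\|,\|Q\|\le 1$, I expect to bound the first summand by $(2\gamma(F,p))^2\|\pi_p w\|^2$ and the second, after transposition (the spectral norm is preserved), by $(2\gamma(F,p))^2\|Q w\|^2$. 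Adding and using $\|\pi_p w\|^2 + \|Q w\|^2 = \|w\|^2 = 1$ yields $\|\Diff\pi(p)(v)w\|\le 2\gamma(F,p)$, which is the claim.

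The main technical point I anticipate is bookkeeping the norms of bilinear maps correctly so that the factor $2$ from $k=2$ in the definition of $\gamma$ combines with the orthogonality splitting to give exactly $2\gamma(F,p)$ and not $2\sqrt{2}\,\gamma(F,p)$; the orthogonality of the two pieces is precisely what saves the factor $\sqrt{2}$, so it is essential to carry the Pythagorean estimate through rather than using a crude triangle inequality.
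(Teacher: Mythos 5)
Your proof is correct and takes essentially the same route as the paper: both start from the Golub--Pereyra derivative formula for the pseudoinverse, arrive at $\Diff\pi(p)(v)=-B-B^T$ with $B=A^\dagger\dot A\,\pi_p$, and exploit the orthogonality of the two summands to beat the naive factor of $4$. The paper packages the orthogonality as the algebraic identity $\|B+B^T\|=\|B\|$ for $B^2=0$ and then bounds $\|B\|\le 2\gamma$, whereas you carry out a direct Pythagorean estimate after splitting $w=\pi_pw+Qw$; these are two presentations of the same underlying observation.
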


\begin{proof}
Let~$p\in \mcM$. The derivative of~$\pi$ at~$p$,  
$\Diff\pi(p):E\to\mop{End}(E)$, evaluated at~$\dot p \in E$ 
yields~\cite[Cor.~4.2]{GolubPereyra_1973},
\begin{equation*}
    \Diff \pi(p)(\dot p) = 
    - \Diff F(p)^\dagger \cdot \Diff^2F(p)(\dot p)\cdot\pi_p 
    - ( \Diff F(p)^\dagger\cdot \Diff^2F(p)(\dot p) \cdot \pi_p)^T.
\end{equation*}
Since $\left\| \tfrac{1}{2} \Diff F(p)^\dagger \Diff^2 F(p) \right\| 
\le \gamma(F,p)$,
by the definition~\eqref{eq:def-abc} of~$\gamma(F,p)$, 
it follows that $\left\|D \pi(p) \right\| \le 4 \gamma(F,p)$.
We obtain the sharper bound $2\gamma(F,p)$ by observing 
that~$\|A + A^T\| =\|A\|$ for any map~$A \in \mop{End}(E)$ 
such that $A^2=0$, which holds for
$A=\Diff F(p)^\dagger \, \Diff^2F(p)(\dot p)\,\pi_p$.
\end{proof}

It is worth noting that the derivative $\Diff\pi(p)$ is an incarnation of
the second fundamental form $B_p\colon T_p\mcM \times T_p \mcM\to
T_p\mcM^\perp$ of $\mcM$ at $p$ and one can see that $\|\Diff\pi(p)\| =
\|B_p\|$.  Proposition~\ref{lem:Np-lip} means that the norm of the
second fundamental form of $\mcM$ at~$p$, a classical measure of
curvature in differential geometry, is bounded by $2\gamma(F,p)$. This
is related to \cite[Prop.~6.1]{NiSmWe08}, where this norm is upper
bounded by $1/\tau(\mcM)$.

\begin{proof}[Proof of Theorem~\ref{thm:tau-gamma-local}]
We fix~$p\in \mcM$ such that $\gamma(p) < \infty$.  Since~$\tau(p) =
\inf_{u\in \Delta_\mcM} \|u-p\|$, it is enough to prove that
$\gamma(p) \|u-p\| \geq \frac{1}{14}$ for any given~$u\in
\Delta_{\mcM}$.  To shorten notation, we write~$\gamma(p)$
for~$\gamma(F,p)$.  

Let $u\in\Delta_\mcM$. By definition, there exist distinct points~$x$
and~$y$ in~$\mcM$ such that $d_{\mcM}(u) = \|u-x\| = \|u-y\|$. Using
the triangle inequality (three times) we see that 
$$
   \max(\|x-y\|,\|p-x\|,\|p-y\|) \le 2 \|u-p\| . 
$$ 
Therefore, denoting
$$ 
   \eta := \gamma(p) \max(\|x-y\|,\|p-x\|,\|p-y\|), 
$$ 
we obtain $\gamma(p) \|u-p\| \geq \frac12 \eta$. 
If~$\eta \geq \frac17$, then we
are done, so we can assume that~$\eta <\frac17$.

Let~$B \subseteq E$ be the ball of center~$p$ and
radius~$\eta/\gamma(p)$ (in particular $x,y\in B$). Since~$\eta \leq
\frac17$, $\gamma$ is bounded on~$B$ by $K \gamma(p)$, where~$K \eqdef
\frac{1}{(1-\eta)(1-4\eta+2\eta^2)}$ \cite[Lemme~131]{Dedieu_2006}. In
particular, $\gamma(x)$ and~$\gamma(y)$ are finite, so that $x$ and
$y$ are regular points of $\mcM$.

We now give a lower and an upper bound for~$\|\pi_x(u-y)\|$. Let~$y'=
x + \pi_x(y-x)$. By Lemma~\ref{le:ortho}, the vector~$u-x$ is normal
to~$\mcM$ at~$x$, that is $\pi_x(u-x) = 0$; thus~$\pi_x(u-y) = x-y'$,
and then we have the lower bound 
\begin{equation}\label{eq:lobo}
\|x-y\| -\|y-y'\| \leq \|\pi_x(u-y)\| . 
\end{equation}
Similarly, the vector~$u-y$ is normal to~$\mcM$ at $y$,
that is~$\pi_y(u-y) =0$; hence the upper bound \[ \|\pi_x(u-y) \| =
\|\pi_x (u-y) - \pi_y(u-y)\| \leq \|\pi_x-\pi_y\|\cdot \|u-y\|. \] 
Combined with \eqref{eq:lobo}, we obtain
\begin{equation} \label{eq:main-ineq-tau-gamma} 
 \|x-y\| - \|y-y'\| \le \|\pi_x-\pi_y\|\|u-y\| .  
\end{equation}

Further, we aim at bounding~$\|y-y'\|$.  By definition of $y'$, using
that~$F(x)=F(y) = 0$, and expanding $F(y)$ into a power series at~$x$,
we can write 
\begin{align*} 
y - y' &= \Diff F(x)^\dagger \Diff F(x)(y-x) -\Diff F(x)^\dagger F(y) \\ 
&= \Diff F(x)^\dagger \Diff F(x)(y-x) - \Diff F(x)^\dagger
\sum_{k\geq 0} \tfrac{1}{k!} \Diff^k F(x)(y-x,\dotsc,y-x) \\ 
&= -\sum_{k\geq 2} \tfrac{1}{k!} \Diff F(x)^\dagger \Diff^k F(x)(y-x,\dotsc,y-x).
\end{align*} 
Hence 
\begin{equation}\label{eq:9} 
\begin{aligned}
 \|y-y'\| &\leq \sum_{k\geq 2} \big\|\tfrac{1}{k!} \Diff F(x)^\dagger \Diff^k
 F(x)\big\| \|y-x\|^k \\ 
 & \leq \|y-x\| \sum_{k\geq 2}\big(\gamma(x)\|y-x\|\big)^{k-1}\\ 
&=\frac{\gamma(x)\|x-y\|^2}{1-\gamma(x)\|x-y\|} \leq
 \frac{K\eta}{1-K\eta} \|x-y\|,  
\end{aligned} 
\end{equation}
the last inequality following from 
$\gamma(x) \|x-y\| \le K \gamma(p) \|x-y\| \le K \eta$ and 
the monotonicity of the function $t\mapsto t/(1-t)$.

Lastly, we bound~$\|\pi_x-\pi_y\|$.  By Proposition~\ref{lem:Np-lip},
we can upper bound 
\begin{equation}\label{eq:gaga}
 \|\pi_x - \pi_y\| \le \sup_{z\in [x,y]} \| \Diff\pi(z) \| \cdot \|x-y\| \le 
  \sup_{z\in [x,y]} 2 \gamma(z)  \cdot \|x-y\| \le 2 K\gamma(p) \|x-y\| .
\end{equation}
Combining \eqref{eq:main-ineq-tau-gamma}, \eqref{eq:9}
and~\eqref{eq:gaga}, we obtain 
$$ 
 \Big(1 - \frac{\eta K}{1 - \eta K}\Big) \|x-y\| \le 
 2 K \gamma(p) \|x-y\|\, \|u-y\|.  
$$ 
Dividing by the nonzero $\|x - y\|$ and noting $\|u-y\| \le \|u -p \|$, 
this implies
$$ 
  \frac{1}{14} \le \frac{1}{2K}\Big(1 - \frac{\eta K}{1 - \eta K}\Big) 
  \le \gamma(p) \|u-p\| , 
$$ 
where the left-hand inequality is easily checked numerically.
\end{proof}

\section{Condition number of semialgebraic systems}

We focus now on semialgebraic sets, and more specifically, on
{\em spherical} semialgebraic sets $S(F,G)$ given by homogeneous
semialgebraic systems $(F,G)$; cf.~\eqref{eq:hsa}. We do 
so since, eventually (see \S\ref{sec:affine} below), we will reduce 
the computation of the homology of semialgebraic sets and its 
complexity analysis to the same tasks for the spherical case.
We define a condition 
number $\kappasa$ for homogeneous semialgebraic systems 
and relate it to three different measures of conditioning: 
the distance to the closest ill-posed
system in the space of semialgebraic systems
(Theorem~\ref{thm:kappasa-dist}); the reach of 
the set $S(F,G)$ (Theorem~\ref{thm:tau-kappa}); and the sensitivity
of $S(F,G)$ to small relaxations of the equalities and 
inequalities of the system~$(F, G)$
(Theorem~\ref{thm:approx}). We also bound the degree of the
hypersurface of ill-posed systems
(Proposition~\ref{prop:discriminant}). We finally give a notion of
condition number for affine semialgebraic systems that is based 
on the one for the homogeneous case.

\subsection{Measures of condition}
\label{sec:cond-semi-sets}

\subsubsection{The \texorpdfstring{$\mu$}{mu} numbers}
\label{sec:number-mu}

To a degree pattern $\bfd=(d_1,\ldots,d_q)$ we associate 
the linear space $\Hd[q]$ of the polynomial systems 
$F=(f_1,\ldots,f_q)$ where $f_i\in\R[X_0,X_1,\ldots,X_n]$ 
is homogeneous of degree $d_i$.  Let~$D = \max_{1\leq i\leq q} d_i$.
We endow~$\Hd[q]$ with a Euclidean inner product, 
the \emph{Weyl inner product}, defined as follows.
For homogeneous polynomials 
$h=\sum_{|\bfa|= d}h_{\bfa} X^\bfa$ and 
$h'=\sum_{|\bfa|= d}h_{\bfa}' X^\bfa$ in~$\R[X_0,\dotsc,X_n]$,
where we write $\bfa=(a_0,\dotsc,a_n)\in\N^{n+1}$ and 
$|\bfa| \eqdef a_0+\dotsb+a_n$,
we define 
$$
  \langle h,h'\rangle\eqdef \sum_{|\bfa| = d} {d\choose \bfa}^{-1} 
  h_{\bfa} h_{\bfa}',
$$
where 
${d\choose \bfa} \eqdef\frac{d!}{a_0! a_1!\ldots a_n!}$ 
is the multinomial coefficient.
For any~$q$-tuples of homogeneous polynomials $F,F'\in\Hd[q]$
with degree pattern $\bd$, say
$F=(f_1,\ldots,f_q)$ and  $F'=(f_1',\ldots,f_q')$, 
we define 
$$
   \langle F,F'\rangle\eqdef\sum_{j=1}^q \langle f_j,f_j'\rangle.
$$
In other words, the Weyl inner product is a dot product with 
respect to a 
specifically weighted monomial basis. Its \emph{raison d'être} 
is the fact that it is invariant under orthogonal transformations 
of the homogeneous variables $(X_0,\ldots,X_n)$. That is, 
that for any orthogonal transformation 
$u:\R^{n+1}\to\R^{n+1}$ and any $F\in\Hd[q]$, we have 
$\|F\| = \| F\circ u \|$.
In all of what follows, all occurrences of norms in spaces 
$\Hd[q]$ refer to the norm induced by the Weyl inner product. 

For a point $x\in\R^{n+1}$ and a system~$F\in\Hd[q]$, let $\Diff F(x)$ denote the
derivative of $F$ at $x$, which is a linear map~$\R^{n+1}\to\R^q$. We also
define the diagonal normalization matrix
\begin{equation*}
\Delta \eqdef
\begin{pmatrix}
  \sqrt{d_1} & & \\
  & \ddots & \\
  & & \sqrt{d_q}
\end{pmatrix}.
\end{equation*}
The \emph{condition number} $\mun(F,x)$ of~$F\in \Hd[q]$ at~$x\in\IS^n$ has been
well studied~\cite{Bez1,Bez2,Bez3,Bez4,Bez5}, see also \cite{Condition}. We define
it as $\infty$ when the derivative $\Diff F(x)$ of $F$ at $x$ is not surjective, and
otherwise as
\begin{equation}\label{eq:munorm}
   \mun(F,x) \eqdef\|F\| \left\|\Diff F(x)^\dagger \Delta\right\|,
\end{equation}
where the norm $\|\Diff F(x)^\dagger \Delta\|$
is the spectral norm.
We also define the following variant of $\mun$, more specific 
to homogeneous systems, 
\begin{equation*}
  \mup(F, x) \eqdef \mun(F|_{\mcT_x},x) 
  = \|F\| \left\| \Diff F(x)|_{T_x}^\dagger \Delta \right\|,
\end{equation*}
where~$T_x = \left\{ x \right\}^\perp$ and~$\mcT_x \eqdef x + T_x$. 
(The number~$\mun(F|_{\mcT_x},x)$ is well-defined after 
identifying~$\mcT_x$ with~$\R^n$.) 

The following inequality is a useful result from the Shub--Smale
theory~\cite[Lemma~2.1(b)]{Bez4}.

\begin{proposition}\label{prop:gamma-mu}
Let $F\in\Hd[q]$ and $x\in\R^{n+1}$ be a zero of $F$. Then
\begin{equation}\tag*{\qed}
 \gamma(F,x) \leq  \tfrac12 D^{\frac32} \mun(F, x) .
\end{equation}
\end{proposition}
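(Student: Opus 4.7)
The plan is to carry out the standard Shub--Smale factorization argument---precisely Lemma~2.1(b) of the B\'ezout~IV paper~\cite{Bez4} cited in the proposition. The key idea is to insert the diagonal matrix $\Delta$ to mediate between the two types of factors appearing in $\gamma(F,x)$ and $\mun(F,x)$.

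First, for each $k \ge 2$, I would factor
\[
  \tfrac{1}{k!}\, \Diff F(x)^\dagger\, \Diff^k F(x)
  \;=\; \big(\Diff F(x)^\dagger \Delta\big) \cdot \big(\tfrac{1}{k!}\, \Delta^{-1}\, \Diff^k F(x)\big),
\]
and apply submultiplicativity of the spectral norm. By the definition~\eqref{eq:munorm} of $\mun(F,x)$, the first factor contributes exactly $\mun(F,x)/\|F\|$.

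Second, to bound $\big\|\tfrac{1}{k!}\, \Delta^{-1}\, \Diff^k F(x)\big\|$, I would invoke the Weyl-type inequality: for $f$ homogeneous of degree $d$ and $\|x\|=1$,
\[
  \big\|\tfrac{1}{k!}\, \Diff^k f(x)\big\| \;\le\; \binom{d}{k}\, \|f\|,
\]
which is a classical consequence of the reproducing kernel identity $\langle f, (a\cdot X)^d\rangle = f(a)$ for the Weyl inner product, together with Cauchy--Schwarz. Applying this to each component $f_i$ of $F$, then using the Euclidean structure on the image $\R^q$ and the monotonicity of $d \mapsto \binom{d}{k}^2/d$ for $d \ge k$, one obtains
\[
  \big\|\tfrac{1}{k!}\, \Delta^{-1}\, \Diff^k F(x)\big\| \;\le\; \tfrac{\binom{D}{k}}{\sqrt{D}}\, \|F\|.
\]

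Combining the two factors yields $\big\|\tfrac{1}{k!}\, \Diff F(x)^\dagger\, \Diff^k F(x)\big\| \le \mun(F,x)\cdot \binom{D}{k}/\sqrt{D}$. Taking $(k{-}1)$-th roots and the supremum over $k\ge 2$ reduces to an elementary calculation showing the supremum is attained at $k=2$, where $\binom{D}{2}/\sqrt{D} = (D{-}1)\sqrt{D}/2 \le \tfrac{1}{2} D^{3/2}$. The only subtlety in the whole argument is the Weyl-type derivative inequality; everything else is algebra and a one-line optimization over $k$.
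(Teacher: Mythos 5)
The paper gives no proof for Proposition~\ref{prop:gamma-mu}; the $\qed$ in the statement indicates it is cited directly from Shub--Smale~\cite{Bez4}. Your reconstruction is the standard argument for exactly that lemma, and the two main ingredients are right: (a) the factorization through $\Delta\Delta^{-1}$, giving $\|\Diff F(x)^\dagger\Delta\| = \mun(F,x)/\|F\|$ for the first factor; and (b) the Weyl higher-derivative estimate $\frac{1}{k!}\|\Diff^k f(x)\| \le \binom{d}{k}\|f\|$ for $\|x\|=1$, applied componentwise and combined using the increasing monotonicity of $d\mapsto\binom{d}{k}^2/d$ for $d\ge k$. Both steps are correct.

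There is one unstated hypothesis in your final step. The claim that the supremum of
\[
\Big(\mun(F,x)\,\tfrac{\binom{D}{k}}{\sqrt{D}}\Big)^{\frac{1}{k-1}}
\]
over $k\ge 2$ is attained at $k=2$ is false in general: as $\mun\to 0$ the $(k{-}1)$-th root makes the terms with $k\ge 3$ dominate. The reduction to $k=2$ requires $\mun(F,x)\ge 1$, which is a genuine fact here (at a zero $x$ with $\|x\|=1$ one has $\Diff F(x)^\dagger\Delta = (\Delta^{-1}\Diff F(x))^\dagger$, and the row-wise bound $\|\frac{1}{\sqrt{d_i}}\Diff f_i(x)\|\le\|f_i\|$ forces $\sigma_{\min}(\Delta^{-1}\Diff F(x))\le\|F\|$, hence $\mun\ge1$), but it must be invoked explicitly: with $\mun\ge1$ one checks $\mun\binom{D}{k}/\sqrt{D}\le\big(\mun\binom{D}{2}/\sqrt{D}\big)^{k-1}$ for all $k\ge 2$, and then the $k=2$ bound $\binom{D}{2}/\sqrt{D} = \tfrac12(D-1)\sqrt{D}\le\tfrac12 D^{3/2}$ finishes the proof. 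With that patch your argument is complete and matches the intended reference.
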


\subsubsection{The \texorpdfstring{$\kappa$}{kappa} number}
\label{sec:number-kappa}

The numbers $\mun(F,x)$ and $\mup(F,x)$ measure the sensitivity  
of the zero $x$ of $F$ when $F$ is slightly perturbed. They are 
consequently useful at a zero, or near a zero, of the system~$F$. 
To deal with points in $\IS^n$ far away 
from the zeros of $F$, in particular to understand how much $F$ 
needs to be perturbed to make such a point a zero,  
a more global notion of conditioning is needed. 
The following is (modulo replacing 
$\mun$ by $\mup$) the condition measure introduced 
in~\cite{Cucker99b} (see also~\cite[\S19]{Condition} 
and~\cite{CKMW1,CKMW2}). 

\begin{definition}\label{def:kappa}
The {\em real homogeneous condition number} of $F\in\Hd[q]$ at 
$x\in\IS^n$ is 
$$
 \kappa(F,x) \eqdef 
 \left( \frac{1}{\mup(F,x)^{2}} + \frac{\|F(x)\|^2}{\|F\|^2} \right)^{-1/2},
$$
where we use the conventions $\infty^{-1} :=0$, $0^{-1}:=\infty$, 
and $\kappa(0,x) := \infty$. 
We further define 
$\kappa(F) \eqdef\displaystyle \max_{x\in\IS^n} \kappa(F,x)$.
\end{definition}

If~$q > n$ (that is, if the system~$F$ is overdetermined) 
then $\Diff F(x)|_{T_x}$ cannot be surjective 
and~$\kappa(F,x) = \frac{\|F\|}{\|F(x)\|}$ for all $x\in\IS^n$. Thus, 
$\kappa(F) < \infty$ if and only if $F$ has no zeros in $\IS^n$. 

The special case $F(x)=0$ is worth highlighting.
\begin{lemma}\label{le:mu-proj-norm}
For any $F\in\Hd[q]$ and $x\in \IS^n$, if~$F(x)=0$, then
\[ 
   \kappa(F,x) =\mup(F,x) = \mun(F,x). 
\]
\end{lemma}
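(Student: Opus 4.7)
The plan is to handle the two equalities separately. For the first one, $\kappa(F,x)=\mup(F,x)$ when $F(x)=0$, is immediate from the definition of $\kappa$: the summand $\|F(x)\|^2/\|F\|^2$ vanishes, so
\[
\kappa(F,x) = \left(\frac{1}{\mup(F,x)^2}\right)^{-1/2} = \mup(F,x).
\]
The substantive content is therefore the equality $\mup(F,x)=\mun(F,x)$, which amounts to showing
\[
\big\| \Diff F(x)^\dagger\, \Delta \big\| \;=\; \big\| \Diff F(x)|_{T_x}^{\dagger}\, \Delta \big\|
\]
for the spectral norms, since both quantities have the common factor $\|F\|$.

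The key observation is that Euler's identity applies: each component $f_i$ of $F$ is homogeneous of degree $d_i$, so $\Diff f_i(x)\cdot x = d_i f_i(x)$. Under the hypothesis $F(x)=0$ this gives $\Diff F(x)\cdot x = 0$, i.e.\ the line $\R x = T_x^\perp$ sits inside $\ker \Diff F(x)$. Using the orthogonal decomposition $\R^{n+1}=T_x \oplus \R x$, one then has the factorization
\[
\Diff F(x) \;=\; \Diff F(x)|_{T_x} \circ \pi_{T_x},
\]
where $\pi_{T_x}\colon\R^{n+1}\to T_x$ is the orthogonal projection.

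From here the proof is a short pseudoinverse computation. Writing $A \eqdef \Diff F(x)\colon \R^{n+1}\to\R^q$, $B\eqdef \Diff F(x)|_{T_x}\colon T_x\to\R^q$, and letting $\iota\colon T_x\hookrightarrow \R^{n+1}$ denote the inclusion (so $\iota=\pi_{T_x}^T$), the identity $A=B\pi_{T_x}$ yields $A A^T = B B^T$. When this map is invertible (the generic case; otherwise both norms equal $+\infty$ and there is nothing to prove), the explicit formula for the Moore--Penrose inverse gives
\[
A^\dagger = A^T (A A^T)^{-1} = \iota\, B^T (B B^T)^{-1} = \iota\, B^\dagger.
\]
Since $\iota$ is an isometric embedding, composing with $\iota$ preserves the spectral norm of any operator, so $\|A^\dagger\,\Delta\|=\|B^\dagger\,\Delta\|$, which is exactly what was needed.

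There is no real obstacle; the only thing worth being careful about is confirming that surjectivity of $B$ is equivalent to surjectivity of $A$ under the hypothesis $Ax=0$ (which is clear since $A$ and $B$ have the same image), so the ``$\infty$'' cases on both sides of the claimed equality agree, and the finite-rank computation above handles the remaining case.
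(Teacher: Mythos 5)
Your proof is correct and takes essentially the same route as the paper: both hinge on Euler's identity giving $\Diff F(x)\,x=0$ when $F(x)=0$, so that $\Diff F(x)^\dagger$ and $\Diff F(x)|_{T_x}^\dagger$ coincide up to the isometric inclusion $T_x\hookrightarrow\R^{n+1}$, whence the two $\mu$-numbers agree. The paper states this more tersely (the orthogonal complement of $\ker\Diff F(x)$, i.e.\ the image of $\Diff F(x)^\dagger$, is contained in $T_x$), whereas you make the identity $A^\dagger=\iota B^\dagger$ explicit via $A^\dagger=A^T(AA^T)^{-1}$; the mathematical content is the same.
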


\begin{proof}
The first equality follows from the definition of~$\kappa$. For 
the second, recall that the pseudo-inverse~$\Diff F(x)^\dagger$ 
is the inverse of~$\Diff F(x)$ restricted as a
map~$(\ker \Diff F(x))^\perp \to \R^q$. If~$F(x)=0$, then~$\Diff F(x)(x) = 0$, 
by homogeneity, therefore the orthogonal complement of the kernel 
of~$\Diff F(x)$ is included in~$T_x$. It follows 
that~$\Diff F(x)|_{T_x}^\dagger = \Diff F(x)^\dagger$ and
then $\mup(F,x)=\mun(F,x)$.
\end{proof}

For $x\in \IS^n$, let~$\Sigma_x$ be the set of all~$F\in \Hd[q]$ 
such that~$\kappa(F,x) = \infty$, that is~$F(x) = 0$ 
and~$\Diff F(x)|_{T_x}$ is not surjective.
The {\em set of ill-posed algebraic systems} is defined as 
$\Sigma \eqdef \bigcup_{x\in \IS^n}\Sigma_x$. 
It is the set of all~$F\in \Hd[q]$ such that~$\kappa(F) =\infty$.
We have 
\begin{equation}\label{def:Sigma} 
 \Sigma = \big\{  F\in \Hd[q] \mid \exists x\in\IS^n\ F(x) = 0 
\text{ and $\Diff F(x)|_{T_x}$ is not surjective} \big\} .
\end{equation}
The set~$\Sigma$ is semialgebraic 
and invariant under scaling of each of the $q$ components.
Note that in the case $q>n$, the set $\Sigma_x$ just consists of 
the $F\in \Hd[q]$ such $F(x) = 0$, and $\Sigma$ equals the set of 
$F\in \Hd[q]$ that possess a real zero in $\IS^n$.  

\begin{theorem}\label{thm:kappa-dist}
For any nonzero~$F\in \Hd[q]$ and any~$x\in\IS^n$,
\begin{equation*}
    \kappa(F,x) = \frac{\|F\|}{d(F,\Sigma_x)} 
    \quad\text{and}\quad \kappa(F) = \frac{\|F\|}{d(F, \Sigma)} ,
  \end{equation*}
where the distance $d(F,\cdot)$ is defined via the norm 
induced by the Weyl inner product. 
\end{theorem}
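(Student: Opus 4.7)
The plan is to prove the pointwise identity $\kappa(F,x) = \|F\|/d(F,\Sigma_x)$ by an Eckart--Young-type computation, and then deduce the global statement from $\Sigma = \bigcup_x \Sigma_x$. The key idea is to decompose $\Hd[q]$ into three mutually orthogonal blocks adapted to $x\in\IS^n$ that align exactly with the two conditions defining $\Sigma_x$.

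First I would exploit the ``reproducing'' structure of the Weyl inner product at $x$. For each degree $d$, set $\phi_{x,d}(y) \eqdef \langle y,x\rangle^d$; then $\langle f,\phi_{x,d}\rangle = f(x)$ and $\|\phi_{x,d}\| = 1$. For an orthonormal basis $v_1,\ldots,v_n$ of $T_x$, set $\psi_{x,v_j,d}(y) \eqdef \langle y,x\rangle^{d-1}\langle y,v_j\rangle$; then $\partial_{v_j} f(x) = d\,\langle f,\psi_{x,v_j,d}\rangle$, the $\psi_{x,v_j,d}$ are pairwise orthogonal and orthogonal to $\phi_{x,d}$, and $\|\psi_{x,v_j,d}\| = 1/\sqrt d$. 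These identities reduce to trivial monomial computations after using the orthogonal invariance of the Weyl product to take $x = e_0$ and $v_j = e_j$. Applying this componentwise yields an orthogonal decomposition of $\Hd[q]$ into an \emph{evaluation block} spanned by the $(0,\ldots,\phi_{x,d_i},\ldots,0)$, a \emph{tangential derivative block} spanned by the $(0,\ldots,\psi_{x,v_j,d_i},\ldots,0)$, and the orthogonal complement. Crucially, any polynomial in the complement vanishes at $x$ and has vanishing tangential derivative at $x$, so it contributes nothing to either $F(x)$ or $\Diff F(x)|_{T_x}$.

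The technical heart is a pair of isometry claims. The projection of $F$ onto the evaluation block corresponds to $F(x)\in\R^q$ (Weyl-to-Euclidean isometry), and the projection onto the tangential derivative block corresponds to the matrix $A(F) \eqdef \Delta^{-1}\Diff F(x)|_{T_x}$ (Weyl-to-Frobenius isometry). The Weyl-to-Frobenius claim is where everything fits together: the factor $1/\sqrt{d_i}$ in $\Delta^{-1}$ exactly cancels the $\|\psi_{x,v_j,d_i}\| = 1/\sqrt{d_i}$ weighting that appears when one expands $\|f_i\|^2$ in the $\psi$-basis. Since the defining conditions $F(x)=0$ and $A(F)$ rank-deficient live in two disjoint orthogonal blocks, the minimum-norm perturbation problem decouples completely.

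Decoupling gives, for any $H$ with $F + H \in \Sigma_x$, the lower bound $\|H\|^2 \geq \|F(x)\|^2 + \sigma_{\min}(A(F))^2$: the evaluation part forces $H^{(0)} = -F^{(0)}$ of squared norm $\|F(x)\|^2$, while the classical Eckart--Young theorem in Frobenius norm forces a tangential derivative perturbation of squared norm at least $\sigma_{\min}(A(F))^2$; this bound is attained by combining the two optimal perturbations. Hence
\[
   d(F,\Sigma_x)^2 = \|F(x)\|^2 + \sigma_{\min}(A(F))^2 = \|F(x)\|^2 + \|F\|^2/\mup(F,x)^2 = \|F\|^2/\kappa(F,x)^2,
\]
using $\mup(F,x) = \|F\|/\sigma_{\min}(A(F))$, which follows from $(\Delta^{-1}B)^\dagger = B^\dagger\Delta$ applied to $B = \Diff F(x)|_{T_x}$. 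The global statement drops out from $d(F,\Sigma) = \inf_{x\in\IS^n} d(F,\Sigma_x)$ and $\kappa(F) = \sup_{x\in\IS^n}\kappa(F,x)$, with compactness of $\IS^n$ and lower semicontinuity of $x\mapsto d(F,\Sigma_x)$ handling the trivial edge cases. I expect the main obstacle to be the clean verification of the Weyl-to-Frobenius isometry, which ties together the Weyl weighting, the normalization matrix $\Delta$, and the normalization of the $\psi_{x,v_j,d}$; the rest is bookkeeping plus standard Eckart--Young.
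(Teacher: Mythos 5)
Your proof is correct and takes essentially the same route as the paper, whose own proof simply defers to \cite[Prop.~19.6]{Condition} with the remark that the argument extends from $q=n$ to $q\leq n$: that reference likewise uses the orthogonal invariance of the Weyl product to reduce to $x=e_0$, the orthogonal decomposition of $\Hd[q]$ into the evaluation, first-derivative, and higher-order blocks, and the Eckart--Young theorem on the decoupled pieces. Your explicit spelling-out (including the normalization bookkeeping that makes $\Delta^{-1}\Diff F(x)|_{T_x}$ the correct Frobenius-isometric representative) and the uniform treatment of all $q$ including the overdetermined case are fine; this is the standard derivation, just written out rather than cited.
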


\begin{proof}
The assertion is obvious in the case $q>n$. 
We therefore assume $q\le n$. The special case $q=n$ is 
Prop.~19.6 in~\cite{Condition}. 
One can check that the same proof works in the case $q\le n$. 
\end{proof}

\begin{corollary}\label{coro:kappa-geq-1}
For any~$F\in\Hd[q]$ and any~$x\in\IS^n$, $\kappa(F,x) \geq 1$.
\end{corollary}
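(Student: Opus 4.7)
My plan is to derive the inequality directly from the Eckart--Young-type identity established in Theorem~\ref{thm:kappa-dist}, which reads
$$
\kappa(F,x) \;=\; \frac{\|F\|}{d(F,\Sigma_x)}.
$$
With this identity in hand, the claim $\kappa(F,x)\geq 1$ is equivalent to the distance bound $d(F,\Sigma_x)\leq \|F\|$. The case $F=0$ is disposed of by the convention $\kappa(0,x)=\infty$, so one may assume $F\neq 0$ from the outset.

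The key observation is that the zero system $0\in\Hd[q]$ always lies in $\Sigma_x$ (assuming $q\geq 1$, which is the only nontrivial case): trivially $0(x)=0$, and the restricted derivative $\Diff 0(x)|_{T_x}=0\colon T_x\to\R^q$ is not surjective. Therefore the defining description \eqref{def:Sigma} places $0\in\Sigma_x$. It follows that
$$
d(F,\Sigma_x)\;\leq\; \|F-0\|\;=\;\|F\|,
$$
and substituting this into Theorem~\ref{thm:kappa-dist} yields $\kappa(F,x)\geq 1$.

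There is no real obstacle here: once one unpacks the definition of $\Sigma_x$ and realizes that it contains the origin, the whole corollary reduces to the trivial remark that the distance from any point to a set containing the origin is at most the norm of that point. Conceptually, the content of the inequality is that the Weyl unit ball around any nonzero $F$ always touches the ill-posed locus $\Sigma_x$, so no system can be more than $\|F\|$-far from being ill-conditioned at a prescribed test point $x\in\IS^n$. This matches the intuition that $\kappa(F,x)$ is a relative quantity and so is naturally bounded below by $1$, in parallel with the classical lower bound $\|A\|\|A^{-1}\|\geq 1$ for linear condition numbers.
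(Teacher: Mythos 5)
Your proof is correct and takes essentially the same route as the paper's: the paper also observes that $0\in\Sigma_x$ and concludes directly from Theorem~\ref{thm:kappa-dist} that $\kappa(F,x)=\|F\|/d(F,\Sigma_x)\geq 1$. Your additional unpacking (checking $\Diff 0(x)|_{T_x}$ is not surjective, noting the $F=0$ convention) is just a more verbose presentation of the same one-line argument.
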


\begin{proof}
Since~$0\in\Sigma_x$, this follows directly from 
Theorem~\ref{thm:kappa-dist}.
\end{proof}

\begin{remark}\label{rem:2kappas}
Proposition~6.1 in~\cite{CuckerKrickShub_2018} shows that for 
the condition 
number $\kappan(F,x)$ defined as in Definition~\ref{def:kappa} above, 
but with $\mup$ replaced by $\mun$, we have 
$$
      \frac{\|F\|}{\sqrt{2}\,d(F,\Sigma_x)} \leq\kappan(F,x)
      \leq\frac{\|F\|}{d(F,\Sigma_x)}. 
$$
This shows that there is no essential difference between 
$\kappa$ and $\kappan$: they are the same up to 
a factor of at most $\sqrt{2}$. 
It also shows that, for all $x\in\IS^n$, $\mun(F,x)\leq \mup(F,x)$. 
So the bound in Proposition~\ref{prop:gamma-mu} holds 
with $\mup(F,x)$ as well.

However, a bound on $\mup$ in terms of $\mun$ is not possible. 
Indeed, take $f_1(x,y,z) := x+y$ and $f_2(x,y,z) := y^2+z^2+xy$. 
Further, take $e_0:=(1,0,0)$. Then
$$
 \Diff F(e_0) = \begin{bmatrix}1 & 1 & 0 \\ 0 & 1 & 0 \end{bmatrix}, \quad
 \Diff F(e_0)_{\mid e_0^\perp} = \begin{bmatrix}1 &  0 \\ 1 & 0 
  \end{bmatrix},
$$
where the left-hand matrix is of full rank, but the right-hand matrix 
is rank deficient.
Hence $\mun(F,e_0) < \infty$, but 
$\mup(F,e_0) = \infty$. 
We introduced $\mup$ in our development because it allows for 
sharper statements and easier proofs.
\end{remark}

\begin{proposition}\label{prop:kappa-lip}
For~$F\in\Hd[q]$, the map 
$\IS^n \to\R,\, x\mapsto \kappa(F, x)^{-1}$ 
is $D$-Lipschitz continuous with respect to the Riemannian 
metric on $\IS^n$. 
\end{proposition}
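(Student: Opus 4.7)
The plan is to use Theorem~\ref{thm:kappa-dist} to rewrite $\kappa(F,x)^{-1} = d(F,\Sigma_x)/\|F\|$, reducing the claim to proving that $x \mapsto d(F,\Sigma_x)$ is $(D\|F\|)$-Lipschitz on $\IS^n$. Fix $x,y\in \IS^n$ and set $\theta \eqdef d_{\IS^n}(x,y)$. I would take $u \in SO(n+1)$ to be the rotation fixing $\mathrm{span}(x,y)^\perp$ and sending $x$ to $y$; by construction $u$ rotates $\mathrm{span}(x,y)$ by angle $\theta$. The map $H \mapsto H \circ u^{-1}$ is an isometry of $\Hd[q]$ by the orthogonal invariance of the Weyl inner product, and it carries $\Sigma_x$ bijectively onto $\Sigma_y$: if $H(x) = 0$ then $(H\circ u^{-1})(y) = H(x) = 0$, and since $u^{-1}$ is an isometric isomorphism $T_y \to T_x$, the rank deficiency of $\Diff H(x)|_{T_x}$ transfers to that of $\Diff(H\circ u^{-1})(y)|_{T_y}$.

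Next, I would pick a closest point $G$ to $F$ in $\Sigma_x$; such a $G$ exists because $\Sigma_x$ is closed and nonempty (it contains $0$, so the search can be restricted to the compact set $\Sigma_x \cap \{H : \|H\| \le 2\|F\|\}$). Observe that $\Sigma_x = \bigcup_{v \in \IS^{q-1}} L_{x,v}$, where
\begin{equation*}
   L_{x,v} \eqdef \bigl\{ H\in \Hd[q] \mid H(x) = 0 \text{ and } v^T \Diff H(x)|_{T_x} = 0 \bigr\}
\end{equation*}
is a linear subspace of $\Hd[q]$. Since $G \in L_{x,v_0}$ for some $v_0$ and $G$ minimizes $\|F - \cdot\|$ on the larger set $\Sigma_x$, it must also minimize on $L_{x,v_0}$; hence $G$ is the orthogonal projection of $F$ onto $L_{x,v_0}$. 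In particular $F - G \perp G$, yielding $\|G\|^2 + \|F-G\|^2 = \|F\|^2$, so $\|G\| \le \|F\|$. Combined with $G \circ u^{-1} \in \Sigma_y$ and the triangle inequality,
\begin{equation*}
   d(F,\Sigma_y) - d(F,\Sigma_x) \le \|G\circ u^{-1} - G\|.
\end{equation*}
Swapping the roles of $x$ and $y$ afterwards, the desired Lipschitz bound will follow once $\|G\circ u^{-1} - G\| \le D\theta\|G\|$ is established.

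This last bound is the main technical step. Writing $u_t = \exp(tX)$ for the one-parameter group of 2-plane rotations with $u_\theta = u$ and unit angular speed, the induced maps $\rho_t\colon H \mapsto H\circ u_t^{-1}$ form an orthogonal one-parameter subgroup of $O(\Hd[q])$ with skew-symmetric generator $A$. I claim $\|A\|\le D$: since $A$ respects the orthogonal decomposition $\Hd[q] = \bigoplus_{j=1}^q \R[X_0,\dotsc,X_n]_{d_j}$, it suffices to bound $A$ on each summand $S^{d_j}\R^{n+1}$; under the relevant $SO(2)$-action, this space decomposes orthogonally into complex weight spaces with weights $\{-d_j, -d_j+2, \dotsc, d_j\}$, so the complexified generator has eigenvalues of modulus at most $d_j \le D$. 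Integration then yields $\|\rho_\theta G - G\| \le \theta\|A\|\|G\| \le D\theta \|G\|$, completing the argument. The only nontrivial ingredient is this representation-theoretic bound $\|A\|\le D$; everything else is a straightforward application of Theorem~\ref{thm:kappa-dist} and the rotational equivariance of the family $\{\Sigma_x\}_{x\in\IS^n}$.
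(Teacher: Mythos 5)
Your proposal is correct, and the two core ideas — isometric transport of $\Sigma_x$ onto $\Sigma_y$ by the rotation $u$, plus a bound on how far a system moves under $u$ — are the same as in the paper's proof. The route differs in one respect. The paper invokes the universal $1$-Lipschitzness of $g\mapsto d(g,\Sigma_x)$: writing $d(F,\Sigma_y)=d(F\circ u,\Sigma_x)$ gives $|d(F,\Sigma_x)-d(F,\Sigma_y)|\le\|F-F\circ u\|$, so the estimate $\|H-H\circ u\|\le D\,\theta\,\|H\|$ (the paper's Lemma~\ref{lem:lipshitz-unitary-action}) is applied directly to $F$. You instead transport a closest point $G\in\Sigma_x$ to $\Sigma_y$, which forces you to control $\|G\|$: you prove $\|G\|\le\|F\|$ via the observation that $\Sigma_x$ is a union of linear subspaces $L_{x,v}$ and that the minimizer is an orthogonal projection of $F$ onto one of them. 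That step is correct, and it is genuinely needed in your route (without it, $0\in\Sigma_x$ only gives $\|G\|\le 2\|F\|$ and a $2D$-Lipschitz constant), but it is rendered unnecessary by the paper's one-sided Lipschitz trick. For the key bound $\|H-H\circ u\|\le D\,\theta\,\|H\|$, your generator argument ($\|A\|\le D$ via the complex weight-space decomposition of each homogeneous summand under the $SO(2)$-action on the $2$-plane $\mathrm{span}(x,y)$) is the infinitesimal version of what the paper does: the paper diagonalizes $u$ itself over $\C$ as $\mathrm{diag}(e^{i\theta},e^{-i\theta},1,\dots)$ and bounds $|1-e^{i(a_0-a_1)\theta}|\le D\,\theta$ on the Weyl-orthogonal monomial basis. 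Both are valid; your version is slightly longer but mathematically sound throughout.
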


\begin{proof}
Let~$x,y\in \IS^n$. Let~$u\in \scO(n+1)$ be the rotation that 
maps~$x$ to~$y$ and that is the identity on~$\left\{ x,y \right\}^\perp$.
By the invariance of Weyl's inner product under the action 
of $\scO(n+1)$,
\[ 
    d(F, \Sigma_{y}) = d(F\circ u, \Sigma_x). 
\]
Since the function~$g\mapsto d(g, \Sigma_x)$ is $1$-Lipschitz,
we obtain with Theorem~\ref{thm:kappa-dist} that 
\begin{align*}
  \|F\| \left| \frac{1}{\kappa(F,x)} - \frac{1}{\kappa(F, y)} \right| 
  &= | d(F,\Sigma_x) - d(F,\Sigma_y)| \\
  &= | d(F,\Sigma_x) - d(F \circ u,\Sigma_x)| 
  \leq \| F  - F\circ u\|. 
\end{align*}
We conclude the proof with the next lemma.
\end{proof}

\begin{lemma}\label{lem:lipshitz-unitary-action}
For any~$F \in \Hd[q]$ and any~$x,y\in \IS^n$,
\[ 
     \|F - F\circ u\| \leq D\|F\| d_{\IS}(x,y), 
\]
where~$u\in \scO(n+1)$ is the unique rotation that maps~$x$ 
to~$y$ and leaves invariant~$\left\{ x,y \right\}^\perp$. 
\end{lemma}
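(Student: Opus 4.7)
The plan is to parametrize $u$ by a one-parameter subgroup of $\scO(n+1)$ and to integrate. Set $\theta := d_{\IS}(x,y)$ and, for $t\in[0,1]$, let $u_t\in\scO(n+1)$ be the rotation by angle $t\theta$ in the plane $\mop{span}\{x,y\}$ that fixes $\{x,y\}^\perp$ pointwise, so that $u_0=\mop{id}$, $u_1=u$ and the $u_t$'s commute. Writing $u_t=\exp(tA)$, the skew-symmetric generator $A$ is supported on $\mop{span}\{x,y\}$ and has spectral norm $\theta$. Setting $\phi(t):=F\circ u_t$, the identity $u_{s+t}=u_s u_t$ gives $\phi(t+s)=\phi(s)\circ u_t$; differentiating in $s$ at $0$ yields $\phi'(t)=\phi'(0)\circ u_t$, and the $\scO(n+1)$-invariance of the Weyl inner product (stated in the paper) then implies $\|\phi'(t)\|=\|\phi'(0)\|$ for every $t$. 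Hence
\[
  \|F-F\circ u\| = \|\phi(0)-\phi(1)\| \le \int_0^1\|\phi'(t)\|\,\ud t = \|\phi'(0)\|,
\]
so it suffices to prove $\|\phi'(0)\|\le D\theta\|F\|$.

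Componentwise, $\phi'(0)$ is the $q$-tuple $\bigl(\Diff f_i(y)\cdot Ay\bigr)_{i=1}^q$, so from $\|F\|^2=\sum_i\|f_i\|^2$ the lemma reduces to the following key claim: for every real homogeneous polynomial $f$ of degree $d$ in $X_0,\dots,X_n$, one has $\|\Diff f(y)\cdot Ay\|\le d\theta\|f\|$ in the Weyl norm. Using $\scO(n+1)$-invariance again, I may rotate the homogeneous coordinates so that $A$ acts as $\theta(y_1\partial_0-y_0\partial_1)$ on $(X_0,X_1)$ and trivially on $(X_2,\dots,X_n)$. The claim then reduces to the operator-norm bound $\|Tf\|\le d\|f\|$ in the Weyl norm, where $T:=y_1\partial_0-y_0\partial_1$ and $f$ ranges over degree-$d$ homogeneous polynomials.

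To establish this bound, I would group monomials by their ``tail'' $y^\beta:=y_2^{\beta_2}\cdots y_n^{\beta_n}$: each subspace $V_\beta:=\mop{span}\{y_0^{k-i}y_1^i y^\beta \mid 0\le i\le k\}$ with $k:=d-|\beta|$ is $T$-invariant, and distinct $V_\beta$'s are mutually Weyl-orthogonal. Using $\|y_0^{k-i}y_1^i y^\beta\|^2=(k-i)!\,i!\,\beta!/d!$, a short calculation shows that, in the Weyl-orthonormal basis of $V_\beta$, $T|_{V_\beta}$ is the skew-symmetric tridiagonal matrix with off-diagonal entries $\pm\sqrt{i(k-i+1)}$ for $i=1,\dots,k$. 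Its spectrum is $\{mi \mid m\in\{-k,-k+2,\dots,k\}\}$, so its operator norm equals exactly $k\le d$. Taking the supremum over $\beta$ yields $\|T\|_{\mathrm{op}}\le d$, which gives the claim and completes the proof.

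The main obstacle is this last step, the operator-norm bound for $T$: pinning down the constant $d$ (as opposed to some weaker multiple) requires the explicit spectral analysis of the tridiagonal matrices $T|_{V_\beta}$. Everything else is a routine chain-rule-plus-invariance argument.
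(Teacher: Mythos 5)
Your proof is correct, and it takes a genuinely different route from the paper's. The paper reduces to a single component $f$, passes to $\C$, and uses unitary invariance of the Weyl Hermitian product to diagonalize $u$ as $\mathrm{diag}(e^{i\theta},e^{-i\theta},1,\dots)$; then $f - f\circ u = \sum_\bfa (1-e^{i(a_0-a_1)\theta}) c_\bfa X^\bfa$, and the bound drops out from $|1-e^{i\varphi}|\le|\varphi|$ termwise. You instead integrate along the one-parameter subgroup $u_t$, use invariance to freeze $\|\phi'(t)\|$, and reduce to bounding the Weyl operator norm of the infinitesimal generator $T=X_1\partial_0-X_0\partial_1$ on degree-$d$ forms. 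Your spectral analysis of $T$ via the $T$-invariant blocks $V_\beta$ and the tridiagonal matrices with entries $\pm\sqrt{i(k-i+1)}$ is correct, and the claimed spectrum $\{im: m=-k,-k+2,\dots,k\}$ indeed gives operator norm $k\le d$.

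Two comments. First, a notational glitch: in the passage ``$\|\Diff f_i(y)\cdot Ay\|\le d\theta\|f\|$ in the Weyl norm,'' the symbol $y$ is being reused as the name of the polynomial variable, clashing with the point $y\in\IS^n$ in the statement; the meaning (the Weyl norm of the polynomial $X\mapsto \Diff f(X)\cdot AX$) is clear, but the clash is worth fixing. Second, the ``main obstacle'' you flag — pinning down $\|T\|_{\mathrm{op}}=k$ — evaporates if you complexify, exactly as the paper does: writing $Z=X_0+iX_1$, one finds $T=-i(Z\partial_Z-\bar Z\partial_{\bar Z})$, which is visibly diagonal on the basis $Z^a\bar Z^b$ with eigenvalues $-i(a-b)$, so the spectral bound is immediate without touching the tridiagonal matrix. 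In that sense, the paper's diagonalization of the finite rotation $u$ and your diagonalization of the generator $T$ are two faces of the same trick, with the paper's being the shorter of the two.
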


\begin{proof}
We first notice that
\[ 
    \|F - F\circ u\|^2 = \sum_i \|f_i - f_i \circ u\|^2 
\]
so it is enough to prove the claim when~$q=1$.
  
We prove a corresponding, more general 
statement over~$\C$ and, to this end, we consider the
space of complex homogeneous coefficients of degree~$d$ 
endowed with
Weyl's Hermitian inner product. The latter is invariant under the
action of the unitary group $\scU(n+1)$, therefore, without loss of
generality, we may assume that the matrix of~$u$ is the diagonal
matrix $\mop{diag}(e^{i\theta}, e^{-i\theta}, 1, \dotsc)$,
where~$\theta = d_\IS(x,y)$. We write $f =\sum_{|\bfa|= d}c_{\bfa}
X^\bfa$ and then 
\[ 
    f - f\circ u = \sum_{|\bfa|= d} \left( 1 - e^{i(a_0-a_1) \theta} \right) 
    c_{\bfa} X^\bfa. 
\] 
Since 
\[ 
   \left| 1 - e^{i(a_0-a_1) \theta} \right| \leq 
  \left| (a_0-a_1)\theta \right| \leq D \, \theta, 
\] 
we obtain the claim.
\end{proof}

\subsubsection{Condition number of homogeneous semialgebraic 
systems}\label{sec:kappa*}

We consider (closed) {\em homogeneous semialgebraic systems}, i.e., 
systems of the form 
\begin{equation}\label{eq:hsa}
  f_1(x)= 0,\ldots, f_q(x) = 0 \text{ and } g_1(x) \geq 0, \ldots, 
 g_s(x)\geq 0,  
\end{equation}
where the $f_i$ and the $g_j$ are homogeneous polynomials in
$\R[X_0,X_1,\ldots,X_n]$. The system is an element~$(F,G) \in
\Hd[q\,;\,s]$.
The set of solutions $x\in\IS^n$ of system~\eqref{eq:hsa}, 
which we will denote by $S(F,G)$, is 
a {\em spherical basic semialgebraic set}. Needless to say, we do
allow for the possibility of having $q=0$ or $s=0$. This corresponds 
with systems having only inequalities (resp. only equalities).

To a homogeneous semialgebraic system $(F,G)$ we associate 
a condition number~$\kappasa(F, G)$ as follows. For a 
subtuple~$L =(g_{j_1},\dotsc,g_{j_\ell})$ of~$G$, let~$F^L$ denote 
the system obtained from $F$ by appending the polynomials from 
$L$, that is,
\[ 
   F^L \eqdef \left(f_1,\dotsc,f_q,g_{j_1},\dotsc,g_{j_\ell}\right) 
   \in \Hd[q+\ell] 
\] 
(where now $\bfd$ denotes the appropriate degree 
pattern in $\N^{q+\ell}$). 
Abusing notation, 
we will frequently use set notations~$L\subseteq G$ or~$g\in G$ 
to denote subtuples or coefficients of~$G$.

\begin{definition}\label{def:kappaL}
Let $q\leq n+1$, $(F, G)\in\Hd[q\,;\,s]$.
The {\em condition number of the homogeneous semialgebraic 
system $(F,G)$} is defined as 
\begin{equation*}\label{eq:kappaL}
   \kappasa(F,G) \eqdef \max_{\substack{L\subseteq G\\ 
   q+|L| \leq n+1}} \kappa\big(F^L\big).
\end{equation*}
We define $\Sigma_*$ as the set of all~$(F,G)\in \Hd[q\,;\,s]$ such 
that~$\kappasa(F,G) = \infty$.
\end{definition}

Clearly, $\Sigma_*$ is semialgebraic and invariant under scaling 
of the $q+s$ components.

\begin{theorem}\label{thm:kappasa-dist}
For any nonzero~$\psi = (F, G)\in \Hd[q\,;\,s]$,
\begin{equation*}
    \kappasa(\psi)\leq \frac{\|\psi\|}{d(\psi, \Sigma_*)}.
\end{equation*}
\end{theorem}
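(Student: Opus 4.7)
The plan is to reduce the claim to Theorem~\ref{thm:kappa-dist} applied to the subtuple $F^L$ that realizes the maximum in the definition of $\kappasa$. The key observation is that a perturbation of $F^L$ pushing it into the ill-posed locus $\Sigma$ of $\Hd[q+|L|]$ lifts, by extending by zero on the other components of $G$, to a norm-preserving perturbation of $\psi$ pushing it into $\Sigma_*$.

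\medskip

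\noindent\textbf{Step 1.} Choose $L\subseteq G$ with $q+|L|\leq n+1$ attaining the maximum, so that $\kappasa(\psi)=\kappa(F^L)$. Let $\Sigma\subseteq \Hd[q+|L|]$ denote the corresponding set of ill-posed algebraic systems. By Theorem~\ref{thm:kappa-dist},
\[
  \kappa(F^L) \;=\; \frac{\|F^L\|}{d(F^L,\Sigma)}.
\]
Since the Weyl inner product on $\Hd[q\,;\,s]$ is the orthogonal sum of Weyl inner products on the components, and $F^L$ is obtained from $\psi$ by dropping the components of $G$ not in $L$, we immediately get $\|F^L\|\leq \|\psi\|$.

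\medskip

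\noindent\textbf{Step 2.} I claim $d(\psi,\Sigma_*)\leq d(F^L,\Sigma)$. Fix $\epsilon>0$ and pick a perturbation $\delta\in \Hd[q+|L|]$ with $F^L+\delta\in\Sigma$ and
\[
  \|\delta\|<d(F^L,\Sigma)+\epsilon.
\]
Write $\delta=(\delta_F,\delta_L)$ according to the splitting $\Hd[q+|L|]=\Hd[q]\oplus\Hd[|L|]$. Define a perturbation $\Delta\psi\in \Hd[q\,;\,s]$ of $\psi$ by $\Delta\psi=(\delta_F,\widetilde{\delta_L})$, where $\widetilde{\delta_L}$ agrees with $\delta_L$ on the components of $G$ belonging to $L$ and is zero on all other components. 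Because the Weyl norm on $\Hd[q\,;\,s]$ is the orthogonal sum over components, $\|\Delta\psi\|=\|\delta\|$. Moreover, denoting $\psi'=\psi+\Delta\psi=(F',G')$, the subtuple $(F')^L$ (using the same index set $L$) equals $F^L+\delta\in\Sigma$, so $\kappa\bigl((F')^L\bigr)=\infty$. Hence $\kappasa(\psi')=\infty$, i.e.\ $\psi'\in\Sigma_*$, and
\[
  d(\psi,\Sigma_*)\;\leq\;\|\Delta\psi\|\;=\;\|\delta\|\;<\;d(F^L,\Sigma)+\epsilon.
\]
Letting $\epsilon\to 0$ proves the claim.

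\medskip

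\noindent\textbf{Step 3.} Combining Steps 1 and 2,
\[
  \kappasa(\psi)\;=\;\kappa(F^L)\;=\;\frac{\|F^L\|}{d(F^L,\Sigma)}\;\leq\;\frac{\|\psi\|}{d(\psi,\Sigma_*)}.
\]

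The only subtlety is the norm-compatibility in Step 2, i.e.\ that extending a perturbation of $F^L$ by zero on the remaining components of $G$ preserves the Weyl norm; this is built into the definition of the Weyl inner product as an orthogonal sum. The argument does not require $L$ to actually be the maximizer (one can run it for every admissible $L$ and take the max), but choosing the maximizer shortens the exposition.
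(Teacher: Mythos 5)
Your proof is correct and follows essentially the same route as the paper's: both reduce to Theorem~\ref{thm:kappa-dist} applied componentwise via the projection $p_L\colon (F,G)\mapsto F^L$, using the orthogonality of the Weyl inner product across components to get $\|F^L\|\le\|\psi\|$ and $d(\psi,\Sigma_*)\le d(F^L,\Sigma_L)$. The paper states the latter inequality tersely as a consequence of $\Sigma_* = \bigcup_L p_L^{-1}(\Sigma_L)$, whereas your Step~2 spells out the zero-extension argument behind it; the content is the same.
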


\begin{proof}
For a subset~$L$ of the indices~$\left\{ 1,\dotsc,s \right\}$, let 
$p_L\colon\Hd[q\,;\,s]\to \Hd[q+|L|]$ be the projection~$(F,G) \mapsto F^L$. 
Clearly $\Sigma_* = \cup_L p_L^{-1}(\Sigma_L)$, 
where~$\Sigma_L$ is the set of ill-posed data  
in the appropriate space~$\Hd[q+|L|]$. 
In particular~$d(\psi, \Sigma_*) \leq d(p_L(\psi), \Sigma_L)$. 
Then, by Theorem~\ref{thm:kappa-dist}, 
\begin{equation*}
  \kappasa(F,G) = \max_{\substack{L\subseteq G\\ q+|L| \leq n+1}}  
  \frac{\|F^L\|}{d(F^L, \Sigma_L)} \leq 
  \frac{\|\psi\|}{d(\psi, \Sigma_*)}. \qedhere
\end{equation*}
\end{proof}

Note that we do not define condition for the very overdetermined 
case $q>n+1$, but it is important to include 
the overdetermined case $q+|L| = n+1$ in the definition 
of $\kappasa(F,G)$. To see why, consider the case of three 
polynomials $f,g_1,g_2$ around a point $x\in\IS^2$ as 
in Figure~\ref{fig:3curv}. 

\begin{figure}[t]\centering
\begin{tikzpicture}[scale=1.8]
  \begin{scope}
    \clip (0,0.4) rectangle (4,3.4);
    \begin{scope}
      \clip (4,-1.8) circle(5);
      \clip (4,4.5) circle(4);
      \fill[gray!20!white] (4,-1.8) circle(5);
    \end{scope}
    \draw[thick] (4,-1.8) circle(5);
    \draw[thick] (4,4.5) circle(4);
    \draw[thick] (4.79,1.5) circle(4);
  \end{scope}
 \path (1.5,3.3) node{\small $f=0$};
 \path (0.53,3.3) node{\small $g_1=0$};
 \path (-0.3,1.3) node{\small $g_2=0$};
 \path (0.67,2.08) node{\small $x$};
\end{tikzpicture}
\caption{The shaded area is where $g_1\geq 0$ and 
$g_2\geq 0$. Locally, the only solution point is the 
intersection $\{x\}$ of $f=g_1=g_2=0$.}
\label{fig:3curv}
\end{figure}
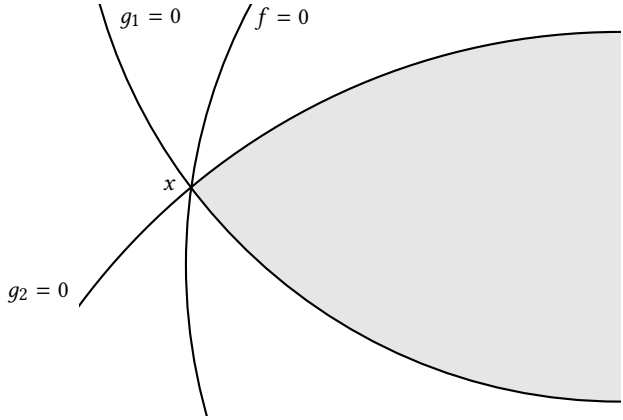
This system is ill-posed as arbitrarily small perturbations of 
$(f,G)$ may result in an empty intersection around $x$, 
and hence, a different topology of $S(f,G)$. But none of the 
condition numbers $\kappa(f,g_1)$ and $\kappa(f,g_2)$ capture 
this fact as $x$ is a well-posed zero for both systems.  
The following lemma is related to this matter.

\begin{lemma}\label{lem:Sfl-empty}
Let $(F, G)$ be a homogeneous semialgebraic system with 
$\kappasa(F, G) <\infty$. For any~$L\subseteq G$ such 
that~$|L| \geq n+1-q$, the set~$S(F^L,\varnothing)$ is empty. 
\end{lemma}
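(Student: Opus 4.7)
The plan is as follows. I would first reduce to the boundary case where $q+|L|=n+1$, using the monotonicity of the zero set under adding equations. Concretely, since $|L|\ge n+1-q$, I can pick any subtuple $L'\subseteq L$ with $|L'|=n+1-q$, and then $S(F^L,\varnothing)\subseteq S(F^{L'},\varnothing)$ because $F^{L'}$ is part of $F^L$. So it suffices to prove $S(F^{L'},\varnothing)=\varnothing$ in the exactly-critical case $q+|L'|=n+1$.

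Next I would invoke the hypothesis. Since $q+|L'|\le n+1$, the subsystem $F^{L'}$ is one of those considered in the maximum defining $\kappasa(F,G)$ (Definition~\ref{def:kappaL}). Therefore $\kappa(F^{L'})\le \kappasa(F,G)<\infty$, which by definition of $\kappa$ as a maximum over $x\in\IS^n$ means $\kappa(F^{L'},x)<\infty$ for every $x\in\IS^n$.

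The crux is then the dimension count. Suppose for contradiction that there exists $x\in S(F^{L'},\varnothing)$, so $F^{L'}(x)=0$. The restricted derivative
\[
 \Diff F^{L'}(x)\big|_{T_x}\colon T_x\longrightarrow \R^{q+|L'|}=\R^{n+1}
\]
has domain of dimension $\dim T_x=n$ and codomain of dimension $n+1$, so it cannot be surjective. Hence $\mup(F^{L'},x)=\infty$. Combined with $F^{L'}(x)=0$, the definition of $\kappa$ (with the conventions $0^{-1}=\infty$ and $\infty^{-1}=0$) yields $\kappa(F^{L'},x)=\infty$, contradicting what we derived above. Therefore $S(F^{L'},\varnothing)=\varnothing$, and the inclusion from the reduction step gives $S(F^L,\varnothing)=\varnothing$.

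The main (and only) subtle point is making sure that the boundary case $q+|L'|=n+1$ is indeed included in the definition of $\kappasa$; this is exactly why the definition admits the seemingly overdetermined case $q+|L|=n+1$, as stressed in the discussion following Theorem~\ref{thm:kappasa-dist}. Everything else is a routine dimension argument and the monotonicity of solution sets under adding equations.
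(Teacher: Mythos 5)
Your proof is correct and is essentially the same as the paper's: both select a subtuple $L'\subseteq L$ with $|L'|=n+1-q$, note that $\Diff F^{L'}(x)|_{T_x}$ cannot be surjective for dimensional reasons so that $\kappa(F^{L'},x)$ reduces to $\|F^{L'}\|/\|F^{L'}(x)\|$, and conclude from $\kappa(F^{L'})\leq\kappasa(F,G)<\infty$ that $F^{L'}$ has no zero on $\IS^n$. The only cosmetic difference is that you phrase the final step as a proof by contradiction while the paper states it directly.
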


\begin{proof}
We choose $L' \subseteq L$ such that~$|L'| = n+1-q$. Because 
of the dimensions involved, $\Diff F^{L'}(x)|_{T_x}$ cannot be surjective,
thus~$\kappa(F^{L'},x) = \|F^{L'}\|/\|F^{L'}(x)\|$ for any~$x\in\IS^n$.
Moreover $\kappa(F^{L'}) \leq \kappasa(F, G) < \infty$, by definition
of~$\kappasa$. Therefore~$F^{L'}$ has no zero on~$\IS$.
In particular $S(F^L,\varnothing)$, the zero set of~$F^L$, is empty.
\end{proof}

We elaborate on Theorem~\ref{thm:tau-gamma-local}, 
relating~$\gamma$
and~$\tau$, to obtain the following result that relates~$\kappasa$
and~$\tau$.  It gives a computational handle on~$\tau$ which is
otherwise hard to get.

\begin{theorem}\label{thm:tau-kappa}
For any homogeneous semialgebraic system $(F,G)$ defining 
a semialgebraic set~$S \eqdef S(F, G) \subseteq \IS^n$, if 
$\kappasa(F,G)<\infty$, then
\begin{equation*}
    D^{\frac32}\, \tau(S)\, \kappasa(F, G) \geq \tfrac{1}{7}.
\end{equation*}
\end{theorem}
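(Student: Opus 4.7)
The plan is to apply Corollary~\ref{thm:tau-boundaries} to decompose $\tau(S)$ into reaches of lower-dimensional spherical zero sets of subsystems $F^L$, and then to bound each of those via Theorem~\ref{thm:tau-gamma-local} combined with Proposition~\ref{prop:gamma-mu} and Lemma~\ref{le:mu-proj-norm}. I would view $S\eqdef S(F,G)$ as the intersection $W\cap V_1\cap\dotsb\cap V_s$ of closed subsets of $\R^{n+1}$, with $W\eqdef \IS^n\cap\{F=0\}$ and $V_j\eqdef \{g_j\geq 0\}$. Corollary~\ref{thm:tau-boundaries} then gives $\tau(S)\geq \min_I \tau(A_I)$ where $A_I\eqdef W\cap\bigcap_{i\in I}\partial V_i$ and $I$ ranges over subsets of $\{1,\dotsc,s\}$. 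Since $\partial V_i\subseteq \{g_i=0\}$, each $A_I$ lies in the spherical zero set $S_{L_I}\eqdef \IS^n\cap\{F^{L_I}=0\}$ with $L_I\eqdef (g_i)_{i\in I}$; for $|I|\geq n+1-q$, Lemma~\ref{lem:Sfl-empty} forces $S_{L_I}=\varnothing$ and hence $\tau(A_I)=+\infty$, so only the cases $|I|\leq n-q$ will matter.

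Under the hypothesis $\kappasa(F,G)<\infty$, I would argue that in fact $A_I = S_{L_I}$ for $|I|\leq n-q$. Were some $p\in S_{L_I}$ to fail to lie in $\partial V_i$ for an index $i\in I$, then $p$ would be interior to $V_i$, hence a local minimum of $g_i$ with value zero, forcing $\nabla g_i(p)=0$. But then the row of the $(q+|L_I|)\times n$ matrix $\Diff F^{L_I}(p)|_{T_p}$ corresponding to $g_i$ would vanish, so this matrix could not be surjective; hence $\kappa(F^{L_I},p)=\infty$, contradicting $\kappa(F^{L_I})\leq \kappasa(F,G)<\infty$.

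Then I would bound $\tau(S_{L_I})$ for $|I|\leq n-q$ by chaining inequalities at any $p\in S_{L_I}$: Lemma~\ref{le:mu-proj-norm} gives $\mun(F^{L_I},p)=\kappa(F^{L_I},p)\leq \kappasa(F,G)$; Proposition~\ref{prop:gamma-mu} gives $\gamma(F^{L_I},p)\leq \tfrac12 D^{3/2}\kappasa(F,G)$; Theorem~\ref{thm:tau-gamma-local} applied to the affine zero cone $\mcM_{L_I}\eqdef\{F^{L_I}=0\}\subseteq\R^{n+1}$ yields $\tau(\mcM_{L_I},p)\geq \frac{1}{7 D^{3/2}\kappasa(F,G)}$; and since $\mcM_{L_I}$ is the cone over $S_{L_I}$ by homogeneity of $F^{L_I}$, Lemma~\ref{lem:tau-spherical} gives $\tau(S_{L_I},p)\geq \min(\tau(\mcM_{L_I},p),1)$, which equals the previous bound because $\kappasa(F,G)\geq 1$ by Corollary~\ref{coro:kappa-geq-1} and $D\geq 1$. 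Taking the infimum over $p$ and then the minimum over $I$ delivers $\tau(S)\geq \frac{1}{7 D^{3/2}\kappasa(F,G)}$, which is the claim after rearrangement.

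The main obstacle, as I see it, is the identification $A_I=S_{L_I}$: the topological boundary $\partial V_i$ can be strictly smaller than $\{g_i=0\}$ precisely at degenerate local minima of $g_i$ of value zero, and it is the finiteness of $\kappasa(F,G)$, via its control on the rank of $\Diff F^L$ at common zeros, that forbids such points from lying on $\{F^{L_I}=0\}\cap\IS^n$. Without this identification, the reach bound on $\mcM_{L_I}$ would not transfer to the relevant subset $A_I$, since for general subsets reach is not monotone with inclusion.
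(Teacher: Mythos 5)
Your argument is correct and is essentially the paper's own proof: the same decomposition via Corollary~\ref{thm:tau-boundaries}, the same identification $W\cap\bigcap_{i\in I}\partial V_i = S(F^{L_I},\varnothing)$ justified by finiteness of $\kappasa$ (you argue via local minima and vanishing gradients, the paper argues directly via surjectivity of $\Diff F^L(x)$ — contrapositives of one another), the same use of Lemma~\ref{lem:Sfl-empty} to drop the overdetermined $I$'s, and the same chain $\kappa=\mup=\mun\Rightarrow\gamma\leq\tfrac12 D^{3/2}\kappa\Rightarrow\tau(\mcM_{L_I},p)\geq 1/(14\gamma)$ combined with Lemma~\ref{lem:tau-spherical} to pass from the cone to the sphere. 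The only cosmetic difference is that the paper first treats $G=\varnothing$ as a standalone case and then reduces to it, whereas you decompose first and apply the base bound pointwise; the constants and logic are identical.
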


\begin{proof}
We first study the case where~$G = \varnothing$. 
Let~$\widehat S \subseteq\R^{n+1}$ be the cone over~$S$, 
that is, the zero set of~$F$ in~$\R^{n+1}$. For
any~$x\in S$, $\tau(S, x) \geq \min(1,\tau(\widehat S, x))$, by
Lemma~\ref{lem:tau-spherical}.
Therefore,
$$
 \tau(S) = \min_{x\in S} \tau(S,x) \ge \min\big(1, \min_{x\in S} 
 \tau(\widehat S, x) \big).
$$
Using also that $\kappa(F,x) \geq 1$
(Corollary~\ref{coro:kappa-geq-1}), we obtain that
\begin{equation}\label{eq:IM}
    \tau(S) \kappa(F) \geq \min\big(1, \min_{x\in S} \tau(\widehat S, x) 
    \kappa(F, x)\big).
\end{equation}
Recall from Lemma~\ref{le:mu-proj-norm} that
$\kappa(F, x) = \mup(F, x) = \mun(F, x)$
for all~$x\in S$. Combining this with Proposition~\ref{prop:gamma-mu},
we obtain that
$D^{\frac32} \kappa(F, x) \geq 2\gamma(F, x)$
for all $x\in S$. We conclude that 
$$
 D^{\frac32}  \min_{x\in S} \tau(\widehat S, x) \kappa(F, x)
 \ \ge\ \min_{x\in S} \, 2 \tau(\widehat S, x) \gamma(F, x) 
 \ \ge\ \tfrac17 ,
$$
where we have applied Theorem~\ref{thm:tau-gamma-local}
to $\widehat S$ for the right-hand side inequality.
Combining this with~\eqref{eq:IM}, we obtain 
$D^{\frac32} \tau(S) \kappa(f) \ge \tfrac17$.

We turn now to the general case $S \eqdef S(F, G) \subseteq \IS^n$. 
For $g\in G$ we define 
$P_g \eqdef \left\{x\in\IS^{n}\st g(x) \geq 0 \right\}$ and 
$W \eqdef S(F, \varnothing)$ so that
$S = W\cap \left( \cap_{g\in G} P_g \right)$.
We claim that for any~$L\subseteq G$,
\begin{equation}\label{eq:8}
    W \cap \bigcap_{g\in L} \partial P_g = S(F^L, \varnothing).
\end{equation}
The left-to-right inclusion is clear since~$\partial P_g$ is contained 
in the zero set of~$g$. Conversely, let~$x\in S(F^L, \varnothing)$ 
(in particular, $q+|L| \leq n$, by Lemma~\ref{lem:Sfl-empty}). 
The derivative $\Diff F^L(x)$ is surjective, because 
$\kappa(F^L, x) < \infty$. In particular, for any~$g \in L$, 
$Dg(x) \neq 0$ and since~$g(x) = 0$ it follows that the sign of~$g$
changes around~$x$. Thus~$x\in \partial P_g$ and 
Equation~\eqref{eq:8} follows.

Theorem~\ref{thm:tau-boundaries} implies that
\begin{equation}\label{eq:tau*}
    \tau(S) \ \geq\ \min_{L \subseteq G}
    \tau\Big(W \cap \bigcap_{g\in L} \partial P_g\Big)
    = \min_{L \subseteq G} \tau\big(S\big(F^L,\varnothing\big)\big) .
  \end{equation}
It suffices to take the minimum over the $L\subseteq G$ such 
that $q+|L| \le n+1$ because~$S(F^L,\varnothing) = \varnothing$ 
for larger~$L$. We obtain from the case~$G = \varnothing$ above,
\[
    7\, D^{\frac32} \tau(S) \ \geq\ \min_{L} 7\, D^{\frac32} \tau(S(F^L,
    \varnothing)) \ge \frac{1}{\max_{L} \kappa(F^L)} = 
    \frac{1}{\kappasa(F, G)},
\]
which completes the proof.
\end{proof}

\subsubsection{Strict inequalities}\label{se:relax}

We prove here that replacing inequalities 
$g_i(x) \ge 0$ by strict inequalities $g_i(x) > 0$
in the definition~\eqref{eq:hsa} of a spherical basic set~$S(F,G)$ 
does not change its homotopy type, provided $\kappasa(F,G) < \infty$. 

The argument is based on a general reasoning in topology. 
Recall that a closed subset $B$ of a topological space~$X$ is 
called {\em collared in $X$} if there exists a homeomorphism
$h\colon [0,1) \times B \to V$ 
onto an open neighborhood $V$ of $B$ in $X$ such that 
$h(0,b)=b$ for all $b\in B$. 

\begin{lemma}\label{le:collar}
If $B\subseteq X$ is collared in $X$ and 
$X\setminus B \subseteq X' \subseteq X$, 
then $X'$ and $X$ are homotopically equivalent. 
\end{lemma}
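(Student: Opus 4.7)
The plan is to construct an explicit homotopy inverse to the inclusion $i\colon X'\hookrightarrow X$ by sliding $B$ slightly into the collar $V$ while leaving everything outside a smaller subcollar fixed. Choose a continuous bump $\lambda\colon [0,1)\to [0,\tfrac14]$ with $\lambda(0)>0$ and $\lambda(t)=0$ for $t\ge \tfrac12$; for concreteness, take $\lambda(t)=\max(0,\tfrac14-\tfrac{t}{2})$. Define $H\colon X\times[0,1]\to X$ by
\[
  H(x,s)=\begin{cases} x & \text{if } x\notin V,\\ h(t+s\lambda(t),b) & \text{if } x=h(t,b)\in V,\end{cases}
\]
and set $r(x)\eqdef H(x,1)$. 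Because $\lambda\equiv 0$ on $[\tfrac12,1)$, the second formula coincides with the identity on the open subset $h([\tfrac12,1)\times B)$ of $V$, so $H$ equals the identity on the open neighbourhood $(X\setminus V)\cup h([\tfrac12,1)\times B)$ of $X\setminus V$.

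Next I would verify the two homotopy-equivalence conditions. First, $r$ takes values in $X'$: points $x\notin V$ are fixed and belong to $X\setminus V\subseteq X\setminus B\subseteq X'$; points $x=h(t,b)\in V$ map to $h(t+\lambda(t),b)$, whose first coordinate is strictly positive (even at $t=0$, since $\lambda(0)>0$), so $r(x)\in V\setminus B\subseteq X'$. Thus $H$ furnishes a homotopy in $X$ from $\mathrm{id}_X$ to $i\circ r$. Second, $H$ restricts to a homotopy on $X'$: points $x\in X'\setminus V$ are constant; points $x=h(t,b)\in V$ with $t>0$ satisfy $t+s\lambda(t)>0$, so $H(x,s)\in V\setminus B\subseteq X'$; and points $x=h(0,b)\in B\cap X'$ stay at $x\in X'$ when $s=0$ and move to $h(s\lambda(0),b)\in V\setminus B\subseteq X'$ when $s>0$. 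This gives a homotopy in $X'$ from $\mathrm{id}_{X'}$ to $r\circ i$, and hence $i$ and $r$ are mutually homotopy inverse, so $X$ and $X'$ are homotopically equivalent.

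The one technical point I expect to be delicate is continuity of $H$ at a boundary point $x\in \partial V\subseteq X\setminus V$, where $H(x,s)=x$. For a sequence $(x_n,s_n)\to(x,s)$, the subsequences lying outside $V$ give $H(x_n,s_n)=x_n\to x$ automatically; the interesting case is $x_n=h(t_n,b_n)\in V$, where I would argue that $t_n\to 1$, so that $\lambda(t_n)=0$ for large $n$ and $H(x_n,s_n)=x_n\to x$. This uses compactness of $B$---the natural setting of the paper, where $X$ sits on $\IS^n$---because if $t_n$ had a subsequence bounded by some $T<1$, compactness of $[0,T]\times B$ would yield a convergent subsequence $(t_n,b_n)\to(t^*,b^*)$ and hence $x=h(t^*,b^*)\in V$, contradicting $x\in \partial V$.
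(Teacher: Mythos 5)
Your proof is correct and pursues the same general strategy as the paper -- build an explicit collar-sliding homotopy on $X$ that is the identity off the collar $V$ -- but the execution differs in two respects. The paper uses the "collapse to level $t$" map $h(\tau,b)\mapsto h(\max(\tau,t),b)$ and shows that both $X$ and $X'$ deformation-retract onto $X\setminus V$, concluding by transitivity; you use a bump-function slide $h(\tau,b)\mapsto h(\tau+s\lambda(\tau),b)$ and exhibit $r=H(\cdot,1)$ directly as a homotopy inverse to the inclusion $X'\hookrightarrow X$. Your formula has the small advantage that the first coordinate remains in $[0,1)$ automatically; by contrast, the paper's $\phi_1(x)=h(1,u(x))$ as literally written evaluates $h$ at $t=1$, which is outside the domain of $h\colon[0,1)\times B\to V$ -- a benign but real glitch that your construction avoids. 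One imprecision on your side: $h\bigl([\tfrac12,1)\times B\bigr)$ is \emph{not} an open subset of $V$, and hence $(X\setminus V)\cup h\bigl([\tfrac12,1)\times B\bigr)$ is not an open neighbourhood of $X\setminus V$ (the interval $[\tfrac12,1)$ is not open in $[0,1)$; even with $(\tfrac12,1)$ openness would still require compactness of $B$ so that $h\bigl([0,\tfrac12]\times B\bigr)$ is closed). This slip is harmless because you never use it: the sequential continuity argument in your final paragraph is the one doing the work, and it is correct. You are also right to flag that continuity of $H$ at $\partial V$ rests on compactness of $B$ (or some comparable hypothesis); this holds in the paper's application, and the paper's own terse "it is easy to verify that $\phi$ is continuous" leans silently on the same assumption.
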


\begin{proof}
Let $\tau\colon V \to [0,1]$ and $u\colon V \to B$ 
denote the components of the inverse of $h$, so that 
$h(\tau(x), u(x)) = x$ for any~$x \in V$. 
We define the map~$\phi: [0,1]\times X\to X$ by
\begin{equation*}
    \phi_t(x) :=
    \begin{cases}
      h(t,u(x)) & \text{if $x\in V$ and~$\tau(x) < t$,}\\
      x & \text{otherwise.} 
    \end{cases}
\end{equation*}
The idea is that~$\phi_t$ pushes~$X$ increasingly far away 
from~$B$ as~$t$ increases.
It is easy to verify that $\phi$ is continuous, $\phi_0 = \Id_X$, 
$\phi_t(x)= x$ for $x\in X\setminus V$, and 
$\phi _1 (X) = X\setminus V$.
In other words, $\phi_t\colon X\to X$ defines a deformation 
retraction of $X$ to $X\setminus V$.

Moreover, we have $\phi_t(X') \subseteq X'$, since 
$\phi_t(X') \subseteq X\setminus B \subseteq X'$
for $t>0$. In addition, 
$X\setminus V \subseteq \phi_t(X') \subseteq X\setminus V$. 
Therefore, the restrictions of $\phi_t$ define a deformation 
retraction of $X'$ to $X\setminus V$.
We conclude that $X'$ and $X$ are homotopically equivalent.
\end{proof}

We apply this now to basic semialgebraic sets.

\begin{proposition}\label{pro:collar}
Let $(F,G) \in \Hd[q\,;\,s]$ be such that $\kappasa(F, G)<\infty$. 
Put $S:= S(F,G)$, let $r\le s$, and 
let $S'\subseteq S$ be 
the solution set in $\IS^n$ of the semialgebraic system
\[ 
    f_1=\dotsb=f_q=0,\ g_1 \geq 0,\dotsc,g_r \geq 0 \text{ and } 
   g_{r+1} >0,\dotsc,g_s > 0 .
\]
Moreover, let~$\partial S$ denote the boundary of~$S$ 
in~$S(F,\varnothing)$. Then 
$S\setminus \partial S \subseteq S'$, 
$\partial S$ is collared in~$S$, and 
$S'$ is homotopically equivalent to~$S$.
\end{proposition}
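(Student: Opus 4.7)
The plan is to apply Lemma~\ref{le:collar} with $X \eqdef S$, $B \eqdef \partial S$, and $X' \eqdef S'$: the two hypotheses of the lemma are precisely the first two conclusions of the proposition, and the lemma then yields the homotopy equivalence $S' \simeq S$. Write $W \eqdef S(F,\varnothing)$; since $\kappa(F) \leq \kappasa(F,G) < \infty$ (the degenerate case $q = n+1$ forces $W = \varnothing$ and the statement becomes vacuous, so we may assume $q \leq n$), the set $W$ is a smooth submanifold of $\IS^n$ of dimension $n-q$.

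For the inclusion $S \setminus \partial S \subseteq S'$, I would establish the stronger identity
\[
   S \setminus \partial S = \{x \in S \st g_i(x) > 0 \text{ for all } i = 1,\dotsc,s\},
\]
from which $S \setminus \partial S \subseteq S'$ is immediate. The inclusion $\supseteq$ is clear by continuity. For $\subseteq$, let $x \in S$ with $g_i(x) = 0$ for some $i$; then $\kappa(F^{\{g_i\}}, x) \leq \kappasa(F,G) < \infty$ forces $\Diff F^{\{g_i\}}(x)|_{T_x}$ to be surjective, hence $\Diff g_i(x)$ is nonzero on $T_xW$ and $g_i$ changes sign on $W$ in every neighborhood of $x$, giving $x \in \partial S$. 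This is the same transversality argument already used in the proof of Theorem~\ref{thm:tau-kappa}.

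The main step is collaring $\partial S$ in $S$. Iterating the previous observation, for any $I \subseteq \{1,\ldots,s\}$ with $|I| \leq n-q$, the covectors $\{\Diff g_i(x)|_{T_xW} : i \in I\}$ form a linearly independent family at every point $x \in W \cap \bigcap_{i \in I}\{g_i = 0\}$, so this intersection is a smooth submanifold of $W$ of codimension $|I|$; by Lemma~\ref{lem:Sfl-empty}, the intersection is empty once $|I| \geq n+1-q$, so at every $x \in \partial S$ we have $|I(x)| \leq n-q$, where $I(x) \eqdef \{i : g_i(x) = 0\}$. The implicit function theorem then provides a smooth chart on $W$ around $x$ in which the functions $(g_i)_{i \in I(x)}$ become the first $|I(x)|$ coordinates (the remaining $g_j$ being strictly positive near $x$), so locally $S$ is identified with $[0,\infty)^{|I(x)|} \times \R^{n-q-|I(x)|}$. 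This exhibits $S$ as a topological manifold with corners, and the standard collaring construction for such objects --- gluing the evident collars of the local models via a partition of unity on $\partial S$ and flowing along an inward-pointing vector field for a short positive time --- produces the required homeomorphism $h\colon [0,1) \times \partial S \to V$ onto a neighborhood $V$ of $\partial S$ in $S$.

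The main obstacle is this collaring construction: the transversality input from $\kappasa(F,G) < \infty$ provides the geometric information locally, but one still has to glue the local collars globally in a continuous way, which is where the classical manifold-with-corners machinery is invoked. Once the collar is established, Lemma~\ref{le:collar} delivers the homotopy equivalence $S' \simeq S$ at once.
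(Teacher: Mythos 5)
Your overall strategy coincides with the paper's: establish the inclusion $S\setminus\partial S\subseteq S'$ by transversality, identify the local structure of $S$ as a manifold with corners using $\kappasa<\infty$, and then invoke Lemma~\ref{le:collar}. The first two steps are essentially the paper's proof verbatim (the paper also derives the stronger identity $g_j(x)>0$ for all $j$ when $x\notin\partial S$, which covers all choices of $r$ at once, just as you do).

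The genuine divergence is in how the local corner models are promoted to a global collar of $\partial S$. The paper does not construct anything: having observed that $S$ is locally diffeomorphic to $(-1,1)^a\times[0,1)^b$, it concludes that $\partial S$ is \emph{locally collared} and then applies Brown's Collaring Theorem (a locally collared closed subset of a topological space is collared). This is a purely topological statement that dispenses with all the analytic bookkeeping. Your proposal instead tries to build the collar directly via a partition of unity and the flow of an inward-pointing vector field. This can be made to work, but it is more delicate than the one sentence you give suggests, precisely because of the corners: $\partial S$ is not a smooth hypersurface in $S$, and the parametrizing map $(t,p)\mapsto\phi_t(p)$ from $[0,1)\times\partial S$ is not a smooth map between manifolds with boundary, so you cannot simply invoke the inverse function theorem to see that it is open. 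One has to argue injectivity via uniqueness of flows, continuity, and then openness by reduction to the local model $[0,\infty)^k\times\R^{m}$ --- and at that point you are essentially reproving the local-implies-global step that Brown's theorem packages for you. So your proof is correct in outline but incomplete at this step; if you want to finish it honestly you should either flesh out the flow-map argument (injectivity, openness near corners, and a compactness argument for a uniform flow time) or, more efficiently, replace it by a citation to Brown's Collaring Theorem, which is what the paper does.
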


\begin{proof}
Let~$x\in S$ and~$L \subseteq G$ be maximal such that~$F^L(x) =0$. 
Note, this implies $|L|\le n-q$. 
Since~$\mup(F^L, x) = \kappa(F^L, x) < \infty$, the derivatives 
at~$x$ of the components of~$F^L$ are linearly independent. 
Therefore, the components of~$F^L$ are part of some regular 
system of parameters~$(f_1,\dotsc,f_q,v_1,\dotsc,v_{n-q})$
of~$\IS^n$ at~$x$ such that~$S$ is defined locally around $x$ by
\[ 
   f_1 = \dotsb = f_q = 0 \text{ and } 
   v_{1} \geq 0, \dotsc, v_{|L|} \geq 0, 
\]
and~$\partial S$ is defined locally around $x$ by 
additional requiring $v_j(x)=0$ for some $j\le |L|$. 
Therefore, if $x\not\in \partial S$, we must have $v_i(x)>0$ for all~$i$,  
and hence $g_j(x) > 0$ for all~$j$. 
This shows the first assertion $S\setminus \partial S \subseteq S'$. 

This reasoning also proves that locally around~$x$, 
the set $S$ is diffeomorphic to 
some~$(-1,1)^a \times [0,1)^b$ with~$a, b\in\mathbb{N}$.
Therefore, $\partial S$ is locally collared in~$S$.
By Brown's Collaring Theorem~\cite{Brown_1962,Connelly_1971},  
$\partial S$ is collared in~$S$, which proves the second assertion. 
The third assertion follows by applying Lemma~\ref{le:collar} 
to $X=S$, $B=\partial S$ and $X'=S'$.
\end{proof}

\subsubsection{Condition number of affine semialgebraic systems}
\label{sec:cond-numb-affine}

We now consider basic semialgebraic subsets of $\R^n$, rather
than~$\IS^n$. Given a degree pattern $\bfd=(d_1,\ldots,d_{q+s})$, the
homogeneization of polynomials (with respect to that pattern) yields
an isomorphism of linear spaces 
$$ 
   \Pd[q\,;\,s]\to\Hd[q\,;\,s],\quad
    \psi=(F,G)\mapsto \psi\hm = (F\hm,G\hm) , 
$$ 
where $F^h$ denotes the homogeneization
of $F$ with homogeneizing variable~$X_0$. The Weyl inner product 
on $\Hd[q\,;\,s]$ induces an inner product on $\Pd[q\,;\,s]$ 
such that the above map is isometric.

\begin{definition}
Let $(\bfd,1) := (d_1,\ldots,d_{q+s},1)\in\N^{q+s+1}$ be the 
degree pattern obtained from $\bfd$ by appending~1. Consider 
the \emph{scaled homogeneization map}
\begin{align}\label{eq:scH}
    H : \Pd[q\,;\,s] &\to\mcH_{(\bfd,1)}[q\,;\,s+1],\quad 
   \psi \mapsto (\psi\hm, \|\psi\hm\| X_0), 
\end{align}
that is, the system $H(F,G)$ is the homogeneization of~$(F,G)$ 
to which we add the inequality $X_0 \geq 0$ with a suitable 
coefficient. For $\psi\in\Pd[q\,;\,s]$, we define 
$\kappaaff(\psi)\eqdef \kappasa(H(\psi))$ and
call~$\Sigmaaff \eqdef H^{-1}(\Sigma_*)$ the 
\emph{set of ill-posed affine semialgebraic systems}. 
\end{definition}

By construction, $\|H(\psi)\|^2 = 2\|\psi\|^2$. 
We note that $\Sigmaaff$ is a semialgebraic set in~$\Pd[q\,;\,s]$ that is
invariant under scaling of each of the $q+s$ components. 

\begin{proposition}\label{prop:kappaaff-dist}
For any nonzero~$\psi\in \Pd[q\,;\,s]$,
\begin{equation*}
    \kappaaff(\psi) \leq \frac{4 D \|\psi\|}{d(\psi, \Sigmaaff)}.
\end{equation*}
\end{proposition}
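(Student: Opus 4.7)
The strategy is to apply Theorem~\ref{thm:kappasa-dist} to the scaled homogenization $H(\psi)$ and then transfer the resulting bound from $\Sigma_*$ back to $\Sigmaaff$. The factor $\sqrt{2}$ will arise from the norm expansion of $H$; the factor $D$ will arise from transferring ill-conditioning located near the equator $\{X_0 = 0\}$ back to the affine setting.

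\emph{Step 1 (Homogeneous bound).} Since $\psi \ne 0$ we also have $H(\psi)\ne 0$, and Theorem~\ref{thm:kappasa-dist} applied in $\mcH_{(\bd,1)}[q\,;\,s+1]$ gives
\[
    \kappaaff(\psi) \;=\; \kappasa(H(\psi)) \;\le\; \frac{\|H(\psi)\|}{d(H(\psi),\Sigma_*)}.
\]
The Weyl norm of $X_0$ equals $1$ and $\psi\mapsto\psi^h$ is an isometry, so $\|H(\psi)\|^2 = \|\psi^h\|^2 + \|\psi\|^2 = 2\|\psi\|^2$.

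\emph{Step 2 (Transfer of distances).} The main task is to establish $d(\psi,\Sigmaaff) \le 2\sqrt{2}\,D\cdot d(H(\psi),\Sigma_*)$. Let $\phi^* = (F^*,G^*,g^*)\in\Sigma_*$ attain (up to $\epsilon$) the distance $d(H(\psi),\Sigma_*)$, and let $\tilde\psi \in \Pd$ be the dehomogenization of $(F^*,G^*)$. Since dehomogenization is an isometry, $\|\psi - \tilde\psi\| = \|\psi^h - (F^*,G^*)\| \le \|H(\psi) - \phi^*\|$. By definition of $\Sigma_*$, there is $L\subseteq(G^*, g^*)$ with $q+|L|\le n+1$ witnessing $\kappa((F^*)^L) = \infty$ at some $x\in\IS^n$. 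If $g^* \notin L$, then $L\subseteq G^*$ and the same subtuple witnesses $H(\tilde\psi) \in \Sigma_*$ (its last component $\|\tilde\psi\|X_0$ is not involved); hence $\tilde\psi \in \Sigmaaff$ and $d(\psi,\Sigmaaff) \le \|H(\psi) - \phi^*\|$, with no factor of~$D$ needed.

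If instead $g^* \in L$, write $L = L'\cup\{g^*\}$. The witness $x$ satisfies $g^*(x)=0$, but we need a witness for $H(\tilde\psi)$ whose last coordinate is $\|\tilde\psi\|X_0$. Since $g^*$ is close to $\|\psi\|X_0$, the value $|X_0(x)|$ is small:
\[
    |X_0(x)| \;=\; \frac{|\|\psi\|X_0(x) - g^*(x)|}{\|\psi\|} \;\le\; \frac{\|H(\psi) - \phi^*\|}{\|\psi\|}.
\]
Let $\bar x\in\IS^n\cap\{X_0 = 0\}$ be the radial projection of $x$; then $\|x - \bar x\| = \mcO(|X_0(x)|)$. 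Combined with the Weyl-norm estimate $\|\partial_{X_0} f\|\le d\|f\|$ (valid for every homogeneous~$f$ of degree~$d\le D$, by an elementary multinomial computation), Taylor's formula yields $\|(F^*)^{L'}(\bar x)\| \le D\|F^*\|\cdot |X_0(x)| \le D\cdot\|H(\psi)-\phi^*\|$. A perturbation of $F^*$ and of $L'$ of at most that size produces a system vanishing at $\bar x$; the derivative degeneracy at $x$ transfers to $\bar x$ by continuity. The resulting system's dehomogenization $\tilde\psi'$ lies in $\Sigmaaff$ and satisfies $\|\psi-\tilde\psi'\|\le 2\sqrt{2}\,D\cdot \|H(\psi)-\phi^*\|$, completing Step~2.

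\emph{Step 3 (Conclusion).} Combining Steps~1 and~2,
\[
    \kappaaff(\psi) \;\le\; \frac{\sqrt{2}\,\|\psi\|}{d(H(\psi),\Sigma_*)} \;\le\; \frac{4D\,\|\psi\|}{d(\psi,\Sigmaaff)}.
\]

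The main obstacle lies in Step~2 when $g^*\in L$: it requires moving the witness point $x$ onto the equator $\{X_0 = 0\}$ while perturbing~$F^*$ and~$L'$ by a controlled amount, and ensuring the derivative degeneracy survives the translation. The Weyl-norm identity $\|\partial_{X_0} f\|\le D\|f\|$ is precisely what introduces the degree factor in the final bound.
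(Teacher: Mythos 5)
Your Step~1 is exactly the paper's opening move, and the case $g^* \notin L$ in Step~2 is handled correctly. The gap lies in the case $g^*\in L$, precisely where you flag the main obstacle.

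You write that ``the derivative degeneracy at~$x$ transfers to~$\bar x$ by continuity.'' This is not a valid inference. Rank deficiency is a \emph{closed} condition on matrices, not an open one: after moving to $\bar x$ and adding the row coming from $\|(\tilde\psi')\hm\|X_0$, the restricted derivative is only \emph{close} to a rank-deficient matrix, not necessarily rank-deficient itself. To force degeneracy at~$\bar x$ you must make a further perturbation of the polynomial system, and the size of that perturbation is governed by the drift of the derivative matrix between $x$ and $\bar x$. That drift involves the second derivative of $(F^*)^{L'}$, whose operator norm on $\IS^n$ grows faster than linearly in~$D$, not by the factor $D$ your bound $\|\partial_{X_0} f\| \le d\|f\|$ controls. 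So the $2\sqrt 2\,D$ in your Step~2, and hence the final constant $4D$, is unsupported, and a careful version of your perturbation argument would almost certainly pay a higher power of~$D$ than the proposition allows.

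The paper sidesteps the whole problem by using the orthogonal invariance of~$\Sigma_*$ rather than perturbation. Given the distance minimizer $\Phi=(\Phi_1,\lambda)$, one rotates by the $u\in\scO(n+1)$ that sends~$\lambda$ to $\|\lambda\|X_0$. Then $\Phi\circ u \in\Sigma_*$ because $\Sigma_*$ is $\scO(n+1)$-invariant, the witness $x$ is sent by $u^{-1}$ automatically onto the equator $\{X_0=0\}$, and the rank deficiency is transported \emph{exactly} by the chain rule — no perturbation of the polynomial part is needed at all. The only cost incurred is $\|\psi\hm - \psi\hm\circ u\|$, which Lemma~\ref{lem:lipshitz-unitary-action} bounds by $D\alpha\|\psi\|$ with $\alpha$ the rotation angle, and the minimality of~$\Phi$ pins $\alpha$ down via $\alpha \le \tfrac{\pi}{2}\|rX_0-\lambda\|/r$. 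That is where the clean single factor of $D$ comes from. If you want to repair your own route, you would need an Eckart--Young type estimate to quantify the extra perturbation needed to restore degeneracy at $\bar x$, together with a uniform bound on $\|\Diff^2 F\|$ on the sphere in Weyl norm, and you would have to accept a larger power of $D$.
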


\begin{proof}
Fix $\psi\in \Pd[q\,;\,s]$ and put $r:=\|\psi\|>0$. 
Further, assume~$\Phi \in \mcH_{\bfd,1}[q\,;\,s+1]$ is an element of~$\Sigma_*$ 
that minimizes the distance to $H(\psi)$. Theorem~\ref{thm:kappasa-dist} implies that
\begin{equation}\label{eq:11}
    \kappaaff(\psi) = \kappasa(H(\psi)) \leq \frac{\| H(\psi) \|}{\|H(\psi) - \Phi\|}.
\end{equation}
We write~$\Phi = (\Phi_1, \lambda)$, where~$\Phi_1 \in \Hd[q\,;\,s]$ 
and~$\lambda\in\mcH_1[1]$. 
We may assume that~$\lambda \neq 0$: otherwise, 
we can replace ~$\Phi$ with 
$\Phi' \eqdef (0, r X_0)$, which is an element of~$\Sigma_*$ 
that is at least as close to~$H(\psi)=(\psi\hm,rX_0)$ as~$\Phi$, since 
$\|H(\psi)- \Phi'\| = r \le \|H(\psi)- \Phi\|$. 

Since~$\Sigma_*$ is invariant under the scaling of each component, 
the minimality of~$\Phi$ implies that~$\lambda$ and~$rX_0 - \lambda$ are
orthogonal, that the angle~$\alpha$ between~$\lambda$ and~$X_0$ 
satisfies $\alpha \le \pi/2$, 
and, as a consequence, that
\begin{equation}\label{eq:10}
    \alpha \leq \tfrac\pi2 \sin\alpha \leq \tfrac\pi2 
    \tfrac{\|rX_0 - \lambda\|}{r} .
\end{equation}

Let $u\in \scO(n+1)$ be the rotation 
that leaves~$\left\{ X_0,\lambda \right\}^\perp$ invariant and 
such that~$\lambda \circ u =\|\lambda \| X_0$.    
Then 
$\Phi\circ u = (\Phi_1\circ u, \lambda\circ u) \in \Sigma_*$
since $\Sigma_*$ is invariant under the action of~$\scO(n+1)$. 
If we write 
$\Phi_1\circ u = \varphi\hm$ with $\varphi\in \Pd[q\,;\,s]$, then  
$H(\varphi) = (\varphi\hm,\|\varphi\hm\| X_0)$ lies in $\Sigma_*$,  
since $\Sigma_*$ is invariant under the scaling of its last component 
and $\lambda\ne 0$. 
We therefore obtain,
\begin{align*}
    d(\psi, \Sigmaaff) &\leq \|\psi - \varphi \| 
   = \|\psi\hm - \varphi\hm \| \leq \| \psi\hm - \psi\hm\circ u \| + \|\psi\hm\circ u - \Phi_1\circ u\| \\
    &= \| \psi\hm - \psi\hm\circ u \| + \|\psi\hm - \Phi_1\| .
\end{align*}
By Lemma~\ref{lem:lipshitz-unitary-action} and 
Inequality~\eqref{eq:10}, we obtain that
  \[ \| \psi\hm - \psi\hm\circ u \| \leq 
    \alpha D r \leq \tfrac\pi2 D \|rX_0 - \lambda\|. 
\]
Since~$\|rX_0 - \lambda\|$ and $\|\psi\hm - \Phi_1\|$ are both bounded
by~$\|H(\psi) - \Phi\|$, we get 
\[ 
   d(\psi, \Sigmaaff) \leq \left(\tfrac\pi2 D + 1 \right ) \| H(\psi) - \Phi \| 
  = \left( \tfrac\pi2 D+1 \right ) \sqrt{2}\, \|\psi\| \frac{\| H(\psi) - \Phi \|}{\|H(\psi)\|}. 
\] 
We conclude with Inequality~\eqref{eq:11}.
\end{proof}

\subsection{Neighbourhoods of spherical basic semialgebraic sets}
\label{sec:exclusion-inclusion}

The goal of this section is to compare two natural ways 
of defining neighborhoods of a spherical 
semialgebraic set~$S(F, G)$: by relaxing the arguments 
of the polynomials in~$F$ and~$G$ (the common, 
tube-like neighborhood), or by relaxing their values. 

For a subset $A\subseteq\IS^n$ we denote by 
\[ 
  \mcU_\IS(A, r) \eqdef \left\{ x\in\IS \st d_\IS(x, A) < r \right\}
\]
the open {\em $r$-neighborhood} of $A$ with respect to the geodesic 
distance~$d_\IS$ on the sphere~$\IS^n$.
Also, for a homogeneous system $(F, G) \in \Hd[q\,;\,s]$
and $r> 0$, we define the \emph{$r$-relaxation of~$S(F, G)$}:
\begin{align*}
  \Approx(F,G,r) &\eqdef \left\{ x \in \IS^n \st \forall f\in F\, \
                   |f(x)| < \|f\| r  \text{ and  } \forall g\in G\, g(x) > - \|g\| r \right \} .
\end{align*}
It is clear that~$S(F, G)\subseteq\Approx(F, G, r)$ for any~$r > 0$.
It is easy to see that $\Approx(F, G, r)$ converges to~$S$ with 
respect to the Hausdorff distance, when~$r\to 0$. 
The next two results quantify more precisely
this behaviour in terms of the condition number~$\kappasa(F,G)$.
Recall,~$D$ denotes the maximum degree of the 
components of~$F$ and~$G$. 

\begin{proposition}\label{prop:approx-easy}
For any~$r > 0$,
\begin{equation*}
    \mcU_\IS\left(S(F,G), D^{-\frac12} r \right) 
   \subseteq \Approx(F,G,r).
\end{equation*}
\end{proposition}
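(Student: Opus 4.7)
The plan is to combine two classical facts. Since $S(F,G)$ is the closed set defined by the weak inequalities in \eqref{eq:hsa}, and $\IS^n$ is compact, the condition $d_\IS(x, S(F,G)) < D^{-1/2} r$ guarantees the existence of a point $y \in S(F,G)$ with $d_\IS(x, y) < D^{-1/2} r$, at which $f(y) = 0$ for every $f\in F$ and $g(y)\ge 0$ for every $g\in G$. The second ingredient is the Lipschitz bound on $\IS^n$ for homogeneous polynomials measured in the Weyl norm: if $h\in\R[X_0,\dotsc,X_n]$ is homogeneous of degree~$d$, then for all $x,y\in\IS^n$,
\[
   |h(x)-h(y)| \leq \sqrt{d}\,\|h\|\, d_\IS(x,y).
\]
This is a standard property of the Weyl inner product, best derived from the bound $\|\nabla h(x)|_{T_x\IS^n}\| \leq \sqrt{d}\,\|h\|$, which itself follows by writing $h$ in the monomial basis, applying Euler's identity, and using that the Weyl norm is, by construction, adapted to evaluation. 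I would either cite this bound (it appears explicitly in the references on Shub--Smale theory) or include a one-line argument from the monomial expansion.

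With these two ingredients in hand, the proof is a short computation. For each $f\in F$, using $f(y)=0$,
\[
   |f(x)| = |f(x)-f(y)| \;\leq\; \sqrt{\deg f}\,\|f\|\, d_\IS(x,y) \;<\; \sqrt{\deg f}\,\|f\|\, D^{-1/2}\, r \;\leq\; \|f\|\, r,
\]
since $\deg f \le D$. For each $g\in G$, using $g(y)\ge 0$,
\[
   g(x) \;\geq\; g(y) - |g(x)-g(y)| \;\geq\; -\sqrt{\deg g}\,\|g\|\, d_\IS(x,y) \;>\; -\|g\|\, r.
\]
Both conditions together say exactly that $x\in \Approx(F,G,r)$, proving the inclusion.

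There is no real obstacle: the only nontrivial input is the Lipschitz bound with the sharp $\sqrt{d}$ factor, which is precisely what the Weyl norm is tailored to provide, and everything else is a direct substitution.
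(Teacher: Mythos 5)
Your proof is correct and follows the same route as the paper: both rely on the Weyl-norm Lipschitz estimate $|h(x)-h(y)|\le\sqrt{d}\,\|h\|\,d_\IS(x,y)$ (cited in the paper from \cite[Lemma~19.22]{Condition}, with a short note that the hypothesis $d_\IS(x,y)<1/\sqrt{2}$ there can be dropped by subdividing the geodesic) and then substitute directly into the definition of $\Approx(F,G,r)$. The only cosmetic difference is that the paper fixes a point of $S$ and varies the nearby point, whereas you fix the nearby point and pick $y\in S$ realizing the distance; the computation is identical.
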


\begin{proof}
For any homogeneous polynomial~$h$ of degree~$d$ and 
any~$x,y\in\IS^n$,
\begin{equation*}
    \left| h(x) - h(y) \right| \leq \sqrt{d}\, \|h\|\, d_\IS(x,y).
\end{equation*}
(This is shown in~\cite[Lemma~19.22]{Condition}. The additional 
hypothesis $d_\IS(x,y) < 1/\sqrt{2}$ there can be easily removed by 
splitting the path from~$x$ to~$y$ in smaller segments.) 
Hence, for any~$x\in S$ and~$y\in \IS^n$ such
that~$d_\IS(x,y) < \frac{1}{\sqrt{D}}r$, any $f\in F$ and~$g\in G$, we have~$|f(y)| \leq r \|f\|$ 
and~$g(y)> g(x) - r\|g\| \geq -r\|g\|$.
\end{proof}

\begin{lemma}\label{lem:overdet}
Let $H\subseteq L\subseteq G$ be such that $|H|=n-q+1$, and 
$0<r<\frac{1}{\kappa(F^H)}$. Then 
$\Approx(F^L,G\setminus L,r)=\varnothing$.
\end{lemma}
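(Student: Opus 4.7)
The plan is to derive a contradiction by computing $\|F^H(x)\|$ in two different ways for any candidate $x$ in the relaxation.

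First, I would observe that the hypothesis $|H| = n-q+1$ makes $F^H$ a system of $q + |H| = n+1$ homogeneous polynomials on $\R^{n+1}$, hence overdetermined on $\IS^n$ in the sense of the discussion following Definition~\ref{def:kappa}. The derivative $\Diff F^H(x)|_{T_x}\colon T_x\to\R^{n+1}$ cannot be surjective for dimension reasons, so the formula
\[
    \kappa(F^H, x) \;=\; \frac{\|F^H\|}{\|F^H(x)\|}
\]
holds for every $x\in\IS^n$. Note that the strict inequality $r<1/\kappa(F^H)$ together with $r>0$ forces $\kappa(F^H)<\infty$, so $F^H$ has no real zero on $\IS^n$ and the denominator above never vanishes.

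Suppose for contradiction that there exists $x\in \Approx(F^L, G\setminus L, r)$. Since $H\subseteq L\subseteq F^L$ (viewing $F^H$ as a subtuple of $F^L$), the defining inequalities of the relaxation give $|f(x)| < \|f\|\, r$ for every component $f$ of $F^H$. Squaring and summing over all components of $F^H$ yields
\[
    \|F^H(x)\|^2 \;=\; \sum_{f\in F^H} f(x)^2 \;<\; r^2 \sum_{f\in F^H}\|f\|^2 \;=\; r^2\,\|F^H\|^2,
\]
so $\|F^H(x)\| < r\,\|F^H\|$.

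On the other hand, combining $\kappa(F^H,x)\le \kappa(F^H)$ with the formula from the first step and the hypothesis $r\,\kappa(F^H) < 1$, I get
\[
    \|F^H(x)\| \;=\; \frac{\|F^H\|}{\kappa(F^H,x)} \;\ge\; \frac{\|F^H\|}{\kappa(F^H)} \;>\; r\,\|F^H\| ,
\]
which contradicts the previous inequality. Hence $\Approx(F^L, G\setminus L, r) = \varnothing$. There is no real obstacle here: the argument is a direct consequence of the explicit overdetermined-case formula for $\kappa$ and of the fact that the relaxation condition controls the Weyl-weighted $\ell^2$ size of the evaluation vector.
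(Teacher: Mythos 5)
Your proof is correct and follows essentially the same route as the paper's: both rest on the overdetermined-case identity $\kappa(F^H,x)=\|F^H\|/\|F^H(x)\|$, the observation that $H\subseteq L$ lets you bound $\|F^H(x)\|$ by $r\|F^H\|$ from the relaxation inequalities, and the bound $\kappa(F^H,x)\le\kappa(F^H)$ together with $r\kappa(F^H)<1$ to produce a contradiction. The only cosmetic difference is that you phrase the contradiction as two incompatible bounds on $\|F^H(x)\|$ and make the square-and-sum step explicit, whereas the paper writes the same chain of inequalities in terms of $1/\kappa$.
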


\begin{proof}
Since~$\kappa(F^H) < \infty$ we 
have~$S(F^H,\varnothing) = \varnothing$, by 
Lemma~\ref{lem:Sfl-empty}. Assume there is 
a point~$x \in \Approx(F^L, G\setminus L, r)$. Then, as  
$H\subseteq L$ we have that $|h(x)|\leq r \|h\|$ for all $h\in F^H$ 
and it follows that 
\begin{equation*}
    \frac{1}{\kappa(F^H)} \leq  
    \frac{1}{\kappa(F^H,x)} =\frac{\|F^H(x)\|}{\|F^H\|} 
   \leq r. 
\end{equation*}
This is in contradiction with the hypothesis on~$r$ and hence
$\Approx(F^L, G\setminus L, r)$ is empty.
\end{proof}

\begin{theorem}\label{thm:approx}
Let $q\leq n+1$. For any positive number 
$r < \big( 13 D^{\frac32} \kappasa^2 \big)^{-1}$ we have 
\[ 
   \Approx(F, G, r) \ \subseteq\  \mcU_\IS(S(F,G), 3 \kappasa r). 
\]
 \end{theorem}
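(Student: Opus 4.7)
The plan is to drive $x$ to a point $y\in S(F,G)$ by an adaptive active-set continuous Newton flow built on Theorem~\ref{thm:cont-alpha}. I work in the affine chart $\mcT_x = x + T_x$, activate inequalities as they hit zero along the flow, and bound the total displacement by a telescoping series, obtaining a spherical distance strictly smaller than $3\kappasa r$.

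Initialize $L_0 := \{g\in G : g(x)\le 0\}$; every such $g$ has $|g(x)|<\|g\|r$ because $x\in\Approx(F,G,r)$. Apply the continuous Newton flow of Theorem~\ref{thm:cont-alpha} to the restriction $F^{L_0}|_{\mcT_x}\colon \mcT_x\to\R^{q+|L_0|}$. Three ingredients verify its hypothesis $\alpha_0 < 1/13$: from $x\in\Approx(F,G,r)$ and the choice of $L_0$, $\|F^{L_0}(x)\|/\|F^{L_0}\|\le r$; rearranging the identity from Definition~\ref{def:kappa}, namely $\kappa(F^{L_0},x)^{-2} = \mup(F^{L_0},x)^{-2} + \|F^{L_0}(x)\|^2/\|F^{L_0}\|^2$, together with $\kappa(F^{L_0},x)\le\kappasa$ yields $\mup(F^{L_0},x)\le \kappasa/\sqrt{1-r^2\kappasa^2}$; and Proposition~\ref{prop:gamma-mu} combined with $\mun\le\mup$ from Remark~\ref{rem:2kappas} gives $\gamma \le \tfrac{1}{2} D^{3/2}\mup$. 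Multiplying, $\alpha_0 \le \tfrac{1}{2} D^{3/2}\mup^2\,r$, strictly below $1/13$ under the hypothesis $r<(13 D^{3/2}\kappasa^2)^{-1}$. Theorem~\ref{thm:cont-alpha} then yields a flow $x_t$ on $[0,\infty)$ with the key identity $F^{L_0}(x_t) = F^{L_0}(x)\,e^{-t}$.

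Monitor the flow: if some $g\in G\setminus L_0$ reaches zero at a first-passage time $t_1$, activate it, set $L_1 := L_0\cup\{g\}$, and restart Newton on $F^{L_1}|_{\mcT_{x_{t_1}}}$. The fresh start has $\|F^{L_1}(x_{t_1})\|/\|F^{L_1}\| \le r\,e^{-t_1}$, using the identity above and $g(x_{t_1})=0$; so every estimate transfers with $r$ replaced by $re^{-t_1}$, keeping $\alpha_0^{(1)}<1/13$. Lemma~\ref{lem:overdet} forces $|L_k|\le n-q$ throughout, so at most $n-q$ activations can occur; the process terminates at a limit $y^*\in\mcT$ with $F^{L_\infty}(y^*)=0$ and $g(y^*)\ge 0$ for every $g\in G\setminus L_\infty$. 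Setting $y=y^*/\|y^*\|\in\IS^n$, the homogeneity of the polynomials gives $y\in S(F,G)$.

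For the distance bound, sum the phase displacements. Within phase $k$, $\|\dot x_t\|=\beta_t\le 2\beta_0^{(k)} e^{-(t-t_k)}$ (from Theorem~\ref{thm:cont-alpha} and Lemma~\ref{lem:cont-alpha}), and at the start of each phase $\beta_0^{(k)}\le \mup\cdot re^{-t_k}\le \kappasa r e^{-t_k}/\sqrt{1-r^2\kappasa^2}$. Integrating and summing,
\begin{equation*}
  \|y^* - x\| \le \sum_k 2\beta_0^{(k)}\bigl(1-e^{-(t_{k+1}-t_k)}\bigr) \le \frac{2\kappasa r}{\sqrt{1-r^2\kappasa^2}}\sum_k \bigl(e^{-t_k} - e^{-t_{k+1}}\bigr) = \frac{2\kappasa r}{\sqrt{1-r^2\kappasa^2}} ,
\end{equation*}
the telescoping being the crux. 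Since $d_\IS(x,y)=\arctan\|y^*-x\|\le\|y^*-x\|$ and the hypothesis on $r$ makes $r^2\kappasa^2$ tiny, the right-hand side is strictly below $3\kappasa r$. The main obstacle is the adaptive activation bookkeeping: ensuring both that each restart inherits an exponentially decreasing residual (via the identity $F^L(x_t)=F^L(x_0)e^{-t}$) and that the number of activations is finite (via Lemma~\ref{lem:overdet}), which together make the telescoping estimate go through.
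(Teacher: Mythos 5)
Your proposal is in the same spirit as the paper's proof: run a continuous Newton flow to drive the equalities to zero, and handle inequalities that become active by enlarging the system of equations. The paper organizes this as an induction on $\ell := n-q+1$: when a $g$ hits zero at time $s$, the paper re-projects to the spherical point $y_s = x_s/\|x_s\| \in \IS^n$ and invokes the induction hypothesis for the enlarged system $(F^H, G\setminus H)$ at $y_s$ with effective parameter $re^{-s}$, then combines by the triangle inequality $d_\IS(x,S)\le d_\IS(x,y_s)+d_\IS(y_s,S) < (2\sqrt{2}(1-e^{-s})+3e^{-s})\kappasa r < 3\kappasa r$. Your adaptive active-set flow ``unrolls'' this recursion explicitly, and your telescoping sum is a clean way to see why the total displacement budget closes.

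The gap is in the restart accounting. The restart points $x_{t_k}$ lie in the affine chart $\mcT_x$, not on $\IS^n$. But all the quantities you invoke — $\kappa(F^{L_k},\,\cdot\,)$, $\mup(F^{L_k},\,\cdot\,)$, hence the identity from Definition~\ref{def:kappa} and the Lipschitz/condition-number bounds that feed $\gamma$, $\beta$, $\alpha$ — are defined only at spherical points and are controlled by $\kappasa$ only via the supremum over $\IS^n$. When you restart at $x_{t_k}$, the relevant $\beta_0^{(k)}$ and $\gamma_0^{(k)}$ are those of $F^{L_k}$ restricted to whatever affine subspace you flow in, evaluated at $x_{t_k}$. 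If you mean $\mcT_{x_{t_k}} := x_{t_k}+\{x_{t_k}\}^\perp$ (the notation $\mcT_{x_{t_1}}$ is only defined in the paper for points on $\IS^n$, so you must make this choice explicit), a homogeneity argument lets you relate these quantities to those at the spherical projection $y_k := x_{t_k}/\|x_{t_k}\|$, but it produces an extra factor of $\|x_{t_k}\|$. If instead you keep all phases inside the original chart $\mcT_x$, the derivative is restricted to $T_x\neq T_{y_k}$ and one must control the angle between these tangent spaces. Either way, the bound $\beta_0^{(k)}\le \mup\cdot re^{-t_k}$ that you use verbatim is not established at the restart point; you need a perturbation/scaling estimate plus a bootstrap (the estimate requires the total displacement to be small, and the displacement bound requires the estimate). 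This is precisely the bookkeeping that the paper's induction avoids by always restarting the whole argument at a point of $\IS^n$.

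Two smaller points. First, $\|F^{L_0}(x)\|/\|F^{L_0}\|<r$ does follow componentwise (you use it without comment, but it is true). Second, your $L_0:=\{g:g(x)\le 0\}$ differs from the paper's $L:=\{g:|g(x)|<r\|g\|\}$; small \emph{positive} $g$'s are left out of your initial active set and may trigger activations during the flow, which is fine for termination via Lemma~\ref{lem:overdet}, but it is a different (and slightly more wasteful) choice. None of this is fatal to your strategy, but as written the restart estimates are asserted rather than proved.
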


\begin{proof}
We will abbreviate~$S \eqdef S(F, G)$ 
and~$\kappasa \eqdef \kappasa(F,G)$.
The proof is by induction on the difference $\ell:=n-q+1$  
between the number of variables and the number of equations. 
Before dealing with the basis of the induction, we 
note that the assumption on $r$ implies that~$\kappasa<\infty$. 

If $\ell=0$, 
then $\kappa(F)=\kappasa<\infty$ and, because of 
our hypothesis, $r<\frac{1}{\kappasa}$. We deduce 
from Lemma~\ref{lem:overdet}, with $L=H=\varnothing$ 
that $\Approx(F, G, r) =\varnothing$. The desired 
inclusion is therefore trivially true. 

Now we assume $\ell>0$, i.e., $q\leq n$, and 
consider a point~$x\in\Approx(F, G, r)$. 
It is enough to show that 
\begin{equation}\label{eq:TOSHOW}
  d_{\IS}(x,S) < 3 \kappasa r. 
\end{equation}
To do so, we focus on the set of inequalities
$$
   L := \big\{ g\in G \mid |g(x)|<r\|g\| \big\} .
$$ 
By construction, we have $x\in\Approx(F^L, G\setminus L, r)$, and 
moreover~$g(x)\geq r\|g\| > 0$ for all~$g\in G\setminus L$.
We further note that $|L|\leq n-q$, otherwise there would 
exist $H\subseteq L$
with $|H|=n-q+1$ and, we would use again 
Lemma~\ref{lem:overdet} to deduce that
$\Approx(F, G, r) =\varnothing$, in contradiction with the fact that
$x\in\Approx(F,G,r)$. We next divide by cases.

\paragraph{Case~1: $L \neq \varnothing$.}
As $|F^L|\leq n+1$ we may apply the induction hypothesis 
to the larger set $F^L$ of equations 
and the smaller set $G\setminus L$ of inequalities; 
note that $\kappasa(F^L,G\setminus L)\leq \kappasa(F,G)$ 
so the hypothesis on $r$ is still true for $(F^L,G\setminus L)$. 
The induction hypothesis yields 
\[ 
   \Approx(F^L, G\setminus L, r) \subseteq 
   \mcU_\IS \left( S(F^L, G\setminus L), 
  3\kappasa(F^L,G\setminus L) r \right) 
   \subseteq 
   \mcU_\IS \left( S, 3\kappasa r \right) .
\]
Hence we obtain~\eqref{eq:TOSHOW} 
and are done in this case. 

\paragraph{Case~2: $L=\varnothing$.}
We put $u\eqdef \|F(x)\|/\|F\|$.
Then $u\le r$ since $x\in \Approx(F,G,r)$. 
Moreover, $\kappasa u \le \kappasa r < \frac{1}{13}$ 
by assumption. 
By definition, 
$$
 \kappa(F, x)^2  \ge \frac{1}{\mup(F, x)^{-2} + u^{2}}
 \ge \frac12 \min \big\{\mup(F, x)^{2}, u^{-2} \big\} . 
$$
This minimum equals $\mup(F,x)^2$ since 
$\kappasa u \le \frac{1}{13}$, so we get
\begin{equation*}\label{eq:excl-incl-3}
   \sqrt{2} \, \kappasa \ge \sqrt{2} \, \kappa(F) \ge \sqrt{2} \, \kappa(F,x) 
   \geq \mup(F, x) = \mun( \tilde{F}, x) ,
\end{equation*}
where $\tilde F \eqdef F|_{\mcT_x}$ denotes the restriction of~$F$ 
to the affine space~$\mcT_x$.
From the inequality above, 
Proposition~\ref{prop:gamma-mu}, and $u<r$, 
it follows that 
\begin{equation}\label{eq:alfa-beta}
\begin{split}
    \alpha(\tilde F,x) &\leq \frac12\, D^{\frac32} \mun(\tilde{F},x)^2 u 
        \le D^{\frac32}  \kappasa^2 r , \\ 
    \beta(\tilde F,x) &\leq \mun(\tilde{F},x) u \le \sqrt{2}  \kappasa r.
\end{split}
\end{equation}
From the assumption on~$r$,
we get~$\alpha(\tilde F, x)\leq \frac{1}{13}$, which makes possible 
the application of Theorem~\ref{thm:cont-alpha}. 
We also note that $\beta(\tilde F,x) < \frac{1}{13}$. 
As in~\S\ref{sec:cont-alpha-theory}, we define~$x_t$ in the affine 
space~$\mcT_x$ by the system of differential equations
\begin{equation*}
    \dot x_t = -\Diff\tilde F(x_t)^\dagger \tilde F(x_t), \quad x_0 = x.
\end{equation*}
Note that $x_t\ne 0$ for all $t\ge 0$ as $\|z\|\geq 1$ for 
all $z\in\mcT_x$. 
We define~$y_t \eqdef x_t / \|x_t\| \in \IS^n$. 
By Theorem~\ref{thm:cont-alpha}, there is a limit 
point~$x_\infty \in \mcT_x$, which is a zero of~$\tilde F$, 
and which satisfies $\|x_\infty - x\| < 2\beta(\tilde F, x)$. 
In particular, $y_\infty$ is a zero of~$F$ and 
\[ 
   d_\IS(y_\infty, x) \leq \|x_\infty-x\| \le 2 \beta(\tilde{F},x) \le 
  2 \sqrt{2} \kappasa r < 3 \kappasa r ,
\]
where we used~\eqref{eq:alfa-beta} for the second inequality. 
If~$g(y_\infty) \geq 0$ for all~$g\in G$, then~$y_\infty\in S$ and 
$d_{\IS}(x, S) \leq d_{\IS}(x, y_\infty)$, hence~\eqref{eq:TOSHOW} and we are done. 

So suppose that~$g(y_\infty) < 0$ for some~$g\in G$ 
and let~$s > 0$ be the smallest real  number such that~$g(y_s) = 0$ 
for some~$g\in G$. By construction, the set 
$H \eqdef \left\{ g\in G \st g(y_s) = 0 \right\}$ is nonempty 
and element of~$G\setminus H$ is positive 
at~$y_s$. Also, for every~$f\in F$, 
$$
 |f(y_s)| = \frac{|f(x_s)|}{\|x_s\|^{\deg f}} \le  
  |f(x_s)| = |f(x)| e^{-s}  \le \|f\| r e^{-s} 
$$
where the second equality is due to 
Theorem~\ref{thm:cont-alpha}\ref{item:2}.
Therefore, $y_s \in \Approx(F^H, G\setminus H, r e^{-s})$.

Using again Lemma~\ref{lem:overdet} we deduce 
that $|H|<n-q+1=\ell$. We can therefore apply the 
induction hypothesis to the larger set $F^H$ of equations 
and the smaller set $G\setminus H$ of inequalities; 
note that $re^{-s} < r$ and 
$\kappasa(F^H,G\setminus H) \leq \kappasa$. 
Thus we obtain
$$
  \Approx(F^H, G\setminus H, r e^{-s}) \ \subseteq\  
  \mcU_\IS \left( S(F^H,G\setminus H), 
  3\kappasa(F^H,G\setminus H) r \right)
  \ \subseteq\ \mcU_\IS \left(S, 3\kappasa r \right), 
$$
the latter because $S(F^H,G\setminus H)\subseteq S$ and 
$\kappasa(F^H,G\setminus H) \leq \kappasa$. 
We conclude that 
$$
  d_{\IS}(y_s,S) < 3 \kappasa r e^{-s}.
$$
Also, by Theorem~\ref{thm:cont-alpha}\ref{item:5}, 
$$
   d_{\IS}(y_s,x)\leq \|x_s-x\|\leq 2\beta(\tilde{F},x)(1-e^{-s})
   < 2\sqrt{2} \kappasa r (1-e^{-s}),
$$ 
the last inequality by~\eqref{eq:alfa-beta}. 
We finally deduce that 
\[ 
    d_{\IS}(x, S) \leq d_\IS(x, y_s) + d_\IS(y_s, S) < 
    \big(2\sqrt{2} (1-e^{-s}) + 3 e^{-s} \big) \kappasa r
    <  3 \kappasa r, 
\]
which shows~\eqref{eq:TOSHOW} and finishes the proof. 
\end{proof}

\subsection{The geometry of ill-posedness} 
\label{sec:ill-posed}

In order to analyze the set $\Sigma\subseteq\Hd[q]$ of 
ill-posed inputs, cf.~\eqref{def:Sigma}, 
we first study 
its complex version, defined as 
$$
 \SigmaC := \big\{ F\in \HdC[q] \mid \exists x\in\P^n 
  \; F(x) = 0,\; \mathrm{rank\,}\Diff F(x)_{\mid T_x} < q \big\} .
$$
Here $\P^n$ denotes the complex projective space of dimension~$n$. 
Note that because of Euler's formula~\cite[(16.3)]{Condition} we have 
$$
 \SigmaC := \big\{ F\in \HdC[q] \mid \exists x\in\P^n 
  \; F(x) = 0,\; \mathrm{rank\,}\Diff F(x)< q \big\} .
$$
In the special case $q=n+1$, we have  
$\SigmaC := \{ F\in \HdC[n+1] \mid \exists x\in\P^n \; F(x) = 0\}$.
It is well known that this is 
the zero set of the multivariate resultant, 
which is an irreducible polynomial with integer coefficients 
and degree 
$\sum_{i=1}^{n+1} \prod_{k\ne i} d_k$ 
\cite[\S13.1]{GelfandKapranovZelevinsky_1994}, 
which is at most~$(n+1)D^n$. An extension 
of this result to the case $q\leq n$ appears 
in~\cite[Proposition~5.3]{CuckerKrickShub_2018}.   
We further generalize this result to $q\le n+1$, slightly 
improving the bound in passing.

\begin{proposition}\label{prop:discriminant}
For any~$q \leq n+1$, the variety~$\SigmaC \subseteq \HdC[q]$ is 
a hypersurface defined by an irreducible polynomial with integer 
coefficients of degree at most $n2^n D^n$. 
\end{proposition}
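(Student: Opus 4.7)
The plan is to treat the classical overdetermined case $q = n+1$ separately and then attack $q \leq n$ via an incidence variety, following and sharpening the argument of \cite[Prop.~5.3]{CuckerKrickShub_2018}. For $q = n+1$, Euler's identity $\Diff F(x) \cdot x = (d_i f_i(x))_i$ forces $\rank \Diff F(x) \leq n < q$ at any zero $x \in \P^n$ of $F$, so $\SigmaC$ coincides with the classical resultant hypersurface: irreducible over $\Z$ of degree $\sum_i \prod_{k \neq i} d_k \leq (n+1) D^n \leq n\, 2^n D^n$ (for $n \geq 1$).

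For $q \leq n$, I would introduce the incidence variety
\[
  \widetilde{W} \eqdef \big\{ (F, [x], [v]) \in \HdC[q] \times \P^n \times \P^{q-1} \mid F(x) = 0,\ v^\top \Diff F(x) = 0 \big\},
\]
whose image under the projection $\pi_1 \colon \widetilde{W} \to \HdC[q]$ is exactly $\SigmaC$. Projecting $\widetilde{W}$ instead onto $\P^n \times \P^{q-1}$, the fiber over $([x], [v])$ is an affine subspace of $\HdC[q]$ cut out by the $q$ linear conditions $f_i(x) = 0$ together with the $n+1$ linear conditions $\sum_i v_i \partial_j f_i(x) = 0$, one of which is redundant modulo the first group by Euler's identity. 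Since $\scU(n+1) \times \mathrm{GL}_q(\C)$ acts transitively on $\P^n \times \P^{q-1}$, the fiber codimension is constant. Hence $\widetilde{W}$ is the total space of a vector bundle over an irreducible base, and in particular irreducible, so $\SigmaC = \pi_1(\widetilde{W})$ is irreducible.

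A dimension count gives $\dim \widetilde{W} = \dim \HdC[q] - (q+n) + n + (q-1) = \dim \HdC[q] - 1$. Combined with generic finiteness of $\pi_1$---established by exhibiting a single $F \in \SigmaC$ with a unique admissible pair $([x], [v])$ up to scaling---this shows $\SigmaC$ is an irreducible hypersurface whose defining polynomial has integer coefficients (since $\widetilde{W}$ and $\pi_1$ are defined over $\Z$). To bound the degree, I would intersect $\SigmaC$ with a generic line $\{F_0 + t F_1\} \subset \HdC[q]$ and count preimage triples $(t, [x], [v])$ in $\P^1 \times \P^n \times \P^{q-1}$ via multihomogeneous B\'ezout. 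After suppressing one Jacobian equation via Euler, the system consists of $q$ divisors of multidegree $(1, d_i, 0)$ together with $n$ divisors of multidegree bounded by $(1, D-1, 1)$, so the degree of $\SigmaC$ is at most the coefficient of $s\, w^n\, u^{q-1}$ in $\prod_{i=1}^q (s + d_i w) \cdot (s + (D-1) w + u)^n$. Expanding and bounding each contribution yields the stated $\leq n\, 2^n D^n$.

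The main obstacle will be the degree bookkeeping: keeping the constant at $n\, 2^n$ rather than, say, $(n+1)\, 2^{n+1}$ hinges both on using Euler's identity to discard one Jacobian equation and on carefully tracking the binomial coefficient $\binom{n}{q-1} \leq 2^n$ that arises when distributing the $u$-degrees among the $n$ Jacobian factors. Establishing generic finiteness of $\pi_1$ is a secondary point, which can be dispatched by producing an explicit witness such as an $F$ in which one component has an isolated singularity while the others are generic.
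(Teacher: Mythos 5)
Your proposal follows the same overall strategy as the paper: treat $q=n+1$ via the resultant, and for $q\le n$ use an incidence variety in $\HdC[q]\times\P^n\times\P^{q-1}$, prove it irreducible of the right dimension by fibering over the base, and bound the degree by multihomogeneous B\'ezout after dropping one Jacobian equation via Euler. The degree bookkeeping also matches the paper's (the paper intersects with $N-1$ generic hyperplanes rather than a generic line, but after dividing by $a^{N-1}$ the formal-variable computation is identical).

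There is, however, a genuine gap in your justification that the fiber codimension is constant. You invoke transitivity of $\scU(n+1)\times\mathrm{GL}_q(\C)$ on $\P^n\times\P^{q-1}$, but $\mathrm{GL}_q(\C)$ does \emph{not} act on $\HdC[q]$ when the degrees $d_1,\dotsc,d_q$ are not all equal: a non-diagonal $M\in\mathrm{GL}_q(\C)$ sends $(f_1,\dotsc,f_q)$ to linear combinations of polynomials of mixed degrees, which leave $\HdC[q]$. Only the degree-preserving subgroup (block-diagonal matrices for equal-degree blocks) acts, and that subgroup is not transitive on $\P^{q-1}$ in general. The conclusion you want — that the fiber over any $([x],[v])$ is a linear subspace of codimension exactly $n+q$ — is correct, but must be verified directly. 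The clean way (which the paper implicitly uses) is to note that the evaluation map $F\mapsto\bigl(F(x),\,v^\top\Diff F(x)\bigr)\in\C^q\times\C^{n+1}$ is linear with image contained in the hyperplane cut out by the Euler relation $b\cdot x=v^\top\operatorname{diag}(d_i)\,a$, and to check surjectivity onto that hyperplane after moving $x$ to $e_0$ by a unitary change of coordinates (which you may do, since $\scU(n+1)$ alone acts transitively on $\P^n$ and does act on $\HdC[q]$). With this repair the proof is complete and matches the paper's.
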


\begin{proof}
We abbreviate $\Hc:=\HdC[q]$ and 
first assume $q\le n$. 
Consider the incidence variety 
\begin{equation}\label{defSigmaCtilde}
   \tilde\SigmaC \eqdef \left\{ (F,x,v) \in 
   \Hc\times (\C^{n+1}\setminus \{0\}) \times (\C^{q}\setminus \{0\}) 
   \st F(x) =0 
   \text{ and } v^T \cdot \Diff F(x) = 0 \right\}. 
\end{equation}
The projection 
$\tilde\SigmaC \to (\C^{n+1} \setminus \left\{ 0 \right\}) 
 \times(\C^q \setminus \left\{ 0 \right\}),\, 
 (F,x,v)\mapsto (x,v)$ is surjective.
Moreover, the fibers are
linear subspaces of~$\Hc$ of codimension~$n+q$ 
This implies that $\tilde\SigmaC$ is irreducible~\cite[\S6.3,
Thm.~8]{ShafarevichI} 
and $\dim \tilde\SigmaC = (n+1) +q + \dim\Hc -n-q = \dim\Hc +1$. 
The image of
the projection $\tilde{\SigmaC} \to \Hc,\, (F,x,v) \, \mapsto F$ equals
$\SigmaC$, which is therefore irreducible. Moreover, since the fibers 
of this
projection are generically of dimension~2, 
it follows that $\dim\SigmaC = \dim \tilde{\SigmaC} - 2 = \dim\Hc -1$. 
Hence
$\SigmaC$ is indeed an irreducible hypersurface in $\Hc$. 
That its defining
equation has integer coefficients follows from elimination
theory~\cite[\S2.C]{Mumford} and the fact that $\tilde{\SigmaC}$ 
is defined by polynomials with integer coefficients.
  
For bounding the degree, we consider a variant of $\tilde{\SigmaC}$ 
in a product of projective spaces. 
More specifically, we consider the variety~$S$ 
of all~$(F,x,u)\in \P(\Hc)\times \P^n\times \P^{q-1}$,  
which are solutions of the multihomogeneous equations
\begin{equation}\label{eq:6}
   \left\{
   \begin{aligned}
      & f_i(x) = 0 && \text{for } 1\leq i\leq q ,\\
      & \sum_{i=1}^q u_i \, x_0^{D-d_i} \frac{\partial f_i}{\partial x_j}(x) 
          && \text{for } 1\leq j \leq n .
      \end{aligned}
    \right.
\end{equation}
We note that the projection $(F,x,u)\mapsto F$ maps 
$S\cap \{X_0\ne 0\}$ to $\SigmaC$ and hits all 
$F\in\SigmaC$ except those in a lower dimensional subvariety. 

We take now hyperplanes $H_1,\ldots,H_{N-1} \subseteq\Hc$ 
in general position, 
where $N:=\dim\P(\Hc)$. Let us denote by 
$\tilde{H}_k$ the inverse image of $H_k$ under the projection $(F,x,u)\mapsto F$. 
Then we have 
\begin{equation}\label{eq:deg-ineq}
 \deg\SigmaC = |H_1\cap\ldots\cap H_{N-1} \cap \SigmaC | \ \le\ 
   | \tilde{H}_1 \cap \ldots \cap \tilde{H}_{N-1} \cap S | =: M .
\end{equation}
The number~$M$ of intersections points on the right-hand side can 
be computed with the multiprojective B\'ezout's 
theorem, see e.g.~\cite[\S4.2.1]{ShafarevichI}\cite{MorganSommese_1987}. 
According to this, $M$ equals the coefficient of the monomial 
$a^N b^n c^{q-1}$ in the product ($a,b,c$ are formal variables) 
\begin{equation} \label{eq:ABC}
    a^{N-1}\prod_{i=1}^q   (a + d_i b) \prod_{i=1}^n ( a + (D-1) b + c). 
\end{equation}
For this, note that the equations for $\tilde{H}_k$ have the 
multidegree $(1,0,0)$,  
and the equations in~\eqref{eq:6} have the multidegree $(1,d_i,0)$ 
and $(1,D-1,1)$ with respect to (the coefficients of) $F$, $x$, 
and $u$, respectively. 
The coefficient $M$ can be bounded as 
$$
  M \ \le\ 
  q {n\choose q-1} D^{q-1} D^{n-q+1}  + n {n-1\choose q-1} D^{q} 
  D^{n-q}\ \le\ n 2^n D^n .
$$
Indeed, when expanding~\eqref{eq:ABC}, the left-hand contribution 
arises
from selecting $a$ in exactly one of the $q$ factors in the left product 
and selecting $c$ in  exactly $q-1$ among the $n$ factors of the 
right product.
The right-hand contribution arises from selecting $a$ in exactly one
of the $n$ factors in the right product 
and selecting $c$ in  exactly $q-1$ among the remaining $n-1$ factors of the right product.

In the case $q=n+1$ we consider the incidence variety 
$S:=\{ (F,x) \mid F(x)=0\} \subseteq \P(\Hc)\times \P^n$
and argue similarly. In particular, the multiprojective B\'ezout's theorem 
implies that $\deg\SigmaC$ equals the coefficient of the monomial
$a b^n$ in the product 
$\prod_{i=1}^{n+1}(a + d_i b)$. 
This leads to the well known formula 
$\deg\SigmaC =\sum_i \prod_{k\ne i} d_i$. 
Since this is bounded by $(n+1)D^{n}$, 
the degree bound in this case follows as well. 
\end{proof}

The weaker bound $\deg\SigmaC \le D^{q+n}$, 
which is good enough for our purpose, can be obtained with a 
significantly simpler argument. 
From \eqref{defSigmaCtilde} we obtain with B\'ezout's Inequality
$\deg\tilde{\SigmaC} \le D^q \cdot D^n$ \cite[\S8.2]{BCSh}. 
(Note that on an open subset we only need $n$ equations out 
of $v^T \cdot \Diff F(x) = 0$.) We conclude that 
$\deg\SigmaC \le \deg\tilde{\SigmaC} \le 
D^{q+n}$~\cite[Lemma~8.32]{BCSh}.

\begin{corollary}\label{cor:degree-sigmasa}
The set $\Sigma_* \subseteq \Hd[q\,;\,s]$ of ill-posed homogeneous systems 
is included in the zero set of a nonzero polynomial with integer coefficients 
of degree at most $n2^n (s+1)^{n+1} D^n$. 
The same holds true for the set $\Sigmaaff \subseteq \Pd[q\,;\,s]$ 
of ill-posed affine systems.
\end{corollary}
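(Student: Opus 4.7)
The plan is to realize $\Sigma_*$ as a finite union of preimages of the classical discriminant hypersurfaces under linear projections, then bound the total degree of the product of their defining polynomials. By the very Definition~\ref{def:kappaL} we have
\[
   \Sigma_* \;=\; \bigcup_{L} p_L^{-1}\big(\Sigma\cap \Hd[q+|L|]\big),
\]
where $L$ ranges over the subtuples of $G$ with $q+|L|\le n+1$, and $p_L\colon \Hd[q\,;\,s]\to \Hd[q+|L|]$ is the linear projection $(F,G)\mapsto F^L$. Indeed, $\kappasa(F,G)=\infty$ holds exactly when $\kappa(F^L)=\infty$ for some admissible $L$, which is equivalent to $F^L\in\Sigma$ by Theorem~\ref{thm:kappa-dist}.

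Next I would upgrade to the complex picture and invoke Proposition~\ref{prop:discriminant}: for each admissible $L$, the hypersurface $\Sigma^\C\subseteq \HdC[q+|L|]$ is the vanishing locus of an irreducible integer polynomial $\Delta_L$ of degree at most $n2^n D^n$. Hence $\Sigma\cap \Hd[q+|L|]$ is contained in the real zero set of $\Delta_L$, and $p_L^{-1}(\Sigma\cap\Hd[q+|L|])$ is contained in the zero set of $\Delta_L\circ p_L$, which still has integer coefficients and the same degree since $p_L$ is projection onto a subset of coordinates. Taking the product
\[
    P \;:=\; \prod_L \Delta_L\circ p_L
\]
over all admissible $L$ gives an integer polynomial whose zero set contains $\Sigma_*$. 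Its degree is at most $M\cdot n2^n D^n$, where $M$ is the number of admissible $L$. A direct count gives
\[
   M \;=\; \sum_{k=0}^{n+1-q}\binom{s}{k} \;\le\; \sum_{k=0}^{n+1} s^k \;\le\; (s+1)^{n+1},
\]
the last inequality because $\binom{n+1}{k}\ge 1$ in the expansion of $(s+1)^{n+1}$. Thus $\deg P\le n2^n(s+1)^{n+1}D^n$, and one only needs to check that $P$ is not identically zero (e.g.\ by exhibiting a regular system $(F,G)$, which is immediate).

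For the affine statement I would exploit the scaling invariance of $\Sigma_*$ to replace the nonlinear coefficient $\|\psi\hm\|X_0$ in the definition of $H(\psi)$ by the fixed polynomial $X_0$: since each component can be rescaled without leaving $\Sigma_*$, we have $\psi\in\Sigmaaff$ iff $(\psi\hm,X_0)\in\Sigma_*$ (with the trivial case $\psi=0$ handled separately). The map $\psi\mapsto (\psi\hm,X_0)$ from $\Pd[q\,;\,s]$ to $\mcH_{(\bd,1)}[q\,;\,s+1]$ is affine linear, so pulling back the polynomial $P$ constructed above (now in the space with $s+1$ inequalities and the same maximum degree $D$) yields an integer polynomial on $\Pd[q\,;\,s]$ of the same degree, giving a bound of the same shape. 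The principal conceptual step is the decomposition of $\Sigma_*$ and the scaling trick for $\Sigmaaff$; the remaining work---composing polynomials with linear maps, counting subsets, and checking nonvanishing---is bookkeeping.
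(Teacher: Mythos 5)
Your treatment of the homogeneous case -- decomposing $\Sigma_*$ as $\bigcup_L p_L^{-1}(\Sigma_L)$ over admissible subtuples $L$, bounding each $\Sigma_L$ via Proposition~\ref{prop:discriminant}, counting the admissible $L$ by $(s+1)^{n+1}$, and multiplying the discriminants -- is essentially identical to the paper's argument, and it is correct.

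Where you diverge is the affine case. The paper's route is to take the nonzero homogeneous polynomial $P$ vanishing on $\Sigma_*\subseteq\mcH_{(\bd,1)}[q\,;\,s+1]$, write $P(a,t)=\sum_i P_i(a)t^i$ (with $t$ the coefficient of $X_0$ in the last inequality), and observe that scale-invariance of $\Sigma_*$ in that coordinate forces every $P_i$ to vanish on $\Sigmaaff=H^{-1}(\Sigma_*)$; since $P\neq 0$, some $P_i$ is a nonzero polynomial of degree $\leq\deg P$ and one is done. Your route instead uses the same scale-invariance to replace $\|\psi\hm\|X_0$ by the fixed $X_0$ and then pulls $P$ back along the affine map $\psi\mapsto(\psi\hm,X_0)$. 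The idea is sound, and the edge case $\psi=0$ works out, but the phrase ``checking nonvanishing \dots is bookkeeping'' glosses over a genuine step: the pullback of $P=\prod_L\Delta_L\circ p_L$ to the affine hyperplane $\{(a,X_0)\}$ could in principle vanish identically. The factors with $L$ not containing the frozen $(s+1)$-st inequality are harmless, since for those $\Delta_L\circ p_L$ does not see the frozen coordinate. But for $L$ containing it you need to know that the discriminant of a generic system augmented with the fixed linear form $X_0$ is not identically zero; this amounts to a Bertini-type observation (generic intersection with the hyperplane $X_0=0$ stays smooth) that the paper's homogeneous-parts device sidesteps entirely. So your approach is a valid alternative, but the nonvanishing check is a real argument rather than bookkeeping, and the paper's route is tidier precisely because it avoids having to make it.

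One small point you were right to flag implicitly by saying ``a bound of the same shape'': both your argument and the paper's pass through the extended system with $s+1$ inequalities, so the subset count is $\leq(s+2)^{n+1-q}$ rather than $(s+1)^{n+1-q}$, and neither proof literally establishes the exact constant $(s+1)^{n+1}$ claimed for $\Sigmaaff$. This discrepancy is harmless for the way the corollary is used later, but since the paper states it as an equality of bounds, it is worth noticing.
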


\begin{proof}
For a subset $L = \left\{ i_1,\dotsc,i_{\ell} \right\}$ of 
$\left\{ 1,\dotsc,s \right\}$, let $p_L$ be the projection
\begin{equation*}
    p_L\colon \Hd[q\,;\,s] \to \Hd[q+\ell],\, 
   (f_1,\dotsc,f_q,g_1,\dotsc,g_s)\in \Hd[q\,;\,s] \mapsto 
    (f_1,\dotsc,f_q,g_{i_1},\dotsc,g_{i_{\ell}}) 
\end{equation*}
By definition of~$\Sigma_*$ and~$\kappasa$, $\Sigma_*$ 
is the union of
the sets~$p_L^{-1}(\Sigma_L)$ for all~$L$ with~$q+|L| \leq n+1$,
where~$\Sigma_L$ is the appropriate set of ill-posed data 
in~$\Hd[q+\ell]$.  The number of such subsets $L$ is at most
$(s+1)^{n+1-q}$ and we conclude with 
the fact that, for each of them, 
$\Sigma_L\subseteq\Sigma_L^{\C}\cap\Hd[q+\ell]$ 
and the latter is the set of real zeros of a polynomial 
of degree $n2^nD^n$ by 
Proposition~\ref{prop:discriminant}.

To settle the affine case, note that the scaled homogeneization map~\eqref{eq:scH} 
has the structure 
$H\colon\R^N\to\R^N\times\R,\, a\mapsto (a,\|a\|)$
and that $\Sigma_*$ is scale invariant. By definition, $\Sigmaaff = H^{-1}(\Sigma_*)$. 
Suppose that the polynomial~$P$ vanishes on $\Sigma_*$
and let $P(a,t) =\sum_i P_i(a) t^i$ 
be its decomposition into homogeneous parts.  
Then each $P_i$ vanishes on $H^{-1}(\Sigma_*)$.
\end{proof}

\section{Algorithms}
\label{sec:an-algor-comp}

\subsection{The covering algorithm}
\label{sec:algorithm}

The main stepping stone towards computing the homology groups 
of a spherical semialgebraic set $S$ is the computation of a finite set 
$\mcX$ and a real $\e>0$ such that $S$ is
homotopically equivalent to $U_\e(\mcX)$. 
We will do so using Theorems~\ref{thm:SNW} and~\ref{thm:approx} 
in conjunction.

For~$0 < r < 1$ we define~$\mcG_r$ as the image in $\IS^n$
under the map $y\mapsto\frac{y}{\|y\|}$ of the set of  
points~$x\in\Z^{n+1}$
with~$\|x\|_\infty = \lceil \frac{\sqrt{n}}{r} \rceil$.
We easily check that 
\begin{equation}\label{eq:cover}
  \IS^n \subseteq \bigcup_{x\in \mcG_r} B_\IS(x, r) ,
\end{equation}
where $B_\IS(x, r) := \{y\in \IS^n \mid d_{\IS}(x,y) < r \}$. 
Moreover~$|\mcG_r| = (n/r)^{\mcO(n)}$.

\begin{proposition}\label{thm:costH}
On input~$F$ and~$G$, Algorithm \textsc{Covering} outputs a finite 
set~$\mcX$ and an $\e > 0$ such that~$\mcU(\mcX, \e)$ is 
homotopically equivalent to~$S(F, G)$.
Moreover, the computation performs 
$\big((s+n) D \kappasa\big)^{\mcO(n)}$ arithmetic operations, 
where $s=|G|$ and 
$\kappasa = \kappasa(F,G)$, 
and the number $|\mcX|$ of points in~$\mcX$ 
is~$(n D \kappasa)^{\mcO(n)}$.
\end{proposition}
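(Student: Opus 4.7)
The natural strategy is to build $\mcX$ by keeping those points of a sufficiently fine spherical grid that satisfy a relaxed version of the system $(F,G)$, and then pick $\e$ just above the Hausdorff distance $d_H(S,\mcX)$ and below $\tfrac12\tau(S)$, so that Theorem~\ref{thm:SNW} applies. Concretely, I would first compute (or assume computed by a subroutine) the condition number $\kappasa=\kappasa(F,G)$, and set
\[
   r' \eqdef \frac{1}{c\, D^{3/2}\kappasa^2},\qquad r \eqdef \frac{r'}{\sqrt{D}}, \qquad \e \eqdef \frac{1}{14\, D^{3/2}\kappasa},
\]
for a suitable absolute constant $c$ (large enough to meet all the inequalities below, so certainly $c\geq 13$ to invoke Theorem~\ref{thm:approx}). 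The algorithm then enumerates the grid $\mcG_r$, tests each $x\in\mcG_r$ for membership in $\Approx(F,G,r')$ by evaluating the $q+s$ polynomials and comparing with their norms, and outputs the resulting set $\mcX$ together with $\e$ (handling the trivial cases $\mcX=\varnothing$ or $S=\varnothing$ separately).

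For correctness, I must verify the two hypotheses of Theorem~\ref{thm:SNW}, namely $3\, d_H(S,\mcX)<\e<\tfrac12\tau(S)$. The upper bound is immediate from Theorem~\ref{thm:tau-kappa}, which gives $\tau(S)\geq \tfrac{1}{7 D^{3/2}\kappasa}$. For the Hausdorff bound, I split into the two directions. On one hand, every $x\in\mcX$ satisfies $x\in\Approx(F,G,r')$, and since $r'< (13 D^{3/2}\kappasa^2)^{-1}$, Theorem~\ref{thm:approx} yields $d_\IS(x,S)<3\kappasa r'$. On the other hand, for any $s\in S$ there exists $x\in\mcG_r$ with $d_\IS(s,x)<r=r'/\sqrt{D}$, and Proposition~\ref{prop:approx-easy} then guarantees $x\in\Approx(F,G,r')$, so $x\in\mcX$. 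Using that Euclidean distance in $\R^{n+1}$ is bounded above by the spherical distance, I obtain $d_H(S,\mcX)\leq \max(r,\, 3\kappasa r')$, and choosing $c$ large enough (a few hundred suffices) ensures $3\, d_H(S,\mcX)<\e$. Consequently $\mcU(\mcX,\e)$ is a deformation retract of $S(F,G)$, and in particular is homotopically equivalent to it.

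For the complexity bound, the grid $\mcG_r$ has cardinality $(n/r)^{\Oh(n)}=(n D^2 \kappasa^2)^{\Oh(n)}=(nD\kappasa)^{\Oh(n)}$, which also bounds $|\mcX|$. For each grid point $x$ the algorithm evaluates the $q+s$ polynomials at $x$ and compares the absolute values to the (precomputed) norms $\|f_i\|$, $\|g_j\|$; the evaluation costs $(s+n)\binom{n+D}{n}\cdot n^{\Oh(1)}=((s+n)D)^{\Oh(n)}$ arithmetic operations per point (the norm computation is a one-time cost of the same order). Multiplying gives the stated $((s+n)D\kappasa)^{\Oh(n)}$, to which one adds the cost of computing $\kappasa$ itself, which is handled by an independent subroutine whose analysis fits in the same bound.

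The main obstacles I anticipate are of a bookkeeping nature. First, one must reconcile the fact that Theorem~\ref{thm:SNW} is stated with Euclidean distance in the ambient space $\R^{n+1}$, while Theorem~\ref{thm:approx} and the grid inclusion \eqref{eq:cover} are stated with the spherical distance; the inequality $\|x-y\|\leq d_\IS(x,y)\leq \tfrac{\pi}{2}\|x-y\|$ on $\IS^n$ is enough to bridge this at the cost of a constant factor absorbed in $c$. Second, one must choose the constants $c$ (equivalently the gap between $r$, $3\kappasa r'$ and $\tau(S)/6$) consistently so that the sandwich $3\, d_H(S,\mcX)<\e<\tfrac12 \tau(S)$ genuinely holds; this is just a numerical verification once the parameters are fixed. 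Everything else is a direct combination of the three quoted theorems.
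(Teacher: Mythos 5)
The overall geometric plan (grid, $\Approx$-relaxation, Theorem~\ref{thm:SNW} combined with Theorems~\ref{thm:tau-kappa} and~\ref{thm:approx}, and Proposition~\ref{prop:approx-easy}) matches the paper, and your verification of the sandwich $3\,d_H(S,\mcX)<\e<\tfrac12\tau(S)$ for a fixed constant $c$ is essentially correct (take a strict inequality for $\e$, but that is cosmetic). The genuine gap is the very first step: you assume that $\kappasa(F,G)$ has already been computed, or can be computed by ``an independent subroutine whose analysis fits in the same bound,'' and you set the grid resolution $r$ and the radius $\e$ in terms of it. But $\kappasa(F,G)=\max_{L}\max_{x\in\IS^n}\kappa(F^L,x)$ is a maximum over a continuum, and there is no obvious finite procedure producing it; designing such a procedure is precisely the algorithmic crux of this proposition, and your proof silently pushes it into an unspecified subroutine.

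The paper resolves this without ever computing $\kappasa$. Algorithm \textsc{Covering} interleaves grid refinement with condition estimation: at scale $r$ it computes $k_*\eqdef\max_{L,\,x\in\mcG_r}\kappa(F^L,x)$, a maximum over the \emph{finite} grid only, and halves $r$ until the stopping test $71\,D^{5/2}k_*^2 r<1$ is satisfied; it then outputs $\mcX=\mcG_r\cap\Approx(F,G,D^{1/2}r)$ and $\e=5Dk_*r$. The argument that $k_*$ is a faithful proxy for $\kappasa$ (namely $\kappasa\le(1+\tfrac1{100})k_*$ once the loop exits) rests on Proposition~\ref{prop:kappa-lip}, the $D$-Lipschitz continuity of $x\mapsto\kappa(F^L,x)^{-1}$, which does not appear anywhere in your proposal. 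The same Lipschitz bound is what guarantees termination of the loop after $\mcO(\log(D\kappasa))$ iterations, which in turn is what lets the cost be bounded in terms of $\kappasa$ without knowing it. So the missing ideas are (i) replacing exact $\kappasa$ by grid-maxima $k_*$ controlled via Lipschitz continuity of $\kappa^{-1}$, and (ii) a self-refining loop whose stopping rule simultaneously certifies that the grid is fine enough and that $k_*$ approximates $\kappasa$ well enough. Without these, your complexity accounting for ``the cost of computing $\kappasa$'' is unjustified, and the correctness argument has no way to pick a valid $r$ in the first place.

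Two smaller remarks. Your per-grid-point cost accounting only counts polynomial evaluations, but the paper's loop must also compute $\kappa(F^L,x)$ for all $L\subseteq G$ with $|L|\le n+1-q$ at every grid point, giving an extra factor $M\le(s+1)^{n+1-q}$ and a factor $\mcO(N+n^3)$ for the condition-number estimate at each $(L,x)$; both are absorbed by $\big((s+n)D\kappasa\big)^{\Oh(n)}$, but they need to appear. Finally, your worry about reconciling Euclidean and spherical distances is already handled by the one-sided bound $\|x-y\|\le d_\IS(x,y)$, which is all that is ever needed; no constant factor is lost.
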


\begin{algo}[t]
  \centering
  \begin{description}
    \item[Input.] A homogeneous semialgebraic 
      system~$(F,G) \in \Hd[q\,;\,s]$ with $q\le n$.
    \item[Precondition.]
        $\kappasa(F, G)$ is finite.
      \item[Output.]
        A finite subset~$\mcX$ of~$\IS^n$ and an~$\epsilon > 0$.
      \item[Postcondition.]
        $\mcU(\mcX, \epsilon)$ is homotopically
        equivalent to~$S(F, G)$.
  \end{description}
  \medskip
  \begin{algorithmic}
    \Function{Covering}{$F$, $G$}
    \State $r\gets 1$
    \Repeat
    \State $r \gets r/2$.
    \State $k_* \gets \max \left\{  \kappa(F^L,x) \st x \in \mcG_r 
    \text{ and } L \subseteq G \text{ such that } |L| \le n+1 -q \right\}$
    \Until{$71\, D^{\frac 52} k_*^2 r < 1$}    
    \State\Return the set 
    $\mcX:=\mcG_r \cap \Approx(F, G, D^{\frac12}r)$ and the real
    number $\e:= 5D k_* r$
    \EndFunction
\end{algorithmic} 
\caption[]{\scshape Covering} 
\label{algo:covering}
\end{algo}

\begin{proof} 
Let~$\kappasa \eqdef \kappasa(F, G)$, $S\eqdef S(F,G)$ and 
let~$r$ and~$k_*$ be the values of the corresponding variables 
after the \emph{repeat} loop terminates in Algorithm \textsc{Covering}. 
By design,
\begin{equation}\label{eq:4}
    71\, D^{\frac 52} k_*^2 r < 1.
\end{equation}
  
We will first show that
\begin{equation}\label{eq:3}
    \kappasa \leq (1+\tfrac1{100}) k_*.
\end{equation}
Let $L\subseteq G$ and $y \in \IS^n$ be such that 
$\kappasa = \kappa(F^L) =
\kappa(F^L, y)$. Because of~\eqref{eq:cover} there is 
some~$x\in\mcG_{r}$ such
that~$d_\IS(x, y) < r$, and $\kappa(F^L, x) \le k_*$ by the 
definition of $k_*$.
Since the map $x\mapsto 1/\kappa(F^L,x)$ is $D$-Lipschitz 
continuous (Proposition~\ref{prop:kappa-lip}), we have
\[ 
    \kappasa = \kappa(F^L, y) \leq 
    \frac{\kappa(F^L,x)}{1-D \kappa(F^L, x) r} \leq
    \frac{k_*}{1-Dk_*r}. 
\] 
Inequality~\eqref{eq:4} shows that 
$$
   Dk_*r < \frac{1}{71\, D^{\frac32} k_*} \leq \frac{1}{101} 
$$
the last as $D\geq 2$ and $k_*\ge1$, and Inequality~\eqref{eq:3} 
follows.
  
Let~$\mcX \eqdef \mcG_r \cap \Approx(F, G, D^{\frac12} r)$ 
and~$\epsilon\eqdef 5D k_* r$, that is, the finite set 
and the real number output by the algorithm. We will now prove 
that~$\mcU(\mcX, \epsilon)$ is homotopically equivalent 
to~$S$. By Theorem~\ref{thm:SNW}, 
it is enough to prove the inequalities
\begin{equation}\label{eq:5}
    3d_H(\mcX, S) < \epsilon < \frac{1}{2}\tau(S).
\end{equation}
The second inequality follows from 
Inequalities~\eqref{eq:4}, \eqref{eq:3} and 
Theorem~\ref{thm:tau-kappa}:
\[ 
   \e= 5D k_* r < \frac{5}{71}\,\frac{1}{D^{\frac32} k_*} \leq 
    \frac{505}{7100}\,\frac{1}{D^{\frac32} \kappasa} \ \le\ 
       \frac{3535}{7100} \tau(S) \le \frac{1}{2} \tau(S) .
\]
Concerning the inequality $ 3d_H(\mcX, S) < \epsilon$, let~$x\in S$. Because of~\eqref{eq:cover},
there is some~$y\in \mcG_{r}$ with $d_\IS(x,y) < r$. 
Hence~$y$ lies in $\Approx(F, G, D^{\frac12} r)$, 
by Proposition~\ref{prop:approx-easy}. 
Thus $y\in \mcX$ 
and~$d(x, \mcX)<d_\IS(x,y)< r<\frac13\e$.

Next, let~$x \in \mcX$. Then, $x\in\Approx(F, G, D^{\frac12} r)$ 
and 
\[ 
	13\, D^{\frac32} \kappa_*^2 (D^{\frac12} r) <
    71\,D^{\frac52} \kappa_*^2 r 
    <1 
\]
the last by Inequality~\eqref{eq:4}. Hence, 
Theorem~\ref{thm:approx} applies and shows that
\[ 
  d(x, S) \leq d_{\IS}(x, S) \leq 3 \kappa_* D^{\frac12} r 
  \leq (3 + \tfrac3{100}) k_* D^{\frac12} r < \frac{1}{3} \epsilon ,
\]
where we used $D\ge 2$ for the last inequality. 
Thus we have shown that 
$d_H(\mcX, S) < \frac13\epsilon$.
This concludes the proof of \eqref{eq:5} and of the homotopy equivalence.
\smallskip
  
Lastly, we deal with the complexity analysis. We can 
approximate~$\kappa(F^L, x)$ within a factor of~$2$
in~$\mcO(N+n^3)$ operations \cite[\S2.5]{Lairez_2017} and this 
is enough for
our needs. For simplicity, we will do as if we could compute 
$\kappa$ exactly.

The \emph{repeat} loop performs~$\mcO(\log(D\kappasa))$ iterations. 
Each iteration can be done in~$\mcO(|\mcG_r| M (N+n^3))$ operations,
where 
$M = \sum_{i=0}^{n+1-q} {s \choose i} \leq (s+1)^{n+1-q}$.
Moreover, 
$|\mcX|\le|\mcG_r|=(nD\kappasa)^{\mcO(n)}$ and 
$N+n^3=(nD)^{\Oh(n)}$.
Therefore, the total number of operations is bounded by 
$\big((s+n)D\kappasa\big)^{\Oh(n)}$.
\end{proof}

\subsection{Homology of a union of balls}
\label{sec:homology-union-balls}

Once in the possession of a pair $(\mcX,\e)$ such that 
$S$ is a deformation retract of $\mcU(\mcX,\e)$, the computation 
of the homology groups of $S$ is a known process. 
One computes the nerve~$\mcN$ of the covering 
$\{B(x,\e)\mid x\in\mcX\}$ (this is a
simplicial complex whose elements are the subsets~$N$ 
of~$\mcX$ such
that~$\cap_{x\in N} B(x,\e)$ is not empty) and from it, its 
homology groups 
$H_k(\mcN)$. Since the intersections of any collection 
of balls is convex, the Nerve Theorem \cite[e.g.][Thm.~10.7]{Bjo:95} 
ensures that 
$$
    H_k(\mcN)\simeq H_k(\mcU(\mcX,\e))\simeq H_k(S)
$$
the last because $S$ is a deformation retract of $\mcU(\mcX,\e)$.

The process is described in detail in 
of~\cite[\S4]{CuckerKrickShub_2018} where the proof for the following 
result can be found (see 
also~\cite{Edelsbrunner_1995,EdelsbrunnerShah_1992} for 
improved algorithms for computing the nerve of a covering).

\begin{proposition}\label{prop:comp-homology}
Given a finite set~$\mcX \subseteq \R^{n+1}$ and a positive real
number~$\epsilon$, one can compute the homology
of~$\cup_{x\in\mcX} B(x, \epsilon)$ with~$|\mcX|^{\mcO(n)}$ 
operations. \eproof
\end{proposition}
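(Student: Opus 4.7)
The plan is to build the nerve of the covering $\{B(x,\epsilon)\mid x\in\mcX\}$ and compute its homology by standard simplicial means, leaning on two classical results: the Nerve Theorem and Helly's theorem. Since each $B(x,\epsilon)\subseteq \R^{n+1}$ is convex, any intersection $\bigcap_{x\in N} B(x,\epsilon)$ is convex, hence either empty or contractible. The Nerve Theorem then yields a homotopy equivalence between $\bigcup_{x\in\mcX} B(x,\epsilon)$ and the nerve $\mcN$, reducing the task to computing the homology of a finite simplicial complex.

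The second observation is that by Helly's theorem in $\R^{n+1}$, a finite family of convex sets has nonempty common intersection as soon as every $n+2$ of its members do. Consequently, $\mcN$ is determined by its $(n+1)$-skeleton, and it suffices to enumerate subsets $N\subseteq\mcX$ with $|N|\le n+2$, of which there are $|\mcX|^{\Oh(n)}$. For each such $N$, deciding whether $\bigcap_{x\in N} B(x,\epsilon)$ is nonempty is equivalent to checking whether the smallest enclosing ball of the centers in $N$ has radius at most $\epsilon$, which is a convex quadratic feasibility problem solvable with a number of arithmetic operations polynomial in $|N|$ and $n$ in the BSS-with-square-roots model (for instance via an explicit minimax formulation or Welzl's randomized scheme).

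Finally, once $\mcN$ is assembled, its homology groups (Betti numbers and torsion coefficients) are obtained from the boundary matrices via Smith normal form. Since $\mcN$ has at most $|\mcX|^{n+2}$ simplices and dimension bounded by $n+1$, the linear-algebra phase also fits within $|\mcX|^{\Oh(n)}$ operations. The main technical obstacle is not the asymptotic count but the bookkeeping required to iterate through the exponentially many candidate simplices without redundant work and to carry out the integral Smith normal form numerically; these ingredients are treated in detail in~\cite{CuckerKrickShub_2018} and transfer to our setting without modification, since the construction depends only on the finite point set $\mcX$ and the radius $\epsilon$ and not on the polynomial system from which they were produced.
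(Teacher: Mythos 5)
Your overall plan—pass to the nerve~$\mcN$ of the ball covering, invoke the Nerve Theorem, then do linear algebra (Smith normal form) on boundary matrices—is indeed the route the paper takes; the paper itself gives no independent proof and simply points to~\cite[\S4]{CuckerKrickShub_2018} for the details. But your justification for restricting to $(n+2)$-element subsets has a genuine gap. Helly's theorem tells you that the \emph{face structure} of $\mcN$ is determined by its $(n+1)$-skeleton (a set $N$ with $|N|\geq n+3$ is a simplex iff all its $(n+2)$-subsets are), but it does \emph{not} say that $\mcN$ has dimension $\leq n+1$, nor that $\mcN^{(n+1)}$ is homotopy equivalent to~$\mcN$. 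In general the full nerve can have dimension as large as $|\mcX|-1$, and truncating a simplicial complex at some skeleton changes its top homology (it can create spurious cycles). Already in $\R^1$ with three pairwise-meeting intervals, $\mcN$ is a $2$-simplex while its $1$-skeleton is a circle with $H_1 = \Z$. So your claim that ``$\mcN$ has at most $|\mcX|^{n+2}$ simplices and dimension bounded by $n+1$'' is false, and Smith normal form applied to that truncated complex will, in general, return a wrong $H_{n+1}$.

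The correct argument bypasses Helly entirely: since $\mcU := \bigcup_{x\in\mcX} B(x,\epsilon)$ is an open subset of~$\R^{n+1}$, its singular homology vanishes in degrees $\geq n+1$; by the Nerve Theorem, $H_k(\mcN) \cong H_k(\mcU)$, hence $H_k(\mcN) = 0$ for $k\geq n+1$. Meanwhile, for any CW complex the inclusion of the $(n+1)$-skeleton induces an isomorphism on $H_k$ for $k\leq n$. Combining, one computes $H_k\bigl(\mcN^{(n+1)}\bigr)$ for $k\leq n$ and reports those as the homology of $\mcU$, together with zero in all higher degrees. This is exactly the computation you describe (enumerate $|N|\leq n+2$, test intersections via smallest enclosing ball, Smith normal form), and the complexity accounting is fine once the correct complex and the correct range of degrees are identified. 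If you prefer a purely combinatorial route that never appeals to the ambient geometry, you can replace the open-subset argument with Wegner's theorem that the nerve of convex sets in~$\R^{n+1}$ is $(n+1)$-collapsible, hence $(n+1)$-Leray, which also yields $H_k(\mcN)=0$ for $k\geq n+1$; but plain Helly's theorem is not enough.
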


\subsection{Homology of affine semialgebraic sets}
\label{sec:affine}

A pair $(F,G)\in \Pd[q\,;\,s]$ defines a basic semialgebraic set
$W(F,G)\subseteq\R^n$ as in~\eqref{eq:sa}
which is diffeomorphic to the subset of $\IS^n$ 
defined by~$F\hm = 0$, $G\hm \succ 0$ and~$X_0 > 0$. As
in~\S\ref{sec:cond-numb-affine}, 
let~$H(F,G)\in\mcH_{(\bfd,1)}[q\,;\,s+1]$ denote this 
system of homogeneous polynomials 
(with $X_0 > 0$ replaced by $\|(F,G)\| X_0 > 0$, 
which does not change the solution set). 
Proposition~\ref{pro:collar} tells us that, 
unless this system is ill-posed,
we may replace~$X_0 > 0$ with~$X_0 \geq 0$ and 
any $g>0$ with $g\geq 0$ without changing the homology 
of the solution set. In other words, if
$\kappaaff(F,G) < \infty$, then the spherical set~$S(H(F,G))$ is
homotopically equivalent to~$W(F,G)$. 

Based on the tools introduced above, we may compute the homology 
of~$W(F,G)$, assuming that~$\kappaaff(F, G) < \infty$, by 
computing the nerve of a suitable
covering of~$S(H(F,G))$ obtained with 
Algorithm {\scshape covering}.  
This leads to Algorithm {\scshape homology} below whose
analysis will prove Theorem~\ref{thm:Main}.

\begin{algo}[t]
  \centering
  \begin{description}
    \item[Input.] A semialgebraic system~$(F, G) \in \Pd[q + s]$ with
       $q\le n$.
      \item[Output.]
       The homology groups of the set $\left\{ f_1 = \dotsb = f_q = 0 
       \text{ and } g_1 \succ 0, \dotsc, g_q \succ 0 \right\} 
       \subseteq \R^n$.
  \end{description}
\medskip
  \begin{algorithmic}
    \Function{Homology}{$F$, $G$}
    \State  $(\mcX,\e)\gets\text{\scshape Covering}(H(F,G))$
    \State $\mcN \gets$ the nerve of $\mcU(\mcX,\e)$
    \State \Return the homology groups of $\mcN$
    \EndFunction
\end{algorithmic}
\caption[]{{\scshape Homology}}
\label{algo:homology}
\end{algo}

\begin{proof}[Proof of Theorem \ref{thm:Main} \ref{item:9}]
By Proposition~\ref{thm:costH}, the cost of computing the 
covering~$\mcX$ is bounded 
by~$\big((s+n)D \kappaaff)^{\Oh(n)}$, where~$\kappaaff \eqdef
\kappaaff(F, G)$, and $|\mcX| = (nD\kappaaff)^{\Oh(n)}$. By
Proposition~\ref{sec:homology-union-balls}, the cost of computing 
the nerve $\mcN$ and its homology groups is $|\mcX|^{\Oh(n)}$. 
Hence, the total cost of the algorithm is bounded by 
$\big((s+n)D \kappaaff)^{\Oh(n^2)}$.
Together with Proposition~\ref{prop:kappaaff-dist}, this leads to the
conclusion.
\end{proof}

The probabilistic analysis is based on the following result by
Bürgisser and Cucker~\cite[Theorem~21.1]{Condition} 
and follows a line of similar results that rely on the 
same ideas. We will be consequently brief. We 
rephrased the statement in terms of the isotropic Gaussian 
distribution instead of the uniform distribution on the sphere. 
The scale invariance of the statement makes both
formulations equivalent.

\begin{theorem}\label{th:BCL}
Let~$\Sigma\subseteq \R^{p+1}$ be contained in a real algebraic 
hypersurface, 
given as the zero set of a homogeneous polynomial of degree~$d$
and let~$a \in \R^{p+1}$ be a centered isotropic Gaussian random 
variable. Then for all $t \geq (2d+1)p$,
\begin{equation}\tag*{\qed}
   \Prob\left( \frac{\|a\|}{d(a,\Sigma)} \geq t \right) \leq \frac{11 d p}{t}.
\end{equation}
\end{theorem}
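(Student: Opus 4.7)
The plan is to reduce the Gaussian tail bound to a spherical volume bound on a tube around an algebraic hypersurface, and then control that volume via integral geometry plus B\'ezout.

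First, I would exploit the homogeneity of $\Sigma$: the function $a\mapsto \|a\|/d(a,\Sigma)$ is scale invariant, so using the polar decomposition of the isotropic Gaussian (the radial part is independent of the unit vector $a/\|a\|$, which is uniform on $\IS^p$), the probability in question equals the normalized spherical volume of
\[
   T \eqdef \big\{ x\in\IS^p \st d(x,\Sigma) \leq 1/t \big\}.
\]
For $x\in \IS^p$, a short chord-versus-arc estimate shows that $d(x,\Sigma) \leq \sin d_\IS(x,\Sigma\cap\IS^p)$, hence $T$ is contained in the spherical $\epsilon$-tube $\Tub_\epsilon(V)$ around $V \eqdef \Sigma\cap\IS^p$ with $\epsilon = \arcsin(1/t) \leq \pi/(2t)$ (using $t\ge p \ge 1$).

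Second, I would bound $\mathrm{vol}(\Tub_\epsilon(V))/\mathrm{vol}(\IS^p)$ by an integral-geometric argument. Use the Poincar\'e/Crofton formula on the sphere: the normalized volume of an $\epsilon$-tube around a $(p-1)$-dimensional subvariety $V$ can be controlled by the expected number of intersection points of $V$ with a uniformly random great circle (a $2$-plane section of $\IS^p$), times the length $2\epsilon$ of the circular arc around each such intersection, times a universal $p$-dependent constant (here the factor $p$ enters, accounting for the codimension-$1$ tube in $\IS^p$). The key observation is that any great circle $C$ is the unit circle in a $2$-plane $L\subseteq \R^{p+1}$; the restriction of the defining polynomial to $L$ is a homogeneous polynomial of degree $d$ in two variables, whose real projective zero set has at most $d$ points, giving at most $2d$ intersection points with $C$.

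Third, I would combine the two estimates. The tube bound reads
\[
  \frac{\mathrm{vol}(\Tub_\epsilon(V))}{\mathrm{vol}(\IS^p)} \ \leq\ c \, d\, p\, \epsilon
\]
for an absolute constant $c$, and substituting $\epsilon \leq \pi/(2t)$ yields a bound of the form $c' dp/t$. Choosing constants carefully and absorbing the lower-order terms into the hypothesis $t\geq (2d+1)p$ produces the $11dp/t$ of the statement. The main obstacle is the integral-geometric step: rigorously setting up the Crofton-type formula on $\IS^p$ with an explicit constant, and handling the regime where $\epsilon$ is not negligible (which is exactly why the hypothesis $t \geq (2d+1)p$ appears, keeping $\epsilon$ small enough that the linearized tube bound remains valid). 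The B\'ezout input and the scale-invariance reduction are routine; the delicate point is converting the intersection count into an explicit volume estimate with a clean absolute constant.
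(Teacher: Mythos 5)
The paper does not prove this statement: it is quoted (modulo the switch to the isotropic Gaussian) from B\"urgisser and Cucker~\cite[Theorem~21.1]{Condition}, with the only accompanying remark being that scale invariance makes the spherical and Gaussian formulations equivalent. There is therefore no internal proof in the paper to compare yours against; what can be said is that your overall plan --- reduce via scale invariance to a tube-volume estimate on $\IS^p$, then bound the tube volume via integral geometry together with B\'ezout's degree bound on great-circle intersections --- matches the route taken in the cited reference.

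That said, one intermediate step in your sketch is not correct as written. You claim the normalized tube volume is controlled by the expected number of great-circle intersections with $V$ times $2\epsilon$, which amounts to asserting $\mathrm{length}(C\cap\Tub_\epsilon(V))\le 2\epsilon\,|C\cap V|$ for a generic great circle~$C$. This fails pointwise: a circle can graze the tube without crossing $V$ at all. (Take $V$ the equator of $\IS^2$ and $C$ a great circle tilted by an angle less than $\epsilon$; then $C\subseteq\Tub_\epsilon(V)$, so $C\cap\Tub_\epsilon(V)$ has length $2\pi$, while $|C\cap V|=2$.) The correct chain has to be routed through two separate steps: first $\mathrm{vol}(\Tub_\epsilon(V))$ is expanded via Weyl's tube formula as $2\epsilon\,\mathrm{vol}_{p-1}(V)$ plus higher-order curvature integrals, and only then is $\mathrm{vol}_{p-1}(V)$ bounded by $d\cdot\mathrm{vol}(\IS^{p-1})$ via the Poincar\'e formula and B\'ezout on great circles. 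Controlling those higher-order terms is where the hypothesis $t\ge(2d+1)p$ and the constant $11$ come from. This is indeed the ``delicate point'' you flag at the end, but it is a genuine analytic argument about curvature integrals (all of which must be bounded through integral geometry and the degree), not merely a matter of pinning down an absolute constant in a single Crofton step.
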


\begin{proof}[Proof of Theorem~\ref{thm:Main} \ref{item:7} and 
\ref{item:8}]
Let~$\psi = (F,G)\in \Pd[q\,;\,s]$ be a centered isotropic Gaussian 
random variable. By Theorem \ref{thm:Main} \ref{item:9}, the number 
of operations performed by algorithm \textsc{Homology} is
\[ 
   \cost(\psi) = \left( (s+n) D \frac{\|\psi\|}{d(\psi, \Sigmaaff)}
    \right)^{C n^2}, 
\]
for some $C > 0$. 

By Theorem~\ref{th:BCL} and 
Corollary~\ref{cor:degree-sigmasa},
\begin{equation*}
  \Prob \left(\cost(\psi) \geq \left(\big((s+n) D t\big)^{C n^2}\right)\right) \leq 
  \frac{11 n 2^n (s+1)^{n+1} D^n N}{t} = \frac{ \left( (s+n)D \right)^{\Oh(n)}}{t},
\end{equation*}
where~$N \eqdef \dim\Pd[q\,;\,s] \leq (s+n) (D+1)^n$. We obtain
Theorem~\ref{thm:Main}\ref{item:7} with~$t = \left( (s+n)D \right)^{c n}$
and Theorem~\ref{thm:Main}\ref{item:8} with 
$t = 2^{cN}$, for some~$c$ large enough. 
For the latter, we use that $\big((s+n) D\big)^{\Oh(n)} = 2^{\Oh(N)}$ 
and that $n^2 = \Oh(N)$. 
\end{proof}

\begin{acks}
We are grateful to Josu\'e Tonelli-Cueto 
and Mohab Safey El~Din for helpful discussions, to Teresa Krick
for her careful reading, and to Théo Lacombe and the referees for relevant remarks.

This work has been supported by the Einstein Foundation, Berlin,
 by the \grantsponsor{BU137122}{DFG}{} research grant
 no.~\grantnum{BU137122}{BU~1371/2-2},
 and by the \grantsponsor{HK}{Research Grants Council of the Hong Kong
SAR}, project no.~\grantnum{HK}{CityU-11202017}.
\end{acks}

\bibliographystyle{ACM-Reference-Format}
\bibliography{maths,book}

\end{document}